\def\conf{0}
\def\soda{0}
\def\withcolors{0}
\def\lipics{0}
\newcommand{\ignore}[1]{}
\newtheorem{theorem}{Theorem}
\newtheorem{lemma}[theorem]{Lemma}
\newtheorem{claim}[theorem]{Claim}
\newtheorem{definition}[theorem]{Definition}
\theoremstyle{definition}
\def\FullBox{\hbox{\vrule width 8pt height 8pt depth 0pt}}
\newcommand{\QED}{\;\;\;\FullBox}
\renewenvironment{proof}{\noindent{\bf Proof:~~}}{\hfill\QED \par\bigskip}
\newcommand{\BPF}{\begin{Proof}} \newcommand {\EPF}{\end{Proof}}
\newenvironment{proofof}[1]{\smallskip\noindent{\bf Proof of #1:}}%
        {\hspace*{\fill}\QED \par\bigskip}
\newcommand{\EX}{{\rm Exp}}
\newcommand{\var}{{\rm Var}}
\renewcommand{\th}{^\textrm{th}}
\def\eqdef{~\triangleq~}
 \def\eps{\varepsilon}
\def\bar{\overline}
\newcommand{\cA}{{\cal A}}
\newcommand{\cB}{{\cal B}}
\newcommand{\cG}{{\cal G}}
\newcommand{\Sec}[1]{\hyperref[sec:#1]{Section~\ref*{sec:#1}}} 
\newcommand{\Eq}[1]{\hyperref[eq:#1]{Equation~(\ref*{eq:#1})}} 
\newcommand{\Fig}[1]{\hyperref[fig:#1]{Fig.\,\ref*{fig:#1}}} 
\newcommand{\Tab}[1]{\hyperref[tab:#1]{Tab.\,\ref*{tab:#1}}} 
\newcommand{\Thm}[1]{\hyperref[thm:#1]{Theorem\,\ref*{thm:#1}}} 
\newcommand{\Lem}[1]{\hyperref[lem:#1]{Lemma\,\ref*{lem:#1}}} 
\newcommand{\Prop}[1]{\hyperref[prop:#1]{Prop.~\ref*{prop:#1}}} 
\newcommand{\Cor}[1]{\hyperref[cor:#1]{Corollary~\ref*{cor:#1}}} 
\newcommand{\Def}[1]{\hyperref[def:#1]{Definition~\ref*{def:#1}}} 
\newcommand{\Alg}[1]{\hyperref[alg:#1]{Alg.~\ref*{alg:#1}}} 
\newcommand{\Ex}[1]{\hyperref[ex:#1]{Ex.~\ref*{ex:#1}}} 
\newcommand{\Clm}[1]{\hyperref[clm:#1]{Claim~\ref*{clm:#1}}} 
\newcommand{\Assum}[1]{\hyperref[assump:#1]{Assumption~\ref*{assump:#1}}} 
\newcommand{\Step}[1]{\hyperref[step:#1]{Step~\ref*{step:#1}}} 
\newcommand{\Cond}[1]{\hyperref[cond:#1]{Condition~\ref*{cond:#1}}} 
\newcommand{\Item}[1]{\hyperref[item:#1]{Item~\eqref*{cond:#1}}} 
\def\poly{{\rm poly}}
\newcommand{\polylog}{\poly(\log n, 1/\eps)}
\newcommand{\wt}{{\rm wt}}
\newcommand{\mH}{H}
\newcommand{\mL}{L}
\newcommand{\mG}{\mathcal{G}}
\renewcommand{\Pr}{\mathrm{Pr}}
\def\nofigures{0}
\newcommand{\ttcom}[1]{{\color{red}{#1}}}
\newcommand{\strikeout}[1]{{\color{gray}{#1}}}
\newcommand{\tout}[1]{}
\newcommand{\mnote}[1]{{\small\color{magenta}(*)}\marginpar{\tiny\bf
\begin{minipage}[t]{0.5in}
\raggedright#1
\end{minipage}}}
\newcommand{\Sesh}[1]{{\color{red} Sesh says: #1}}
	\newcommand{\ttcom}[1]{#1}
	\newcommand{\tout}[1]{}
	\newcommand{\mnote}[1]{}
	\newcommand{\strikeout}[1]{}
	\newcommand{\Sesh}[1]{#1}
\newcommand{\old}[1]{}
\newcommand{\whU}{{U_{i}}}
\newcommand{\wmom}{{\widetilde{M}_s}}
\newcommand{\walpha}{{\widetilde{\alpha}}}
\renewcommand{\th}{^\textrm{th}}
\newcommand{\mom}{M_s}
\newcommand{\momts}{M_{2s-1}}
\newcommand{\momone}{M_{1}}
\newcommand{\avgmom}{\mu}
\newcommand{\gm}{\widehat{\mom}}
\newcommand{\ostar}{O^*}
\newcommand{\mainalg}{\textbf{Moment-estimator}}
\def\eqdef{~\triangleq~}
\def\eps{\varepsilon}
\def\poly{{\rm poly}}
\author{Talya Eden\thanks{School of EE, Tel Aviv
		University. {\tt talyaa01@gmail.com} \newline
		{This research was partially supported
			by a grant from the Blavatnik fund. The author is grateful to the Azrieli Foundation for the award of an Azrieli Fellowship.}}
	\and
	Dana Ron\thanks{School of EE, Tel Aviv University. {\tt danaron@tau.ac.il}\newline
	{This research was partially supported
		by the Israel Science Foundation grant No. 671/13 and by a grant from the Blavatnik fund.}}
	\and
	C. Seshadhri\thanks{Department of Computer Science, University of
		California, Santa Cruz. {\tt sesh@ucsc.edu}}
}
\title{Sublinear Time Estimation of Degree Distribution Moments:\\
	The Degeneracy Connection \\
	\ifnum\conf=1
	{\small (Extended Abstract)}
	\else
	{\small (Full Version)}
	\fi
}
\date{}
\def\setHeader{0}
\begin{document}

	\maketitle
	
	\begin{abstract}
	
	We revisit the classic problem of estimating the degree distribution moments of an undirected graph. Consider an undirected graph $G=(V,E)$ with $n$ (non-isolated) vertices, and define (for $s > 0$) $\mu_s = \frac{1}{n}\cdot\sum_{v \in V} d^s_v$. Our aim is to estimate $\mu_s$ within a multiplicative error of $(1+\varepsilon)$ (for a given approximation parameter $\varepsilon>0$) in sublinear time. We consider the sparse graph model that allows access to: uniform random vertices, queries for the degree of any vertex, and queries for a neighbor of any vertex. 
For the case of $s=1$ (the average degree), $\widetilde{O}(\sqrt{n})$ queries suffice for any constant
$\varepsilon$ (Feige, SICOMP 06 and Goldreich-Ron, RSA 08).  
Gonen-Ron-Shavitt (SIDMA 11) extended this result to all integral $s > 0$, by designing an algorithms that performs $\widetilde{O}(n^{1-1/(s+1)})$ queries. (Strictly speaking, their algorithm approximates the number of star-subgraphs of a given size, but a slight modification gives an algorithm for moments.)
	
	We design a new, significantly simpler algorithm for this problem. In the worst-case, it exactly matches the bounds of Gonen-Ron-Shavitt, and has a much simpler proof. More importantly, the running time of this algorithm is connected to the \emph{degeneracy} of $G$. This
is (essentially) the maximum density of an induced subgraph.
   For the family of graphs with degeneracy at most $\alpha$, it has a query complexity of $\widetilde{O}\left(\frac{n^{1-1/s}}{\mu^{1/s}_s} \Big(\alpha^{1/s} + \min\{\alpha,\mu^{1/s}_s\}\Big)\right) = \widetilde{O}(n^{1-1/s}\alpha/\mu^{1/s}_s)$. Thus, for the class of bounded degeneracy graphs (which includes all minor closed families and preferential attachment graphs), we can estimate the average degree in $\widetilde{O}(1)$ queries, and can estimate the variance of the degree distribution in $\widetilde{O}(\sqrt{n})$ queries. This is a major improvement over the previous worst-case bounds.
Our key insight is in designing an estimator for $\mu_s$
 that has low variance when $G$ does not have large dense subgraphs.
 \end{abstract}

\section{Introduction} \label{sec:intro}

Estimating the mean and moments of a sequence of $n$ integers $d_1, d_2, \ldots, d_n$ is a classic
problem in statistics that requires little introduction. In the absence of any knowledge
of the moments of the sequence, it is not possible to prove anything non-trivial.
But suppose these integers formed the degree sequence of a graph.
Formally, let $G = (V,E)$ be an undirected 
graph
{over $n$ vertices, 
and let $d_v$ denote the degree of vertex $v\in V$, where we assume that $d_v\geq 1$
for every $v$}.\footnote{The assumption on there being no isolated vertices is made here only
for the sake of simplicity of the presentation, as it ensures a basic lower bound on the moments.}
Feige proved that $\ostar(\sqrt{n})$ uniform random vertex degrees (in expectation)
suffice to provide a
$(2+\eps)$-approximation
to the average degree~\cite{feige2006sums}.
(We use $\ostar(\cdot)$ to suppress $\poly(\log n, 1/\eps)$ factors.)
The variance can be as large as $n$ for graphs of constant average degree (simply consider a star),
but the constraints of a degree distribution allow for non-trivial approximations.
Classic theorems of Erd\H{o}s-Gallai and Havel-Hakimi characterize such sequences~\cite{Ha55,ErGa60,Ha62}.

Again, the star graph shows that the $(2+\eps)$-approximation cannot be beaten in sublinear time through
pure vertex sampling. 
Suppose we could also access random neighbors of a given vertex. In this setting, Goldreich and Ron showed it is possible to obtain a $(1+\eps)$-approximation to the average degree
in $\ostar(\sqrt{n})$ expected time~\cite{GR08}.

In a substantial 
(and complex) generalization,
 Gonen, Ron, and Shavitt (henceforth, GRS) gave a sublinear-time algorithm
that estimates the higher moments of the degree distribution~\cite{GRS11}.
Technically, GRS gave an algorithm for approximating the number of stars in a graph, but a simple modification yields an algorithm for moment{s} estimation.
For precision, let us formally define this problem.
The \emph{degree distribution} is the distribution over the degree of a uniform random vertex. The {\em $s$-th moment} of the degree distribution is
$\avgmom_s \eqdef \frac{1}{n}\cdot \sum_{v\in V} d^s_v$.

\ifnum\conf=1
\medskip
\else
\bigskip
\fi
\noindent{\bf The Degree Distribution Moment Estimation (DDME) Problem}.~
{\em
Let $G=(V,E)$ be a graph over $n$ vertices, where $n$ is known.
Access to $G$ is provided through the following queries.
We can
(i) get the {id (label)} of a uniform random vertex,
(ii) query the degree $d_v$ of any vertex $v$, (iii) query a uniform random neighbor
of any vertex $v$.
Given $\eps > 0$ and $s \geq 1$, output a $(1+\eps)$-multiplicative approximation to $\avgmom_s$
with probability\footnote{The constant $2/3$ is a matter of convenience. 
It can be increased to at least $1-\delta$ by taking the median value of $\log(1/\delta)$
independent invocations.}  $> 2/3$.
}

\ifnum\conf=1
\medskip
\else
\bigskip
\fi

\sloppy
The DDME problem has important connections to network science, which is the study of
properties of real-world graphs. There have been numerous results on the significance
of heavy-tailed/power-law degree distributions in such graphs,
since the seminal results of Barab\'{a}si-Albert~\cite{BarabasiAlbert99,BrKu+00,FFF99}.
The degree distribution and its moments are commonly used
to characterize and model graphs appearing in varied applications~\cite{BiFaKo01,PeFlLa+02,ClShNe09,SaCaWiZa10,BiCh+11}.
On the theoretical side, recent results provide faster algorithms for graphs where
the degree distribution has some specified form~\cite{BeFoNo14,BrCy+16}.
Practical algorithms for specific cases of DDME have been studied by Dasgupta et al
and Chierichetti et al.~\cite{DaKu14,ChDa+16}. (These results requires bounds on the mixing time
of the random walk on $G$.)

\subsection{Results}

Let $m$ denote the number of edges in the graph (where $m$ is not provided to the algorithm). For the sake of simplicity, we restrict  the discussion in the introduction to
case when $\avgmom_s \leq n^{s-1}$. As observed by GRS,
the complexity of the DDME problem is 
{smaller} when $\avgmom_s$ is significantly larger.
GRS designed an {(expected)} $\ostar\left(n^{1-1/(s+1)}/\avgmom^{1/(s+1)}_s + n^{1-1/s}\right)$-query algorithm for DDME and proved this expression was optimal up to $\poly(\log n,1/\eps)$ dependencies.
{(Here $\ostar(\cdot)$ also suppresses additional factors that depend only on $s$).}
 Note that for a graph without {isolated vertices}, $\avgmom_s \geq 1$ for every $s>0$, so this yields a worst-case $\ostar(n^{1-1/(s+1)})$ bound.
 The $s=1$ case is estimating the average degree,
so this recovers the $\ostar(\sqrt{n})$ bounds of Goldreich-Ron.
We mention a recent result by
Aliakbarpour et al.~\cite{ABGPRY16} for DDME, in a stronger model that assumes additional access to uniform random edges.
They get
a \old{significantly} better bound of
$\ostar(m/(n\avgmom_s)^{{1/s}})$ { in this stronger model, {for $s> 1$
(and $\avgmom_s \leq n^{s-1}$)}}.
\Sesh{Note that the main challenge of DDME 
is in measuring the contribution
of high-degree vertices, which becomes substantially easier
when random edges are provided. In the DDME problem without
such samples, it is quite non-trivial to even detect high degree
vertices.} 

All the bounds given above are known to be optimal, up to $\poly(\log n,1/\eps)$ dependencies,
and at first blush, this problem appears to be solved.
We unearth a connection between
DDME and the \emph{degeneracy} of $G$. 
\old{The \emph{arboricity}
of $G$ is the {minimum number of  forests into which its edges can be partitioned.}}
\Sesh{The degeneracy of $G$ is (up to a factor $2$) the maximum
density over all subgraphs of $G$.}
We design an algorithm that has a nuanced query complexity, depending on the degeneracy of $G$.
Our result subsumes all existing results, and provides substantial improvements in many interesting cases.
{Furthermore, our algorithm and its analysis are significantly simpler and more concise than in
the GRS result.}

We begin with a convenient corollary of our main theorem. A
tighter,  {more precise bound appears as Theorem~\ref{thm:correctness}.}

\begin{theorem} \label{thm:bounded} Consider the family of graphs with degeneracy at most $\alpha$.
The DDME problem can be solved on this family using
\ifnum\conf=0
\begin{eqnarray*}
\ostar\left(\frac{n^{1-1/s}}{\avgmom^{1/s}_s} \Big(\alpha^{1/s} + \min\{\alpha,\avgmom^{1/s}_s\}\Big)\right)\;
\end{eqnarray*}
\else
$\ostar\!\left(\frac{n^{1-1/s}}{\avgmom^{1/s}_s} \Big(\alpha^{1/s} + \min\{\alpha,\avgmom^{1/s}_s\}\Big)\right)$
\fi
queries {in expectation}.
{The running time is linear in the number of queries.}
\end{theorem}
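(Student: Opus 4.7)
The plan is to design an unbiased estimator $X$ for $\mu_s$, bound its variance using the degeneracy of $G$, and apply Chebyshev's inequality to obtain the claimed sample complexity; the per-sample query cost will then account for the $\min\{\alpha,\mu_s^{1/s}\}$ factor. I would first consider the naive estimator that samples a uniform vertex $v$ and outputs $d_v^s$: it is unbiased for $\mu_s$, but its variance scales with $\mu_{2s}$ and is therefore uncontrolled when the graph has a few high-degree vertices. As in Feige and Goldreich--Ron, the remedy is to split vertices into \emph{heavy} ones (with $d_v > T$ for some threshold $T$) and \emph{light} ones (with $d_v \le T$), and handle each class separately: on light vertices the naive estimator has variance at most $T^{s}\cdot n\mu_s$, while heavy-vertex contributions are estimated via a two-step process (sample a vertex $v$, then a uniform neighbor $u$), so that a heavy vertex is hit with probability proportional to its degree rather than with probability $1/n$.

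The heart of the argument, and the main obstacle, is bounding the variance of the heavy-vertex estimator in terms of the degeneracy $\alpha$. I would exploit the degree-based edge orientation: direct each edge $\{u,v\}$ toward the endpoint of larger degree, breaking ties by vertex id. This gives the simple decomposition
\[
n\mu_s \;=\; \sum_{\{u,v\}\in E}\bigl(d_u^{s-1}+d_v^{s-1}\bigr),
\]
which in turn suggests an edge-based estimator whose relevant second-moment quantity is $\sum_{\{u,v\}\in E}\max(d_u,d_v)^{2(s-1)}$. A Chiba--Nishizeki-style argument then uses the degeneracy bound to control this sum by roughly $\alpha\cdot n\mu_s^{2-2/s}$, up to $\log n$ factors; this is precisely the step where the absence of dense subgraphs in $G$ is invoked. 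A subtlety is that the degeneracy ordering itself is not known to the algorithm, so the argument must proceed through the degree-based ordering, for which only average (not pointwise) out-degree guarantees are available, and the proof therefore has to control tails through a dyadic decomposition by degree.

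Finally, I would combine the two pieces: choose the threshold $T$ to balance the variance contributions from light and heavy vertices, and multiply the resulting number of samples by the per-sample query cost. For each sampled vertex, computing the estimator involves at most $\min\{\alpha,\mu_s^{1/s}\}$ neighbor-and-degree queries in expectation---the first factor coming from the bounded-degeneracy orientation, the second from directly truncating at degree $\mu_s^{1/s}$---which yields the additive $\min\{\alpha,\mu_s^{1/s}\}$ contribution inside the bound. Summing the variance-based sample count and the cost per sample recovers the total expected query complexity $\widetilde{O}\!\left(\frac{n^{1-1/s}}{\mu_s^{1/s}}\bigl(\alpha^{1/s}+\min\{\alpha,\mu_s^{1/s}\}\bigr)\right)$.
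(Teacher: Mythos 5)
There is a genuine gap, and it is structural rather than cosmetic. First, your accounting does not produce the stated bound. The theorem's complexity is a \emph{sum} of two budgets: $r\approx \frac{n^{1-1/s}\alpha^{1/s}}{\avgmom_s^{1/s}}$ uniform vertex samples, each costing $O(1)$ queries, plus $q\approx \frac{n^{1-1/s}}{\avgmom_s^{1/s}}\min\{\alpha,\avgmom_s^{1/s}\}$ edge samples, each also costing $O(1)$ queries. Your plan --- ``multiply the number of samples by the per-sample query cost'' with a per-sample cost of $\min\{\alpha,\avgmom_s^{1/s}\}$ --- would give the product $\frac{n^{1-1/s}\alpha^{1/s}}{\avgmom_s^{1/s}}\cdot\min\{\alpha,\avgmom_s^{1/s}\}$, which overshoots the theorem. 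Moreover the per-sample cost claim is unjustified: the algorithm cannot locally compute a degeneracy orientation, and in the degree-based orientation out-degrees are \emph{not} pointwise bounded by $\alpha$ (a tree of depth two already has a root of arbitrarily large out-degree while $\alpha=1$), a fact you yourself acknowledge but then rely on. Similarly, ``sample a vertex $v$, then a uniform neighbor $u$'' hits $u$ with probability $\frac1n\sum_{v\in\Gamma(u)}1/d_v$, which is not proportional to $d_u$; proportional-to-degree sampling is exactly what this query model lacks. The correct mechanism (used in the paper) is to first draw a pool $R$ of $r$ uniform vertices, query their degrees, and then sample edges incident to $R$ by picking $v\in R$ with probability $d_v/d_R$ and a uniform neighbor of $v$; each such edge sample needs only one neighbor query and one degree query, which is what decouples the two budgets and yields a sum rather than a product.

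Second, the place where degeneracy must do work is misidentified. The edge-level second moment you propose to control, $\sum_{\{u,v\}\in E}\max(d_u,d_v)^{2(s-1)}\le \momts\le \mom^{2-1/s}$, holds \emph{unconditionally}; degeneracy enters the edge-sampling budget only through the trivial bound $m\le n\alpha$, combined with $\momts\le\min\{\mom^{2-1/s},n^{s-1}\mom\}$. The genuinely hard step is bounding the \emph{vertex-level} quantity $\sum_v \wt(v)^2$, where $\wt(v)=\sum_{u\in\Gamma^+(v)}(d_v^{s-1}+d_u^{s-1})$, since this controls how large $R$ must be for $\wt(R)$ to concentrate around $\frac{r}{n}\mom$. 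The paper proves $\sum_v\wt(v)^2=O\!\left(2^s\,\alpha^{1/s}\,\mom^{2-1/s}\log^2 n\right)$ --- note the exponent $\alpha^{1/s}$, which is what produces the $\alpha^{1/s}$ term in the theorem --- via a double dyadic decomposition (degree classes $U_i$, and classes $V_{i,j}$ by out-degree into $U_i$) together with the density bound $|E(V_{i,j},U_i)|\le\alpha(|V_{i,j}|+|U_i|)$ implied by degeneracy. Your sketch gestures at a ``Chiba--Nishizeki-style'' dyadic argument but targets a different quantity with a first-power $\alpha$ bound, so the key lemma needed for the $\alpha^{1/s}$ dependence is missing, and without it the heavy/light split alone (as in Feige, Goldreich--Ron, GRS) only recovers the degeneracy-independent bounds. (A complete proof also needs the geometric search over guesses of $\mom$ to set $r$ and $q$, but that is routine compared to the above.)
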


Consider the case of \emph{bounded degeneracy} graphs, where $\alpha = O(1)$.
This is a rich class of graphs. \emph{Every} minor-closed family of graphs has bounded degeneracy, {as do} graphs generated by the Barab\'{a}si-Albert preferential attachment process~\cite{BarabasiAlbert99}.
\Sesh{There is a rich theory of \emph{bounded expansion graphs},
which spans logic, graph minor theory, and fixed-parameter
tractability~\cite{NeOs-book}. All these graph classes
have bounded degeneracy.}
For every such class of graphs, we get a $(1+\eps)$-estimate of $\avgmom_s$
in $\ostar(n^{1-1/s}/\avgmom^{1/s}_s)$ time.
\Sesh{We stress that bounded degeneracy does not imply any bounds
on the maximum degree or the moments. The star graph has degeneracy $1$, 
but has extremely large moments due to the central vertex.}

Consider any bounded degeneracy graph without isolated vertices.
We can accurately estimate the average degree $(s=1)$ in $\poly(\log n)$ queries,
and estimate the variance of the degree distribution $(s=2)$ in $\sqrt{n}\cdot\poly(\log n)$ queries.
Contrast this
with the (worst-case optimal) $\sqrt{n}$ bounds of Feige and Goldreich-Ron for average degree,
and the $\ostar(n^{2/3})$ bound of GRS for variance estimation.
For general $s$, our bound is a significant improvement over the
$\ostar(n^{1-1/(s+1)}/\avgmom^{1/(s+1)}_s)$
bound of GRS.

The algorithm attaining Theorem~\ref{thm:bounded} requires an upper bound on the degeneracy of the graph.
When an degeneracy bound is not given, the algorithm recovers
the bounds of GRS, with an improvement on the extra $\poly(\log n)/\eps$ factors.
More details are in Theorem~\ref{thm:correctness}.
We note that the degeneracy-dependent bound in Theorem~\ref{thm:bounded} cannot be attained by an algorithm that is only given $n$ as a parameter.
In particular, if an algorithm is only provided with $n$ and must work on all 
graphs with $n$ vertices, then
it must perform $\Omega(\sqrt{n})$ queries in order to approximate the average degree
even for graphs of constant degeneracy (and constant average degree).
Details are given in Subsection 7.1 in the full version of the paper.



%
%

The bound of Theorem~\ref{thm:bounded} may appear artificial, but we
prove that it is optimal when $\avgmom_s \leq n^{s-1}$.
(For the general case,
we also have optimal upper and lower bounds.)
This construction is an extension of the lower bound proof of GRS.

\begin{theorem} \label{thm:lb-intro} Consider the family of graphs with degeneracy
$\alpha$ and where $\avgmom_s \leq n^{s-1}$.
Any algorithm for the DDME problem on this family requires
$\Omega\Big( \frac{n^{1-1/s}}{\avgmom^{1/s}_s}\cdot\Big(\alpha^{1/s} + \min\{\alpha,\avgmom^{1/s}_s\}\Big)\Big)$ queries.
\end{theorem}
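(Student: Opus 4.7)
My plan is to apply Yao's minimax principle: exhibit two distributions $\mathcal{D}_1, \mathcal{D}_2$ on $n$-vertex graphs with degeneracy at most $\alpha$ and $\mu_s \leq n^{s-1}$, whose expected $s$-th moments differ by more than a $(1+\eps)$ factor, and show that any deterministic algorithm making fewer than $Q = \frac{n^{1-1/s}}{\mu_s^{1/s}}\bigl(\alpha^{1/s} + \min\{\alpha,\mu_s^{1/s}\}\bigr)$ queries induces transcripts whose total variation between $\mathcal{D}_1$ and $\mathcal{D}_2$ is $o(1)$. Since $Q$ is a sum of two terms, I would build separate hard instances matching each summand and take the stronger of the two resulting lower bounds.

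For the term $n^{1-1/s}\min\{\alpha,\mu_s^{1/s}\}/\mu_s^{1/s}$, I would adapt the hub construction of GRS. Over a low-moment, low-degeneracy baseline (a matching or union of short cycles, augmented by a single fixed high-degree vertex if $\mu_s$ is large), I plant $k$ vertex-disjoint hidden stars whose centers have degree $D = \min\{\alpha,(\mu_s n)^{1/s}\}$ at uniformly random positions, and choose $k$ so that the planted stars perturb $\mu_s$ by $\Theta(\eps\mu_s)$. The hit-probability of a random vertex query or of a random neighbor query issued from a baseline vertex is $O(kD/n)$, and a union bound over $Q$ queries yields the desired indistinguishability.

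For the novel term $n^{1-1/s}\alpha^{1/s}/\mu_s^{1/s}$, I plant $k$ vertex-disjoint copies of a hybrid gadget consisting of an $(\alpha{+}1)$-clique -- which saturates the degeneracy budget and contributes $\Theta(\alpha^{s+1})$ moment on $\Theta(\alpha)$ vertices -- optionally attached to a ``hub'' of degree $D$. Varying $D$ interpolates between a pure clique (small $D$) and a pure star (large $D$); optimizing $(k,D)$ subject to degeneracy at most $\alpha$, the required $(1+\eps)$-perturbation of $\mu_s$, and $\mu_s \leq n^{s-1}$, yields a total hit-probability of $O(1/Q)$ against the first term. In both constructions, the indistinguishability is proved by the standard coupling: conditioned on the random placement of the planted objects, the two transcripts coincide until the algorithm queries a vertex inside a planted object or traces into one through a neighbor query from an adjacent baseline vertex, and a union bound over the $Q$ queries bounds the probability of this event by $o(1)$.

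The main obstacle I expect is the hybrid gadget for the first term: it must simultaneously saturate the degeneracy bound $\alpha$, realize the precise moment-per-vertex ratio that $Q$ dictates, and be boundary-indistinguishable from the baseline so that neighbor queries from base vertices cannot immediately shortcut into a planted object. A secondary subtlety is the regime $\mu_s$ near $n^{s-1}$, where most moment mass must already sit in the baseline and the budget for planted signal becomes tight, forcing a careful balance between a small number of baseline spine vertices (to realize $\mu_s$) and the planted gadgets (to realize the $(1+\eps)$-perturbation) without violating either the degeneracy or the moment cap.
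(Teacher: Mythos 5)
Your overall framework (two families, a hitting/coupling argument, separate constructions for the two summands, adapting GRS and planting cliques to control degeneracy) is the same kind of argument the paper uses, but both of your concrete constructions are quantitatively incapable of reaching the claimed bounds; the gaps are in the gadgets, not in the coupling. For the degeneracy term $n^{1-1/s}\alpha^{1/s}/\avgmom_s^{1/s}=n\alpha^{1/s}/\mom^{1/s}$ (writing $\mom=n\avgmom_s$): a hitting bound of the form $\Omega(n/t)$ requires a planted structure that packs moment $\Theta(\mom)$ into only $t=\Theta\big((\mom/\alpha)^{1/s}\big)$ distinguishable vertices. If the $t$ touched vertices span all edges incident to them and induce degeneracy at most $\alpha$, then $\sum_{v}d_v\le 2\alpha t$ and $d_v\le t$, so $\sum_v d_v^s\le 2\alpha t^s$; the target $t$ is exactly the extremal case, attained only when $\Theta(\alpha)$ of the touched vertices each have degree $\Theta(t)$, i.e.\ they share a common neighborhood. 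Your gadget family (vertex-disjoint $(\alpha{+}1)$-cliques, optionally with a single attached hub of degree $D$) has moment-per-touched-vertex at most $\max\{\alpha^{s},D^{s-1}\}$, so optimizing $(k,D)$ within it yields at best $\Omega\big(\max\{n\alpha^{s}/\mom,\;n/\mom^{1/s}\}\big)$, which is short of $n\alpha^{1/s}/\mom^{1/s}$ by a polynomial factor whenever $\alpha^{s+1}\ll\mom$. What is needed (and what the paper uses) is a clique on $\Theta(\alpha)$ vertices joined by a complete bipartite graph to an independent set of size $\Theta\big((\mom/\alpha)^{1/s}\big)$: the $\alpha$ clique vertices simultaneously reach degree $\Theta\big((\mom/\alpha)^{1/s}\big)$ while the whole structure touches only $(\mom/\alpha)^{1/s}$ vertices and keeps degeneracy $\Theta(\alpha)$.

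For the term $n^{1-1/s}\min\{\alpha,\avgmom_s^{1/s}\}/\avgmom_s^{1/s}$ your parameters are essentially swapped relative to what is needed. Since $\alpha\le\mom^{1/(s+1)}\le\mom^{1/s}$, your cap $D=\min\{\alpha,(\avgmom_s n)^{1/s}\}$ forces $D=\alpha$, and with $kD^s=\Theta(\eps\mom)$ the touched set has size $kD=\Theta(\eps\mom/\alpha^{s-1})$, giving only $\Omega(n\alpha^{s-1}/\mom)$ (and it is not even feasible in some regimes, e.g.\ $s=2$, $\alpha=n^{0.1}$, $\avgmom_s=n^{0.2}$, where $kD>n$ while the target is $\sqrt n$). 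Capping the hub degree by $\alpha$ is unnecessary — a constant number of hubs adds $O(1)$ to the degeneracy — and the $\min\{\alpha,\avgmom_s^{1/s}\}$ factor does not come from the planted degree at all. In the paper's (GRS-derived) construction the hidden vertices have degree $\Theta(\mom^{1/s})$ and there are $O(1)$ of them, while the \emph{baseline} is $d$-regular with $d=\min\{\alpha,\avgmom_s^{1/s}\}$ (plus, when $\alpha\ge\avgmom_s^{1/s}$, a clique of size $\alpha$ planted identically in both families to realize the degeneracy). The baseline then carries $\Theta(\mom)$ of the moment and $\Theta(nd)$ edges, the $\Theta(\mom^{1/s})$ special edges are rerouted inside the baseline in $\mG_1$ so that all degrees coincide, and the per-query detection probability is governed by the edge ratio $\mom^{1/s}/(nd)$, yielding $\Omega(nd/\mom^{1/s})$. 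With your constant-degree matching/cycle baseline the detection probability is governed by vertices rather than hidden edges, and the best you can get is $\Omega(n/\mom^{1/s})$, losing exactly the factor $\min\{\alpha,\avgmom_s^{1/s}\}$ that this term is about.
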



\vspace{-3ex}
\subsection{From degeneracy to moment estimation} \label{subsec:arb-intro}

We begin with a closer look at the lower bound examples of Feige, Goldreich-Ron,
and GRS. The core idea is quite simple: DDME is hard when the overall graph is sparse,
but there are small dense subgraphs. Consider the case of a clique of size $100\sqrt{n}$
connected to a tree of size $n$. The small clique dominates the average degree, but
any sublinear algorithm with access only to  random vertices pays $\Omega(\sqrt{n})$
for a non-trivial approximation. GRS use more complex constructions to get an $\Omega(n^{1-1/(s+1)})$
lower bound for general $s$. This also involves embedding small dense subgraphs
that dominate the moments.

Can we prove a converse to these lower bound constructions? In other words, prove
that the \emph{non-existence} of dense subgraphs must imply that DDME is easier?
A convenient parameter for this non-existence is the \emph{degeneracy}.

%
But the degeneracy is a global parameter, and it is not clear how a sublinear algorithm
{can} exploit it. Furthermore, DDME algorithms are typically very local; they sample
random vertices, query {the degrees of these vertices} and maybe {also query}
 {the} degrees of some
{of {their}} neighbors. We need a local
property that sublinear algorithms can exploit, but can also be linked to {the} degeneracy.
We achieve this connection via the \emph{degree ordering} of $G$. Consider the DAG
obtained by directing all edges from lower to higher degree {vertices}. Chiba-Nishizeki related
the properties of the \emph{out-degree} distribution to the degeneracy, and exploited this for clique
counting~\cite{ChNi85}. Nonetheless, there is no clear link to DDME. (Nor do we use
any of their techniques; we state this result merely to show what led us to use the degree ordering).

Our main insight is the construction of an estimator for DDME whose variance depends
on the degeneracy of $G$. This estimator critically uses the degree ordering. Our proof
relates the variance of this estimator to the density of subgraphs in $G$, which can be bounded
by the degeneracy.
We stress that our algorithm is quite simple, and the technicalities are in the analysis and setting of certain parameters.

\vspace{-1ex}
\subsection{Designing the algorithm} \label{subsec:simple}

Designate the \emph{weight} of an edge $(u,v)$
to be $d^{s-1}_u + d^{s-1}_v$. A simple calculation yields that the sum of the weights
{of all edges}
is exactly $\mom \eqdef \sum_v d^s_v = n\cdot \avgmom_s$. Suppose we could sample uniform random \emph{edges} (and
knew the total number of edges).
Then we could hope to estimate $\mom$ through uniform edge sampling.
The variance of {the} edge weights can be bounded, and this yields an $\ostar(m/(n\avgmom_s)^{1/s}) = \ostar(n^{1-1/s})$ algorithm
(when no vertex is isolated).
Indeed, this is 
{very similar} to the approach of Aliakbarpour et al.~\cite{ABGPRY16}. Such variance
calculations were {also} used in the classic Alon-Matias-Szegedy result of frequency
moment estimation~\cite{AlonMS99}.

Our approach is to simulate uniform edge samples using uniform vertex samples.
Suppose we sampled a set $R$ of uniform random vertices. By querying the degrees
of all these vertices, {we can select vertices in $R$ with probability
proportional to their degrees, which allows us to uniformly sample edges that are incident
to vertices in $R$}.
Now, we simply run the uniform edge
sampling algorithm on these edges. This algorithmic structure was recently used for
sublinear triangle counting algorithms by Eden et al.~\cite{ELRS}.

{Here lies} the core technical challenge.
How to bound the number
of random vertices 
{that is sufficient} for effectively simulating the random edge algorithm?
This boils down to the behavior of the variance of the ``vertex weight"
distribution.
Let the weight of a vertex be the sum of weights of its incident edges. The weight
distribution over vertices can be extremely skewed, and this approach would require
a forbiddingly large $R$.

A standard technique from triangle counting (first introduced
by Chiba-Nishizeki~\cite{ChNi85}) helps reduce the variance.
Direct all edges from lower degree to higher degree {vertices}, breaking
ties consistently. Now, set the weight of a vertex to be the sum of weights
on  incident \emph{out-edges}. Thus, a high-degree vertex with lower degree neighbors
will have a significantly reduced weight, reducing overall variance.
In the general case (ignoring degeneracy), a relatively simple argument bounds
the maximum weight of a vertex, {which enables us  to} bound the variance
of the weight distribution. This yields a much simpler algorithm and proof of the GRS bound.

{In the case of graphs with bounded degeneracy, we need a more refined approach.}
Our key insight is an intimate connection between the variance and
{the existence  of} dense subgraphs in $G$.
We basically show that the main {structure} that leads to high variance
is the existence of dense subgraphs. Formally, we can translate a small upper bound on the density of any subgraph
to a bound on the variance of the vertex weights.
\old{Finally, a classic theorem of Nash-Williams relates the arboricity to the maximum
density over subgraphs, and this allows for the proof of Theorem~\ref{thm:bounded}.}
\Sesh{This establishes the connection to the graph degenearcy.}
%
%
%
%

\vspace{-1ex}
\subsection{Simplicity of our algorithm} \label{subsec:alg}

Our viewpoint on DDME is quite different from GRS and its precursor~\cite{GR08},
which proceed by bucketing the vertices based on {their degree}. This leads to a complicated
algorithm, which essentially samples to estimate the size
of the buckets, and also {the number of edges between} various buckets {(and ``sub-buckets'')}.
We make use of buckets in out analysis, in order to obtain the upper bound that
depends on the degeneracy $\alpha$ (in order to achieve the GRS upper bound, our analysis does
not use bucketing).

As explained above, our main DDME procedure, \mainalg{} is simple enough to present in a few lines
of pseudocode {(see Figure~\ref{fig:mainalg})}.  We feel that the structural simplicity of \mainalg{} is an important contribution of our work.


{\mainalg{} takes two sampling parameters $r$ and $q$.
The main result Theorem~\ref{thm:correctness} follows from running \mainalg{}
with a {standard} geometric search for the right setting of $r$ and $q$.
In \mainalg\ we use $id(v)$ to denote the label 
of a vertex $v$, {where vertices have unique ids and there is a complete order over the ids}.
}

\begin{figure}[htb!]
\begin{center}
	\fbox{
		\begin{minipage}{0.95\textwidth}
			{\bf \mainalg$_s$}$(r,q)$
			\smallskip
			\begin{compactenum}
				\item Select $r$ vertices, uniformly, independently, at random and let the resulting multi-set be denoted by $R$. {Query the degree of each vertex in $R$, and let
    $d_R = \sum_{v\in R} d_v$.}\label{step:set_r}
				\item For $i=1, \ldots, q$ do: \label{step:set_q}
			        \begin{compactenum}
					\item {Select a vertex $v_i$ with probability proportional to its degree
(i.e., with probability $d_{v_i}/d_R$), and query for a random neighbor $u_i$ of $v_i$.}
					\item If  
       $d_{v_i} < d_{u_i}$ or  $d_{v_i} = d_{u_i}$ and
                    $id(v_i) < id(u_i)$,
          set $X_i = (d^{s-1}_{v_i} + d^{s-1}_{u_i})$.
            Else, set $X_i = 0$. \label{step:def_Xi}
				\end{compactenum}
				\item Return $X=\frac{1}{r} \cdot \frac{d_R}{q} \cdot \sum\limits_{i=1}^q X_i \;.$ \label{step:def_X}
			\end{compactenum}
			\end{minipage}
	}
	\caption{Algorithm \mainalg$_s$ for approximating $\avgmom_s$.}
	\label{fig:mainalg}
\end{center}
\end{figure}

\subsection{Other related work}\label{subsec:rel-work}
As mentioned at the beginning of this section,
Aliakbarpour et al.~\cite{ABGPRY16} consider the problem of approximating the number
of $s$-stars for $s\geq 2$
when given access to uniformly selected edges. Given the ability to uniformly select edges, they
can select vertices with probability proportional to their degree (rather than uniformly).
This can be used to get an unbiased estimator of $\avgmom_s$ (or the $s$-star count)
with low variance. This leads to an {$O(m/(n\avgmom_s)^{1/s})$} bound, which is optimal {(for $\avgmom_s \leq n^{s-1}$)}.

Dasgupta, Kumar, and Sarlos give practical algorithms for average degree estimation,
though they assume bounds on the mixing time of the random walk on the graph~\cite{DaKu14}.
A recent paper of Chierichetti et al. build on these methods to sample nodes according to
powers of their degree (which is closely related to DDME)~\cite{ChDa+16}.
Simpson, Seshadhri, and McGregor give practical algorithms to estimate the entire cumulative degree distribution
in the streaming setting~\cite{SiSeMc15}. This is different from the sublinear query model
we consider, and the results are mostly empirical.

%
%
%

In~\cite{ELRS}, Eden et al. present an algorithm for approximating the number of triangles in a graph.
Although this is a very different problem than DDME, there are similar challenges regarding high-degree vertices.
Indeed, as mentioned earlier, the approach of sampling random edges through a set of random vertices
was used in~\cite{ELRS}.

The degeneracy is closely related to other ``density" notions, such as the arboricity,
thickness, and strength of a graph~\cite{arb-wiki}. 
\Sesh{There is a rich history of
algorithmic results where run time depends on the degeneracy~\cite{MaBe83,ChNi85,AlGu08,EpLo10}.}
\old{There are numerous algorithmic results related to the degeneracy (which is at most twice the arboricity)~({cf.} \cite{MaBe83,ChNi85,AlGu08,EpLo10}).}

\ignore{Since in~\cite{GRS11} it was shown that degree and neighbor queries are not sufficient to get a sublinear-time
algorithm for this problem, Eden et al. also allow vertex-pair queries (that is, the algorithm may query whether there is an edge between a pair of vertices $u$ and $v$).
They give an algorithm whose expected query complexity is
	$\left(\frac{n}{t^{1/3}}+\min\left\{m,\frac{m^{3/2}}{t}\right\}\right)\cdot \polylog$,
and show that this result is tight (up to polylogarithmic factors in $n$ and the dependence on $1/\eps$).

What is common to the current work and the algorithm in~\cite{ELRS} is the high level structure of the algorithm.
The algorithm in~\cite{ELRS} also
uniformly selects a multi-set of vertices $R$ and then uniformly selects edges incident to vertices in $R$.
There too edges are assigned weights, where a weight of an edge is related to the number of
triangles it participates in. The algorithm and its analysis are more involved than in the current paper
due to several reasons. One is that the weight of an edge cannot be easily computed by performing
degree queries on its endpoints, and a procedure for approximating the weight is required. Another is that
the analysis of the variance of the random variables defined in the course of the algorithm is more subtle.
}
{Other} sublinear algorithms for estimating various graph parameters include:
{approximating} the size of the minimum-weight spanning tree \cite{DBLP:journals/siamcomp/ChazelleRT05, DBLP:journals/siamcomp/CzumajS09, DBLP:journals/siamcomp/CzumajEFMNRS05}, maximum matching \cite{nguyen2008constant, yoshida2009improved} and of the minimum vertex cover \cite{DBLP:journals/tcs/ParnasR07,nguyen2008constant,DBLP:journals/talg/MarkoR09, yoshida2009improved, hassidim2009local, onak2012near}.

\ifnum\conf=1
\subsubsection*{A Comment regarding this extended abstract}
We defer some of the details of the analysis of the algorithm, as well as the lower bound proof,
to the accompanying full version of the paper.
\fi 

\ignore{
{Old version starts here}

An important trend in the analysis of graph algorithms is refined time complexity complexity in terms of parameters
other than just vertices and edges.
The most spectacular example is the use of treewidth to bound the complexity
of a large variety of dynamic programming algorithms. {cite}
As opposed to merely being an algorithm on some graph class, this allows for bounds on any graph
class where this parameter is small. It also provides insight into what kinds of graphs are hard
for an algorithm.

We introduce the use of \emph{degeneracy} in the analysis of sublinear algorithms.
For graph $G$, the arboricity $\alpha(G)$ is the minimum number of disjoint spanning forests in $G$,
and can be considered as a measure of density. It is always more than the average degree,
but can be much smaller than the maximum degree. Chiba and Nishizeki {cite} first
used this parameter for designing faster algorithms for clique counting.

Our main result is designing sublinear algorithms for estimating degree distribution moments
using the arboricity. Studying the degree distribution has become a fundamental tool
in the analysis of massive real-world networks, since the observation
of heavy-tailed (or power law) degree distributions. While graphs such as router, social,
and infrastructure networks are typically sparse, the variance can be quite high. Thus,
estimating the moments is a method for quantifying the effects of high-degree vertices.

Formally, we have an undirected graph $G = (V,E)$, where $V = [n]$. We are given access to the following kinds of queries.
\emph{Degree queries} tell us the degree $d_v$ of a vertex $v$, and \emph{neighbor queries} tell
us the $i$th neighbor of vertex $v$. Given integer $s \geq 0$, we wish to estimate the $s$-th moment,
$\mom(G) \eqdef \sum_{v\in V} d_v^s$.
How can we approximate these quantities in sublinear time?

As a warmup, we give a theorem for the special case $s=1$. This is the classic problem of estimating $m$,
the number of edges in $G$. We use $\overline{d}$ to denote the average degree $2m/n$.

The main theorem gives an algorithm for estimating any degree distribution moment.
For the sake of clarity, we present a somewhat weaker theorem. {Let's work on this.}

\begin{theorem} \label{thm:moment} Fix positive integer $s$. Consider a graph $G$ with bounded average degree and no isolated vertices.
Consider algorithms with access to degree and neighbor queries.
 There is an algorithm that takes as input the number of vertices $n$,
the arboricity $\alpha$, and approximation parameter $\eps > 0$, and outputs whp a $(1\pm\eps)$-estimate
to $\mom(S)$. The number of queries made is $\ostar(n^{1-1/s}\alpha^{1/s})$.
\end{theorem}

\section{Introduction}
We revisit  the problem of estimating the $s\th$ {\em moment} of the degree distribution in a graph $G = (V,E)$. Namely, for a given integer $s$,  we would like to estimate $\mom(G) \eqdef \sum_{v\in V} d_v^s$, where $d_v$ denotes the degree of vertex $v$. The estimate should be within a factor of $(1\pm \eps)$ from
$\mom(G)$, and we would like to perform this task in time {\em sublinear\/} in the size of the graph. To this end we are given query access to the graph, and can perform two types of queries. The first type are {\em degree \/} queries (i.e., what is the degree $d_v$ of vertex $v$), and the second are {\em neighbor\/} queries (i.e., what is the $i\th$ neighbor of vertex $v$ for $i \leq d_v$).

{\bf D: SAY A FEW WORDS  ABOUT MOTIVATION (or is it clear?)}

For $s=1$, we have that $\mu_1 =  2m = \overline{d}\cdot n$,
where $n=|V|$, $m = |E|$,
and $\overline{d}$ denotes the average degree in the graph. Therefore, estimating $\mu_1$ is equivalent to estimating $\overline{d}$.
Feige \cite{feige2006sums} studied the problem of estimating $\overline{d}$
when given access to degree queries.
Feige proved that $O\!\left(\sqrt {n/\overline{d}}\cdot \eps^{-1}\right)$ queries\footnote{Here and
in the other references to previous work, 
the bound is on the expected number of queries, where the probability that the number of queries is a
factor $k$ larger decreases exponentially with $k$. The running time is of the same order as the query complexity, or at most a $\log n$ factor larger.}  
are sufficient  in order to obtain, with high constant probability, an estimate $\widehat{d}$ that satisfies
$(1-\eps)\overline{d} \leq \widehat{d} \leq (2+\eps)\overline{d}$.
He also proved that a better approximation ratio cannot be achieved in sublinear time using only degree queries. 
Goldreich and Ron~\cite{GR08} showed that a $(1+\eps)$-approximation can
be obtained with
$O\!\left(\sqrt{n/\overline{d}}\right)\cdot \poly(\log n,1/\eps)$ queries, 
if neighbor queries are also allowed.

\sloppy
For $s>1$, a very closely related problem is that of counting the number of $s$-stars in the graph, where an
$s$-star is a ``center'' vertex that is connected to $s$ other vertices. Observe that the number
of $s$-stars, which we denote by $\nu_s(G)$, equals $\sum_{v\in V} {d_v \choose s}$ (where ${d_v \choose s}=0$ for $d_v < s$).
Gonen et al.~\cite{GRS11} presented an algorithm that with high constant probability computes a $(1+\eps)$-approximation of $\nu_s(G)$. The 
query complexity
of their algorithm is $O\!\left(\frac{n}{\nu_s(G)^{1/(s+1)}}+ \min\left\{n^{1-1/s}, \frac{n^{s-1/s}}{\nu_s(G)^{1-1/s}}\right\}\right)\cdot \poly(\log n,1/\eps)$,
and they show that this bound is tight up to polylogarithmic factors in $n$ and the
dependence on $\epsilon$.
The problem of approximately counting the number of stars in sublinear time was also considered
in a recent paper of Aliakbarpour et al.~\cite{ABGPRY16}, where they assume a different query/sampling model in
which the algorithm has
access to uniformly selected edges. We further discuss this work in Subsection~\ref{subsec:rel-work}.

\subsection{Our contribution}
Our first contribution is a new algorithm for
 approximating $\mom(G)$ (and similarly, $\nu_s(G)$). The complexity of our algorithm  is the same as
 the algorithm of~\cite{GRS11} up to polylogarithmic factors in $n$ and we  actually obtain a significantly lower (quadratic) dependence on $1/\eps$.
 More importantly, both the algorithm and its analysis are significantly simpler than the algorithm of~\cite{GRS11}, which was quite complex, in particular involving several levels of ``bucketing''.
 {\bf D: SAY MORE ABOUT THIS ASPECT OF THE CONTRIBUTION?}

Our second contribution is showing that a small variant of our algorithm (with a slightly more involved analysis) gives improved performance for graphs that have a bounded arboricity. The arboricity of a graph $G$, denoted $\alpha(G)$ is
a certain measure of its density (everywhere). It is the minimum number of forests into which its edges can be partitioned. As proved by Nash-Williams~\cite{nash1961edge,nash1964decomposition}, $\alpha(G)$ also equals
$\max_{S\subseteq V} \left\{\left\lceil\frac{|E(S)|}{|S|-1}\right\rceil\right\}$, where $E(S)$ denotes
the set of edges in the subgraph induced by $S$.\footnote{We note that the arboricity of a graph is also within a constant factor of its {\em degeneracy}, where the degeneracy of a graph is the maximum over all subsets $S$ of vertices, of the minimum degree in the subgraph induced by $S$.}
Using the shorthands $\mom = \mom(G)$ and $\alpha = \alpha(G)$, the expected query complexity 
of the algorithm is:
\[
O\!\left(\frac{2^{2s} \cdot  n\cdot\alpha}{\mom^{1/s}}+
      \min\left\{n^{1-1/s}, \frac{n^{s-1/s}}{\mom^{1-1/s}},
          \frac{n\cdot\alpha}{\mom^{1/s}},\frac{n^s\cdot\alpha}{\mom} \right\}\right)
            \cdot \poly(\log n,1/\eps)\;.
\]

\subsection{Techniques}
We start by describing the ideas underlying the special case of $s=1$, and then explain how they extend to
$s>1$, and how the algorithm and analysis are modified to the bounded arboricity case.
In what follows we
work under the assumption that we have a constant factor estimate of $\mom(G)$, which we denote for
short by $\mom$
(where this assumption can be removed as done in~\cite{feige2006sums,GR08,GRS11}).

\subsubsection{The case {\boldmath $s=1$}}
Recall that $\mu_1 = 2m$, and consider first directing each edge from its smaller degree endpoint to its larger degree endpoint (breaking ties according to ids). Let $E^+$ denote this set of directed edges, and for a vertex $u$, let $d^+_u$ denote the number of such directed edges $(u,v) \in E^+$, so that
$\sum_{u\in V} d^+_u = m$. Therefore, if we take a uniform sample $R$ of $r$ vertices,
then the expected value of $\sum_{u\in R} d^+_u$ is $m \cdot \frac{r}{n}$.
A simple but useful observation is that while $d_u$ can range between $0$ and  $n-1$, $d^+_u$ is upper bounded by $O(\sqrt{m})$ (for a proof see Claim~\ref{}). Suppose that we had access to an oracle for $d^+_u$ (rather than for $d_u$). A standard probabilistic analysis implies that for
$r = O\left(\frac{n\cdot\sqrt{m}}{m\cdot \eps^{2}}\right) = O\left(\sqrt{n/\overline{d}}\cdot\eps^{-2}\right) $, with high constant probability, a uniform sample $R$ of $r$ vertices satisfies
$\sum_{u\in R} d^+_u = (1\pm \eps) m \cdot \frac{r}{n}$.

Since we do not have an oracle for $d^+_u$, we do the following. As in the above (imaginary) oracle-based algorithm, we take a sample of $r= O\left(\sqrt{n/\overline{d}}\cdot\eps^{-2}\right) $ uniformly selected random
vertices, whose (multi-)set we denote by $R$. As discussed above, with high constant probability,
$\sum_{u\in R} d^+_u = (1\pm \eps) m \cdot \frac{r}{n}$. Let $e^+(R)$  denote the number of directed edges $(u,v)\in E^+$ for $u \in R$, so that $e^+(R) = \sum_{u\in R} d^+_u$. Similarly, let
$e(R) = \sum_{u\in R} d_u$ denote the number of edges $(u,v)\in E$ for $u \in R$ (where if
both $u$ and $v$ belong to $R$, then the edge is counted twice). Observe first that we can obtain
$e(R)$ (exactly) be performing (``regular'') degree queries on all vertices in $R$. Furthermore,
with high constant probability, $e(R)$ is at most a constant factor larger
than its expected value, $2m \cdot \frac{r}{n}$. Therefore, if we can obtain a $(1\pm \eps)$-estimate of $e^+(R)/e(R)$, then we get a $(1\pm 2\eps)$-estimate of $e^+(R)$ and hence of $m=\mu_1/2$, as desired. Such an estimate of $e^+(R)/e(R)$ is simply obtained by uniformly selecting a constant number of pairs $(u,i)$ for $i \in \{1,\dots,d_u\}$. For each pair $(u,i)$ selected, we perform a neighbor query to get the $i\th$ neighbor of $u$, denoted $v$, as well as a degree query on $v$. In this manner we can determine wether $(u,v) \in E^+$.

\subsubsection{The case {\boldmath $s> 1$}}
We generalize the algorithm for $s=1$ to $s>1$ as follows. Each (directed) edge $e = (u,v)\in E^+$ is
assigned a {\em weight\/},
 $\wt(e) = d_u^{s-1}+ d_v^{s-1}$,
 and each vertex $u$ is assigned the sum of the weights of its outgoing edges:
 $\wt(v) = \sum_{(u,v)\in E^+} \wt((u,v))$
 Note that for $s=1$, the weight of each edge is $2$, and the weight of a vertex $u$
is $2d^+_u$. In general, the weight assignment satisfies
$\sum_{u\in V} \wt(u) = \sum_{e\in E^+}\wt(e) = \mu_s$.
By extending the argument for $s=1$, we can establish an upper bound of $O(\mom^{\frac{s}{s+1}})$
on the weight of every vertex (for $s=1$ this is $O(\sqrt{m})$). Here too, this bound on the weight of
every vertex implies that with high constant probability, the sum of the weights of
$r =  O\left(\frac{n}{\mom^{1/(s+1)}\cdot \eps^{2}}\right)$ uniformly selected vertices
is within $(1\pm \eps)$ from its expected value, $\frac{r}{n}\cdot \mom$. If we had oracle access to
the weight of vertices, we would be done.

Since we do not have such oracle access, similarly
to the $s=1$ case, we uniformly select a (multi-)set $R$ of $r =  O\left(\frac{n}{\mom^{1/(s+1)}\cdot \eps^{2}}\right)$ vertices and query their degrees.
Defining $e(R)$ as before, if we uniformly select a
pair $(u,i)$ for $i \in \{1,\dots,d_u\}$, then the expected value of the weight of the corresponding edge
is $\frac{r}{n}\cdot \frac{\mom}{ e(R)}$. Note that the weight of the edge can be computed by performing a neighbor query (for the $i\th$ neighbor of $u$), and a degree query on the neighbor returned (the degree of $u$ was already queried).
By bounding the maximum possible weight of an edge (as a function of $n$ and $\mom$), we can show that it suffices to select a number of edges that grows like
$\min\left\{n^{1-1/s},n^{s-1/s}/\mom^{1-1/s}\right\}$ so as to ensure that with high constant probability, the average weight of the selected edges is within $(1\pm \eps)$ of its expected value.

\paragraph{An alternative high-level view of the algorithm.}
We next shortly describe an alternative view of our algorithm, which can be viewed as a
``reverse'' view of the one presented above.
Suppose we had access to uniformly selected (undirected) edges (that is, in $E$), and we knew the number of edges $m$. Let the weight of an undirected edge be defined the same as its directed version. Then
the expected value of the weight of a uniformly selected edge is $\frac{\mom}{m}$. By using the
aforementioned bound on the maximum weight of an edge, we would get that it suffices
to select $O\left(\min\left\{\frac{m}{\mom^{1/s}},\frac{m\cdot n^{s-1}}{\mom}\right\}/\eps^2\right)$ edges so as to ensure that with high constant probability, the average weight over the selected edges is within $(1\pm \eps)$ of its expected value (from which we obtain a $(1\pm \eps)$ estimate of $\mom$).

Since we do not have access to uniformly selected edges, we instead uniformly select a (multi-)set of
vertices $R$ of size $r$, and uniformly select edges that are incident to vertices in $R$. The latter
can be done after performing a degree query on each vertex in $R$, from which we also obtain the total number
of these edges. As explained above, for a sufficiently large number of vertices $r$, with high constant probability,
the set of edges incident to vertices in $R$ is ``typical'' in the following sense. The sum of the weights of
these edges, which equals the sum of the weights of the vertices in $R$, is close to its expected value,
$\frac{r}{n}\cdot \mom$. We can thus replace uniform sampling of edges (plus the knowledge of $m$), by uniform
sampling of edges from a set of edges that are incident to a uniformly selected (multi-)set of vertices.

\subsubsection{The bounded arboricity algorithm}
Once again, consider first the case of $s=1$. We partition the vertices into a set of {\em high-degree\/}
vertices, denoted $\mH$, whose degree is greater than $\alpha/\eps$, and a set of {\em low-degree\/} vertices, denoted $\mL$, whose degree is at most $\alpha/\eps$.
(Recall that $\alpha=\alpha(G)$ denotes the arboricity of the graph, where $\alpha = O(\sqrt{m})$.)
It is not hard to verify that by the setting of the degree threshold, $\sum_{u\in \mH} d^+_u \leq \eps m$.
Suppose that for $u \in \mH$ we redefine $d^+_v$ to be 0 (and for $u\in \mL$ there is no change). Then for this modified definition,
 $\sum_{u\in V} d^+_u \in [(1-\eps)m,m]$, and the maximum value of $d^+_u$ is $\alpha/\eps$.
It follows that if we run the previously described algorithm (for $s=1$) using the modified definition
of $d^+(\cdot)$, then a sample of size $r= O\left(\frac{n\cdot \alpha}{m\cdot\eps^{3}}\right) $
suffices to get a $(1\pm O(\eps))$-approximation of $\mu_1$.

For $s >1$, we also partition the vertices into high-degree and low-degree vertices, according to a degree threshold $\theta$ that depends on $\alpha$, $\mu_s$, $s$ and $\eps$. Here too we redefine the weight of vertices in $\mH$ to be $0$ and prove that with this modification the sum of the weights is at least $(1-\eps)\mu_s$.
However, as opposed to the case of $s=1$, we cannot simply use the resulting upper bound on the weight
of vertices in $\mL$ (recall that the weight of a vertex depends not only on its degree but also on the degrees of its higher-degree neighbors). Instead, we perform a more refined variance analysis, using the bounded
arboricity. At the heart of the analysis is a bound on
$\sum_{v\in \mL}\left(\sum_{u\in \Gamma_{\mH}(v)} d_u^{s-1}\right)^2$, where $\Gamma_{\mH}(v)$ is the set of neighbors that $v$ has in $\mH$.
}

	\section{The main theorem}

\begin{theorem} \label{thm:correctness}
	For every graph $G$, there exists an algorithm that returns a value
	$Z$ such that $Z \in [(1-\eps)\avgmom_s(G), (1+\eps)\avgmom_s(G)]$ with probability at least $2/3$.
	Assume that algorithm is given $\alpha$, an upper bound on the degeneracy of $G$.
	(If no such bound is provided, the algorithm assumes a trivial bound of $\alpha = \infty$.)
	The expected running time is the minimum of the following two expressions.
	
	\ifnum\conf=1
	\vspace{-1ex}
	\fi
	\begin{eqnarray}
	O\Big(2^s\cdot n^{1-1/s}\cdot\log^2 n\cdot \Big(\frac{\alpha}{\avgmom_s}\Big)^{1/s} +\min\Big\{\frac{n^{1-1/s}\cdot \alpha}{\avgmom_s^{1/s}}, \frac{n^{s-1}\cdot  \alpha}{\avgmom_s}\Big\}\Big)\cdot \;\frac{s \log n\cdot \log(s\log n)}{\eps^2} \label{eq:arb-rt}
	\end{eqnarray}
	
	\ifnum\conf=1
	\vspace{-1ex}
	\fi
	\begin{eqnarray}
	O\Big(\frac{n^{1-1/(s+1)}}{\avgmom_s^{1/(s+1)}} +
	\min\Big\{n^{1-1/s}, \frac{n^{s-1-1/s}}{\avgmom_s^{1-1/s}}\Big\}\Big)\cdot\frac{s\log n\cdot \log(s\log n)}{\eps^2} \label{eq:grs-rt}
	\end{eqnarray}
\end{theorem}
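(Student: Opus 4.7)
The plan is to first verify that $X$ is an unbiased estimator of $\avgmom_s$, then bound its variance so Chebyshev's inequality gives the desired $(1\pm\eps)$ approximation, and finally handle the fact that the optimal choice of $r$ and $q$ depends on the unknown $\avgmom_s$ via a standard geometric search. For unbiasedness, condition on the multi-set $R$. Given $R$, selecting $v_i$ with probability $d_{v_i}/d_R$ and then a uniform neighbor $u_i$ makes every incidence $(v,u)$ with $v\in R$ equally likely with probability $1/d_R$ (weighted by multiplicity of $v$ in $R$). The indicator that $v_i$ is the ``lower'' endpoint (by degree, ties broken by id) picks out the directed edge $\ore=(v_i,u_i)\in\orE$, so $E[X_i\mid R]=\frac{1}{d_R}\sum_{v\in R}W(v)$, where $W(v)\eqdef\sum_{u: (v,u)\in\orE}(d_v^{s-1}+d_u^{s-1})$. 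Substituting into $X=\frac{d_R}{rq}\sum_i X_i$ gives $E[X\mid R]=\frac{1}{r}\sum_{v\in R}W(v)$, and taking expectation over $R$ yields $E[X]=\frac{1}{n}\sum_v W(v)=\frac{1}{n}\sum_{(v,u)\in\orE}(d_v^{s-1}+d_u^{s-1})=\avgmom_s$, since each undirected edge $\{u,v\}$ contributes once as the pair $(v,u)\in\orE$ and the total collapses to $\sum_v d_v^s$.

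For the variance, I would use the decomposition $\var(X)=\var(E[X\mid R])+E[\var(X\mid R)]$. The first term equals $\tfrac{1}{r}\var_{v\sim U(V)}[W(v)]\leq \tfrac{1}{rn}\sum_v W(v)^2$, so everything reduces to bounding $\sum_v W(v)^2$. The second term, after the crude bound $\var(X\mid R)\leq \tfrac{d_R^2}{r^2 q}E[X_i^2\mid R]$, is governed by $\sum_{(v,u)\in\orE}(d_v^{s-1}+d_u^{s-1})^2$ together with a routine bound on the likely magnitude of $d_R$ (which concentrates around $r\cdot 2m/n$). To establish \eqref{eq:grs-rt} (the worst-case bound), I would invoke the Chiba--Nishizeki-style observation that in the degree ordering every vertex has out-degree $d_v^+\leq \sqrt{2m}$, because every out-neighbor has degree at least $d_v$ and thus $d_v^+\cdot d_v\leq 2m$. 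Combined with Cauchy--Schwarz this produces a uniform upper bound on $W(v)$ and on edge weights, yielding variance bounds that are polynomial in $\avgmom_s$ and $m$; balancing the two variance contributions by the stated choices of $r$ and $q$ recovers the GRS complexity.

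The hard part is the refined bound \eqref{eq:arb-rt}, which must exploit the degeneracy $\alpha$. I would partition $V$ by a threshold $\theta$ depending on $\alpha$ and $\avgmom_s$ into a high-degree set $\mH$ and a low-degree set $\mL$, and then split $W(v)$ according to whether its out-neighbors lie in $\mH$ or in $\mL$. For $v\in\mL$, the contribution from $\mL$-neighbors is controlled by the degree cap $\theta^{s-1}$, while the contribution from $\mH$-neighbors requires the dense-subgraph argument: by Nash--Williams, any subgraph $S$ satisfies $|E(S)|\leq \alpha|S|$, and this translates into a bound on $\sum_{v\in\mL}\bigl(\sum_{u\in \Gamma_{\mH}(v)} d_u^{s-1}\bigr)^2$ after repeated application of Cauchy--Schwarz and grouping vertices of $\mH$ by degree bucket. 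One also needs to argue that the total contribution to $\avgmom_s$ from directed edges with lower endpoint in $\mH$ is at most an $\eps$-fraction of $\avgmom_s$, so that zeroing them out of the estimator is harmless; this again follows from the degeneracy bound applied to $\mH$. Once $\sum_v W(v)^2$ and $\sum_{\orE}(d_v^{s-1}+d_u^{s-1})^2$ are bounded in terms of $\alpha$ and $\avgmom_s$, Chebyshev's inequality closes the analysis. Finally, since $\avgmom_s$ is unknown, I would wrap the whole procedure in a geometric search over the $O(s\log n)$ dyadic guesses $\avgmom_s\in\{1,2,4,\ldots,n^{s-1}\}$, using the estimator itself (together with a standard doubling check) to validate the correct guess; this inflates the query complexity by the extra $s\log n\cdot\log(s\log n)$ factor appearing in both bounds.
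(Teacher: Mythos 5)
Your overall architecture (unbiasedness via conditioning on $R$, a Chebyshev-based variance analysis split into the vertex-sampling and edge-sampling contributions, and a geometric search over dyadic guesses to remove the dependence on the unknown $\avgmom_s$) matches the paper's. But there are two genuine gaps. First, your route to the worst-case bound \eqref{eq:grs-rt} does not work for $s\geq 2$. The bound $d_v^+\leq\sqrt{2m}$ combined with H\"older/Cauchy--Schwarz gives only $\wt(v)\leq 2\,(d_v^+)^{1/s}\mom^{(s-1)/s}\leq 2\,(2m)^{1/(2s)}\mom^{(s-1)/s}$, and hence, via $\sum_v\wt(v)^2\leq\max_v\wt(v)\cdot\mom$, a vertex-sample size of order $n\,(2m)^{1/(2s)}/\mom^{1/s}$. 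This is strictly worse than the required $n/\mom^{1/(s+1)}$ in essentially every regime: e.g.\ for $s=2$ and a constant-degree graph ($\mom=\Theta(n)$, $m=\Theta(n)$) it gives $r=\Theta(n^{3/4})$ while \eqref{eq:grs-rt} demands $O(n^{2/3})$. What is actually needed (and what the paper proves in Lemma~\ref{lem:cond-r}) is $\max_v\wt(v)=O(\mom^{s/(s+1)})$, which requires the sharper facts that at most $\mom^{1/(s+1)}$ vertices have degree above $\theta=\mom^{1/(s+1)}$, hence $d_v^+\leq\mom^{1/(s+1)}$, \emph{and} a separate H\"older bound $\sum_{u\in\mH}d_u^{s-1}\leq\mom^{s/(s+1)}$ for the out-neighbors above the threshold; the $\sqrt{2m}$ trick is the $s=1$ special case and does not extend.

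Second, the degeneracy-dependent bound \eqref{eq:arb-rt}, which is the heart of the theorem, is only a plan in your write-up. The paper does \emph{not} truncate the estimator at a threshold for general $s$ (it does so only in the easy $s=1$ case); instead it keeps \mainalg{} unchanged and proves the key inequality $\sum_v\wt(v)^2=O(2^s\alpha^{1/s}\mom^{2-1/s}\log^2 n)$ (Lemma~\ref{lem:cond-r-arb}) via a \emph{double} bucketing: dyadic degree classes $U_i$, a further partition $V_{i,j}$ of all vertices by how many out-neighbors they have in $U_i$ in multiples of $\alpha$, and the density-based edge count $|E^+(V_{i,j},\whU)|\leq\mom\cdot 2^{-((i-1)(s-1)+j-1)}$ (Claim~\ref{clm:E+_U_V}), followed by H\"older within each bucket. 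Your alternative (zero out edges rooted in $\mH$, argue the lost weight is an $\eps$-fraction using $|E(\mH)|\leq\alpha|\mH|$, and bound $\sum_{v\in\mL}\bigl(\sum_{u\in\Gamma_{\mH}(v)}d_u^{s-1}\bigr)^2$) is a plausible route --- it requires re-establishing the (now biased) expectation and still needs the bucketed density argument --- but the crucial quantitative step, producing a bound of the form $\alpha^{1/s}\mom^{2-1/s}\,\mathrm{polylog}(n)$ so that the vertex condition yields $r=O(2^s n\log^2 n\,(\alpha/\mom)^{1/s})$, is exactly what you leave to ``repeated Cauchy--Schwarz and grouping by degree bucket'' without carrying it out. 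Until that bound is proved, the stated complexity \eqref{eq:arb-rt} is not established. (The geometric search sketch is fine in spirit, though note the paper searches top-down from the maximal guess and uses Markov's inequality to argue it does not stop too early, which is also what drives the expected-running-time bound.)
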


Equation~\eqref{eq:grs-rt} is essentially the query complexity of GRS (albeit with
a better dependence on $s, \log n$, and $1/\eps$). Thus, our algorithm is guaranteed to be at least
as good as that. If $\alpha$ is exactly the degeneracy of $G$, then we can prove that Equation~\eqref{eq:arb-rt}
is less than Equation~\eqref{eq:grs-rt}.
Within each expression, there is a min of two terms. The first term is smaller iff $\avgmom_s \leq n^{s-1}$.

The mechanism of deriving this rather cumbersome running time is the following. The algorithm of~Theorem~\ref{thm:correctness} runs \mainalg{} for geometrically 
increasing values of $r$ and $q$, which is in turn
derived from a geometrically decreasing guess of $\avgmom_s$.
It uses this guess to set $r$ and $q$. There is a setting of values depending on $\alpha$,
and a setting independent of it. The algorithm simply picks the minimum of these settings
to achieve the smaller running time.

\section{Sufficient conditions for $r$ and $q$ in \mainalg} \label{sec:alg}

In this section we provide sufficient conditions on the parameters $r$ and $q$
that are used by \mainalg \;(Figure~\ref{fig:mainalg}), in order for the algorithm to return a $(1+\eps)$ estimate of $\avgmom_s$.
First we introduce some notations.
For a graph $G = (V,E)$ 
and a vertex $v\in V$,
let $\Gamma(v)$ denote the set of neighbors of $v$ in $G$ {(so that
	$d_v = |\Gamma(v)|$).} For any (multi-)set $R$ of vertices, let $E_R$ be the (multi-)set of edges
incident to the vertices in $R$.
We will think of the edges in $E_R$ as ordered pairs; thus $(v,u)$ is distinct from $(u,v)$,
and so $E_R \eqdef \{(v,u): v\in R, u \in \Gamma(v)\}$.
{Observe that $d_R$, as defined in Step~\ref{step:set_r} of \mainalg\ equals $|E_R|$.}
Let $\mom = \mom(G) \eqdef \sum_{v\in V} d_v^s$, {so that $\avgmom_s = \mom/n$}.
In the analysis of the algorithm, it  is convenient  to work
with $\mom$ instead of  $\avgmom_s$.

A critical aspect of our algorithm (and proof) is the \emph{degree ordering on vertices}.
Formally, we set $u \prec v$ if $d_u < d_v$ or, $d_u = d_v$ and $id(u) < id(v)$.
Given the degree ordering, we let
$\Gamma^+(v) \eqdef \{u\in \Gamma(v): v\prec u\}$,  $d^+_v \eqdef |\Gamma^+(v)|$, and $E^+ \eqdef \{(v,u): v \in V, u \in \Gamma^+(v)\}$.
Here and elsewhere, we use $\sum_v$ as a shorthand for $\sum_{v\in V}$.

\begin{definition}\label{def:wt_e}
	We define the {\sf weight} of an edge $e=(v,u)$ as follows:
	if $v \prec u$ define $\wt(e) \eqdef (d_v^{s-1} + d_u^{s-1})$. Otherwise, $\wt(e) \eqdef 0$.\\
	For a vertex $v \in V$,
	$\wt(v) \eqdef \sum\limits_{u\in \Gamma(v)} \wt((v,u)) \; = \sum\limits_{u\in \Gamma^+(v)} \wt((v,u))$, and for a (multi-)set of
	vertices $R$, $\wt(R) \eqdef \sum\limits_{v\in R} \wt(v)$.
\end{definition}
{Observe that given the above notations and definition, \mainalg\
	selects uniform edges from $E_R$ and sets each $X_i$
	(in Step~\ref{step:def_Xi}) to $\wt((v_i,u_i))$. }
\ifnum\conf=0
Based on Definition~\ref{def:wt_e}, we obtain the next two claims, where the first claim connects between
$\mom$ and the weights of vertices.
\else
The next two claims readily follow from Definition~\ref{def:wt_e} (and the description of the algorithm).
\fi
\begin{claim} \label{clm:sum_wts} $\sum_v \wt(v) =\mom$.
\end{claim}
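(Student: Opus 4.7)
The plan is to simply unfold the definitions and reorganize the double sum by edges. By \Def{wt_e}, $\wt(v) = \sum_{u \in \Gamma^+(v)} (d_v^{s-1} + d_u^{s-1})$, so $\sum_v \wt(v)$ is a sum indexed by the oriented edges in $E^+$. Because the order $\prec$ is a strict total order on $V$ (degree with id as tie-breaker), each undirected edge $\{v,u\} \in E$ contributes exactly one ordered pair to $E^+$, namely the one with smaller endpoint in $\prec$. So I would first rewrite
\[
\sum_v \wt(v) \;=\; \sum_{(v,u)\in E^+} (d_v^{s-1} + d_u^{s-1}) \;=\; \sum_{\{v,u\}\in E} (d_v^{s-1} + d_u^{s-1}).
\]

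Next, I would regroup this sum by the vertex contributing the $d^{s-1}$ factor instead of by edge: each vertex $v$ supplies the term $d_v^{s-1}$ once per incident edge, regardless of which endpoint is the smaller one in $\prec$. Since $v$ has exactly $d_v$ incident edges, this gives
\[
\sum_{\{v,u\}\in E} (d_v^{s-1} + d_u^{s-1}) \;=\; \sum_v d_v \cdot d_v^{s-1} \;=\; \sum_v d_v^s \;=\; \mom,
\]
which is the desired identity. There is essentially no obstacle here: the only thing to be slightly careful about is that the weight attached to the unique oriented copy of $\{v,u\}$ in $E^+$ is the \emph{full} symmetric expression $d_v^{s-1} + d_u^{s-1}$, so that both endpoints receive credit even though only one orientation is kept; this is precisely why each vertex ends up counted with multiplicity $d_v$ rather than $d_v^+$ or $d_v - d_v^+$.
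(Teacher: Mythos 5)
Your proof is correct and follows essentially the same route as the paper's: rewrite $\sum_v \wt(v)$ as a sum over the oriented edges of $E^+$, identify it with a sum over undirected edges of $d_v^{s-1}+d_u^{s-1}$, and regroup by the vertex contributing each $d^{s-1}$ term to obtain $\sum_v d_v\cdot d_v^{s-1}=\mom$. Nothing further is needed.
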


\ifnum\conf=0
\begin{proof} By the definition of the weights:
	\begin{align}
	\sum\limits_{v \in V} wt(v)
	= \sum\limits_{(v,u) \in E^+}(d_v^{s-1} + d_u^{s-1})
	= \sum\limits_{\{v,u\} \in E}(d_v^{s-1} + d_u^{s-1})
	= \sum\limits_{v \in V } \sum\limits_{u \in \Gamma(v)} d_v^{s-1}
	= \sum\limits_{v \in V } d_v^{s} = \mom \;, \nonumber
	\end{align}
	and the claim is established.
\end{proof}
\fi

\begin{claim} \label{clm:exp} $\EX[X]  = {\avgmom_s}$, where $X$ is as defined in Step~\ref{step:def_X} of the algorithm.
\end{claim}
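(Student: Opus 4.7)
The plan is to use the tower property, conditioning on the random multi-set $R$. The key observation is that the procedure for generating $(v_i, u_i)$ in Step~\ref{step:set_q}, namely sampling $v_i \in R$ with probability $d_{v_i}/d_R$ and then a uniform neighbor $u_i$, produces a uniform random ordered edge from $E_R$: the probability of picking a specific $(v,u) \in E_R$ is $(d_v/d_R)\cdot(1/d_v) = 1/d_R = 1/|E_R|$.

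With this in hand, I would first compute $\EX[X_i \mid R]$. Since $X_i$ was designed to equal $\wt((v_i,u_i))$ (the indicator $v_i \prec u_i$ together with the weight $d_{v_i}^{s-1}+d_{u_i}^{s-1}$ exactly implements Definition~\ref{def:wt_e}), we get
\[
\EX[X_i \mid R] \;=\; \frac{1}{|E_R|}\sum_{(v,u)\in E_R}\wt((v,u)) \;=\; \frac{1}{d_R}\sum_{v\in R}\wt(v) \;=\; \frac{\wt(R)}{d_R},
\]
where the middle equality uses that $\wt(v) = \sum_{u\in\Gamma(v)}\wt((v,u))$ from Definition~\ref{def:wt_e}. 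Plugging into the definition of $X$ in Step~\ref{step:def_X}, the $d_R$ cancels and the $q$ cancels, yielding $\EX[X \mid R] = \wt(R)/r$.

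Next I would take the outer expectation over $R$. Since $R$ is a multi-set of $r$ i.i.d.\ uniform vertices, $\EX[\wt(R)] = r\cdot\EX_{v\sim U(V)}[\wt(v)] = (r/n)\sum_v \wt(v)$, and by \Clm{sum_wts} this equals $r\cdot \mom/n = r\cdot\avgmom_s$. Hence $\EX[X] = \avgmom_s$.

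No step here is a real obstacle; the only subtlety to flag explicitly is that the two-stage sampling of $v_i$ and $u_i$ really does produce a uniform ordered edge in $E_R$, since this is what justifies cancellation of $d_R$ against the sampling bias and is what makes $X$ an unbiased estimator for $\avgmom_s$ rather than for some degree-weighted quantity. Everything else is routine linearity of expectation on top of \Clm{sum_wts}.
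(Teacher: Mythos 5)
Your proof is correct and follows essentially the same route as the paper: condition on $R$, note that the two-stage sampling yields a uniform ordered edge of $E_R$ so that $\EX[X_i\mid R]=\wt(R)/|E_R|$, cancel to get $\EX[X\mid R]=\wt(R)/r$, and then average over $R$ using Claim~\ref{clm:sum_wts}. The only difference is that you spell out the uniform-edge-sampling observation explicitly, which the paper states just before the claim rather than inside the proof.
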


\ifnum\conf=0
\begin{proof}
	Recall that $\wt(R) \eqdef \sum_{v \in R} \wt(v)$. Note that $X_i$
	(as defined in Step~\ref{step:def_Xi} of the algorithm) 	is exactly $\wt((v_i,u_i))$.
	Conditioning on $R$,
	$$\EX[X_i | R] = \frac{1}{|E_R|}\cdot\sum_{v \in R} \sum_{u \in \Gamma^+_v} \wt((v,u)) = \frac{1}{|E_R|} \cdot \sum_{v \in R} \wt(v) = \frac{1}{|E_R|} \cdot \wt(R)\;.$$
	{By the definition of $X$ in the algorithm (see Step~\ref{step:def_X}),}
	$$\EX[X|R] = \frac{{1}}{r}\cdot\frac{|E_R|}{q}\cdot\sum_{i=1}^q \EX[X_i|R] = \frac{{1}}{r}\cdot \wt(R)\;.$$
	Now, let us remove the conditioning.
	{Since $\wt(R) \eqdef \sum_{v\in R}\wt(v)$, by linearity of the expectation,}
	\ifnum\conf=1
	$\EX_R[\wt(R)] = (r/n)\sum_v \wt(v)$,
	\else
	$$\EX_R[\wt(R)] = \frac{r}{n}\cdot \sum\limits_{v \in V} \wt(v)\;,$$
	\fi
	and thus {(using Claim~\ref{clm:sum_wts})},
	\ifnum\conf=1
	$\EX[X] = \EX_R[\EX[X|R]] = {\frac{1}{n}\cdot }\sum_v \wt(v) = {\avgmom_s}$\;.
	\else
	$$\EX[X] = \EX_R[\EX[X|R]] = {\frac{1}{n}\cdot }\sum\limits_{v \in V} \wt(v) = {\avgmom_s}\;,$$
	which completes the proof.
	\fi
\end{proof}
\fi

\subsection{Conditions on the parameters $r$ and $q$.}
We next state two conditions on the parameters $r$ and $q$, which are used in the algorithm, and then establish several claims, based on the
conditions holding. The conditions are stated in terms of properties of the graph as well as the approximation parameter $\eps$ and a confidence parameter $\delta$.

\begin{enumerate}
	\item {\bf The vertex condition:}
	\ifnum\conf=1
	$r \geq (120 \cdot n \cdot \sum_v \wt(v)^2)/(\eps^2 \cdot \delta \cdot \mom^2)$,
	\else
	$$r \geq \frac{30\cdot n \cdot \sum_v \wt(v)^2}{\eps^2 \cdot \delta \cdot \mom^2}\;.$$
	\fi
	\label{cond:r}
	\item {\bf The edge condition:}
	\ifnum\conf=1
	$q \geq  2000 \cdot m \cdot \momts/(\eps^2\cdot \delta^3\cdot \mom^2)\;.$
	\else
	$$q \geq  \frac{2000 \cdot m \cdot \momts}{\eps^2\cdot \delta^3\cdot \mom^2}\;.$$
	\fi
	\label{cond:q}
\end{enumerate}


\begin{lemma} \label{lem:vert-samp}
	If Condition~\ref{cond:r} holds, then with probability at least $1-\delta/2$, all the following hold.
	\begin{enumerate}
		\item  $\wt(R) \in \left[\left(1-\frac{\eps}{2}\right)
		\cdot \frac{r}{n}\cdot \mom,\left(1+\frac{\eps}{2}\right)\cdot \frac{r}{n}\cdot\mom\right]$.
		\item $|E_R| \leq \frac{12}{\delta}\cdot\frac{r}{n}\cdot m$.
		\item  $\sum\limits_{(v,u) \in E^+_R} \wt\left((v,u)\right)^2 \leq \frac{18}{\delta}\cdot \frac{r}{n}\cdot \momts$.	
	\end{enumerate}
\end{lemma}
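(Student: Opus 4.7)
I would prove the three bullets separately and then combine them by a union bound, exploiting that the randomness lies entirely in the uniform multi-set $R$ of $r$ vertex samples. The unifying observation is that each of the three quantities ($\wt(R)$, $|E_R|$, and $\sum_{(v,u)\in E^+_R}\wt((v,u))^2$) can be written as a sum $\sum_{i=1}^r Z_i$ of i.i.d.\ random variables indexed by the samples $v_i \in R$, so both the expectation and the variance decompose cleanly.

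\emph{Item 1 via Chebyshev.} By Claim~\ref{clm:sum_wts}, $\EX[\wt(R)] = (r/n)\sum_v \wt(v) = (r/n)\mom$. Since the samples are independent, $\var[\wt(R)] = r\cdot \var[\wt(v_1)] \leq (r/n)\sum_v \wt(v)^2$. Plugging into Chebyshev's inequality yields
\[
\Pr\!\left[\,|\wt(R) - (r/n)\mom| \geq (\eps/2)(r/n)\mom\,\right] \;\leq\; \frac{4n\sum_v \wt(v)^2}{\eps^2 r\, \mom^2},
\]
and Condition~\ref{cond:r} makes this at most $\delta/6$ (with slack).

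\emph{Items 2 and 3 via Markov.} For item~2, $\EX[|E_R|] = (r/n)\sum_v d_v = 2m\cdot r/n$, so Markov immediately gives $\Pr[|E_R| > (12/\delta)(r/n)m] \leq \delta/6$; this uses nothing beyond uniform vertex sampling and requires no condition on $r$. For item~3, I would first compute
\[
\EX\Bigl[\sum_{(v,u)\in E^+_R}\!\!\wt((v,u))^2\Bigr] \;=\; \tfrac{r}{n}\!\sum_{(v,u)\in E^+}\!\!(d_v^{s-1}+d_u^{s-1})^2,
\]
use the AM-QM-style bound $(a+b)^2 \leq 2(a^2+b^2)$, and then rewrite
\[
\sum_{(v,u)\in E^+}\!(d_v^{2s-2}+d_u^{2s-2}) \;=\; \sum_{\{v,u\}\in E}(d_v^{2s-2}+d_u^{2s-2}) \;=\; \sum_v d_v\cdot d_v^{2s-2} \;=\; \momts,
\]
which gives an expected value of at most $2(r/n)\momts$. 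Markov then produces the $\delta/9$ (say) failure bound, and the three probabilities combine to at most $\delta/2$.

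\emph{Main obstacle.} The only non-mechanical step is the bound on $\EX\!\bigl[\sum_{(v,u)\in E^+_R}\wt((v,u))^2\bigr]$, since here one has to trade the asymmetry introduced by the $\prec$ ordering (the sum is over $E^+$, not $E$) for a clean estimate in terms of $\momts$. The identity above dissolves that issue because squaring, splitting via $(a+b)^2\le 2(a^2+b^2)$, and re-symmetrizing over the underlying undirected edges recovers exactly $\sum_v d_v^{2s-1}$. After this, the rest is bookkeeping of the constants to verify that a union bound over the three events fits inside $\delta/2$.
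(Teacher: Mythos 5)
Your proposal is correct and follows essentially the same route as the paper: Chebyshev's inequality with $\var[\wt(R)] \leq \frac{r}{n}\sum_v \wt(v)^2$ for the first item, Markov's inequality for the other two (including the same re-symmetrization of the sum over $E^+$ back to $\sum_v d_v^{2s-1} = \momts$), and a union bound. The only difference is the immaterial constant in $(a+b)^2 \leq 2(a^2+b^2)$ versus the paper's factor $3$, which only improves the slack.
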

\ifnum\conf=1
The proof of the first item in Lemma~\ref{lem:vert-samp} follows from Chebyshev's inequality (using
$\var[\wt(R)] \leq \frac{r}{n} \cdot\sum_v \wt(v)^2$), and the proofs of the
other two items follow from Markov's inequality (as well as the definition of $\momts$).
\else

\begin{proof}
	First, we look at the random variable $\wt(R)$. 
	By the definition of $\wt(R)$ and Claim~\ref{clm:sum_wts},
	$\EX[\wt(R)] = (r/n)\cdot \sum_v \wt(v) = (r/n)\cdot \mom$.
	Turning to the variance of $\wt(R)$, since the vertices in $R$ are chosen uniformly at random,
	\begin{eqnarray*}
		\var_R[\wt(R)] = r\cdot \var_v[\wt(v)] = r\Bigg(\frac{1}{n} \cdot\sum_v \wt(v)^2 - \Big(\frac{1}{n}\cdot \sum_v \wt(v)\Big)^2\Bigg) \leq \frac{r}{n} \cdot\sum_v \wt(v)^2.
	\end{eqnarray*}
	By Chebyshev's inequality,
	\begin{eqnarray*}
		\Pr\Big[\Big|\wt(R) - \EX[\wt(R)]\Big| \geq \frac{\eps}{2}\cdot \EX[\wt(R)]\Big]
		&\leq& \frac{4\var[\wt(R)]}{\eps^2 \cdot \EX[\wt(R)]^2} \\
		&=& \frac{4 (r/n)\sum_v \wt(v)^2}{\eps^2 \cdot (r/n)^2 \cdot \mom^2} 
		\;=\; \frac{4 n \sum_v \wt(v)^2}{\eps^2 \cdot \mom^2 \cdot r} \;.
	\end{eqnarray*}
	Applying the lower bound on $r$ from Condition~\ref{cond:r}, 
	this probability is at most $\delta/6$. (Indeed, Condition~\ref{cond:r} was defined as such to
	get this bound.)
	
	The other bounds follow simply from  Markov's inequality. Observe that $\EX[|E_R|] = (r/n)(2m)$, and so
	$\Pr[|E_R| > (12/\delta)(r/n)m] \leq \delta/6$.
	
	The random variable $Y = \sum_{(v,u) \in E^+_R} \wt\left((v,u)\right)^2$
	(which is non-negative), satisfies
	\begin{eqnarray}
	\EX[Y] & = & \frac{r}{n}\cdot  \sum_{(v,u) \in E^+} \wt\left((v,u)\right)^2 = \frac{r}{n}\cdot\sum_{(v,u) \in E^+} (d^{s-1}_v + d^{s-1}_u)^2 \nonumber \\
	& \leq & 3\cdot \frac{r}{n}\cdot \sum_{(v,u) \in E^+} (d^{2s-2}_v + d^{2s-2}_u) = 3 \cdot \frac{r}{n}\cdot\sum_v d^{2s-1}_v = 3 \cdot \frac{r}{n}\cdot\momts\;.
	\label{eq:EX-Y}
	\end{eqnarray}
	By Markov's inequality,
	$\Pr[Y \geq ({18}/\delta)(r/n)\momts] \leq \delta/6$. We apply a union bound to complete the proof.
\end{proof}
\fi

\begin{theorem} \label{thm:cond} If Conditions~\ref{cond:r} and~\ref{cond:q} hold, then
	$X \in [(1-\eps){\avgmom_s}, (1+\eps){\avgmom_s}]$
	with probability at least $1-\delta$.
\end{theorem}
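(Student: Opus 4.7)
The plan is to derive the concentration of $X$ by combining the three ``nice event'' conclusions of Lemma~\ref{lem:vert-samp} with a conditional second-moment argument on the $X_i$'s. Let $\cE$ be the event from Lemma~\ref{lem:vert-samp}, which holds with probability at least $1-\delta/2$ and gives us simultaneously (i) $\wt(R) = (1\pm\eps/2)(r/n)\mom$, (ii) $|E_R| \leq (12/\delta)(r/n)m$, and (iii) $\sum_{(v,u)\in E^+_R} \wt((v,u))^2 \leq (18/\delta)(r/n)\momts$. I will condition on $\cE$ and show that, also with probability at least $1-\delta/2$, the empirical average $\frac{1}{q}\sum_i X_i$ is close to its conditional mean; a union bound then yields the theorem.

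Conditioned on $R$, the $X_i$'s are i.i.d. samples of the weight of a uniform edge in $E_R$, so by Claim~\ref{clm:exp} (restricted to the conditional space) their mean is $\wt(R)/|E_R|$. The second moment is $\EX[X_i^2 \mid R] = \frac{1}{|E_R|}\sum_{(v,u)\in E^+_R}\wt((v,u))^2$, since $X_i$ vanishes on edges oriented the ``wrong way.'' Plugging this into Chebyshev's inequality applied to $\frac{1}{q}\sum_i X_i$ gives a failure probability of order
\[
\frac{|E_R|\cdot \sum_{(v,u)\in E^+_R}\wt((v,u))^2}{q\cdot (\eps/2)^2\cdot \wt(R)^2}.
\]
Substituting bounds (ii), (iii), and the lower bound on $\wt(R)$ from (i), the $(r/n)^2$ factors cancel in numerator and denominator and the expression reduces to $O(m\cdot\momts/(q\cdot\eps^2\cdot\delta^2\cdot\mom^2))$, which by Condition~\ref{cond:q} is at most $\delta/2$ (the constant $2000$ in Condition~\ref{cond:q} is precisely what makes this work with room to spare).

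On the joint good event, $\frac{1}{q}\sum_i X_i \in (1\pm\eps/2)\wt(R)/|E_R|$. Substituting into the definition $X = \frac{|E_R|}{rq}\sum_i X_i$ (see Step~\ref{step:def_X}), the $|E_R|$ factors cancel and one obtains $X \in (1\pm\eps/2)\wt(R)/r$. Then using (i), $X \in (1\pm\eps/2)^2 \mom/n$, which for $\eps\leq 1$ lies inside $[(1-\eps)\avgmom_s,(1+\eps)\avgmom_s]$. A union bound over the two failure events of total probability at most $\delta$ finishes the argument.

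The main ``real'' step is the variance bound for $\sum_i X_i$ conditioned on $R$, because that is where the $\momts$ quantity from Condition~\ref{cond:q} appears naturally (via the estimate $(d_v^{s-1}+d_u^{s-1})^2 \leq 3(d_v^{2s-2}+d_u^{2s-2})$ already used inside Lemma~\ref{lem:vert-samp}, cf.\ \eqref{eq:EX-Y}). I do not expect any obstacle beyond careful bookkeeping of the constants, since the algebra is designed so that the $(r/n)$ factors arising from vertex sampling cancel cleanly, leaving exactly the ratio $m\cdot\momts/\mom^2$ against which $q$ is set in Condition~\ref{cond:q}.
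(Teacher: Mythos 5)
Your proposal follows essentially the same route as the paper's proof: condition on the three bounds of Lemma~\ref{lem:vert-samp}, control $\var[X\mid R]$ via $\EX[X_1^2\mid R]=\frac{1}{|E_R|}\sum_{(v,u)\in E^+_R}\wt((v,u))^2$, apply Chebyshev, and finish with a union bound, so it is correct in substance. Two bookkeeping slips, though: asking for a multiplicative $(1\pm\eps/2)$ deviation around the conditional mean $\wt(R)/|E_R|$ leaves you with $X\in(1\pm\eps/2)^2\avgmom_s$, and $(1+\eps/2)^2=1+\eps+\eps^2/4$ strictly exceeds $1+\eps$; moreover the hidden constant in your Chebyshev bound is $12\cdot 18\cdot 4/(1-\eps/2)^2$, which exceeds $1000$ unless $\eps$ is fairly small, so the constant $2000$ in Condition~\ref{cond:q} does not give ``room to spare'' for all $\eps\le 1$. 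The paper avoids both issues by bounding the deviation of $X$ from $\EX[X\mid R]$ \emph{additively} by $(\eps/2)\avgmom_s$: then the two $\eps/2$ errors add to exactly $\eps$, and the Chebyshev denominator is $\avgmom_s^2$ with no $(1-\eps/2)^{-2}$ loss, which is the one-line fix your write-up needs.
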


\begin{proof}
	Condition on any choice of $R$. We have $\EX[X|R] = ({1}/r) \wt(R)$. Turning to the variance, since the edges $(v_i, u_i)$ are chosen from $E_R$ uniformly at random,
	\ifnum\conf=0
	\begin{eqnarray*}
		\var[X|R] = \left(\frac{{1}}{r}\right)^2 \cdot \left(\frac{|E_R|}{q}\right)^2
		\cdot \var\left[\sum_{i=1}^q X_i\;\Big|\;R\right] & = & \left(\frac{{1}}{r}\right)^2
		\cdot \frac{|E_R|^2}{q}\cdot \var[X_1|R] \\
		& \leq &  \left(\frac{{1}}{r}\right)^2\cdot \frac{|E_R|^2}{q}\cdot \EX[(X_1|R)^2] \\
		& = & \left(\frac{{1}}{r}\right)^2\cdot \frac{|E_R|}{q}\cdot
		\sum_{(v,u) \in E^+_R} \wt((v,u))^2  \\
		& = & \frac{1}{q}\cdot\frac{ |E_R|}{r}\cdot \frac{\sum_{(v,u) \in E^+_R} \wt\left((v,u)\right)^2}{r} \;.
	\end{eqnarray*}
	\else
	it is not hard to verify that
	\[
	\var[X|R] = \left(\frac{{1}}{r}\right)^2 \cdot \left(\frac{|E_R|}{q}\right)^2
	\cdot \var\left[\sum_{i=1}^q X_i\;\Big|\;R\right]
	= \frac{1}{q}\cdot\frac{ |E_R|}{r}\cdot \frac{\sum_{(v,u) \in E^+_R} \wt\left((v,u)\right)^2}{r}\;.
	\]
	\fi
	Let us now condition on $R$ such that the bounds of Lemma~\ref{lem:vert-samp} hold.
	Note that such an $R$ is chosen with probability at least $1-\delta/2$.
	We get
	\ifnum\conf=0
	$$ \var[X|R] \leq \frac{250}{\delta^2} \cdot \frac{1}{q} \cdot  \frac{m }{{n}} \cdot
	\frac{\momts}{{n}} \;.$$
	\else
	$ \var[X|R] \leq \frac{250}{\delta^2} \cdot \frac{1}{q} \cdot  \frac{m }{{n}} \cdot
	\frac{\momts}{{n}}$.
	\fi
	We apply 
	Chebyshev's inequality 
	and invoke Condition~\ref{cond:q}:
	\begin{eqnarray*}
		\Pr\left[\Big|(X|R) - \EX[X|R]\Big| \leq \frac{\eps}{2}\cdot {\avgmom_s}\right] & \leq & \frac{4 \cdot \var[X|R]}{\eps^2 \cdot {\avgmom_s^2}} \leq  \frac{1}{q} \cdot \frac{4  \cdot (250/\delta^2)\cdot  m \cdot \momts}{\eps^2\cdot \mom^2} \leq \frac{\delta}{2} \;.
	\end{eqnarray*}
	By Lemma~\ref{lem:vert-samp}, $\EX[X|R] = ({1}/r)\wt(R) \in [(1-\eps/2){\avgmom_s},(1+\eps/2){\avgmom_s}]$.
	\ifnum\conf=0
	By taking into account both the probability that $R$ does not satisfy one (or more) of the bounds in Lemma~\ref{lem:vert-samp}
	and the probability that $X$ (conditioned on $R$ satisfying these bounds) deviates
	by more than $(\eps/2) {\avgmom_s}$ from its expected value,
	we get  that $|X - {\avgmom_s}| < \eps {\avgmom_s}$ with probability at least $(1-\delta/2)^2 > 1-\delta$.
	\else
	The theorem follows by applying the union bound.
	\fi
\end{proof}

\ifnum\conf=0

\subsection{The Algorithm with Edge Samples}\label{sec:edge-samples}

As an aside, 
{the above analysis can be
	slightly adapted} to prove the result of Aliakbarpour et al.~\cite{ABGPRY16}
on estimating
moments using random edge queries. Observe that we can then simply set $R = V$ and $r=n$ in \mainalg$_s$. {This immediately gives $\wt(R) = \mom$, $|E_R|=2m$, and $\sum_{(v,u) \in E^+_R} \wt((v,u))^2 \leq 3\momts$ (as shown in Equation~\eqref{eq:EX-Y}).
	Similarly to what is shown in the proof of Theorem~\ref{thm:cond},
	$\var[X] = O(q^{-1}\cdot m \cdot \momts)$ (where $X$ is as defined in Step~\ref{step:def_X} of the algorithm).}
{As shown in Equation~\eqref{eq:bound_mu_2s-1},
	$\momts  \leq \mom^{2-1/s}$.}
Thus, if we set $q \geq \frac{c_q}{\eps^2\cdot\delta^3}\cdot\frac{m}{\mom^{1/s}}$ (for a sufficiently large constant $c_q$), we satisfy Condition~\ref{cond:q}.
This is exactly the bound of Aliakbarpour et al.
\fi

\section{Satisfying Conditions \ref{cond:r} and \ref{cond:q} in general graphs}  \label{sec:GRS}

We show how to set $r$ and $q$ to
satisfy Conditions~\ref{cond:r} and~\ref{cond:q} in general graphs.
Our setting of $r$ and $q$ will give us the same query complexity as~\cite{GRS11}
(up to the dependence on $1/\eps$ and $\log n$, on which we improve, and the exponential
dependence on $s$ in~\cite{GRS11}, which we do not incur).
In the next section we show how 
the setting of $r$ and $q$ can be  improved using a degeneracy bound.

For $c_r$ and $c_q$ that are sufficiently large constants, we set
\begin{equation}
r =  
\frac{c_r}{\eps^2\cdot\delta} \cdot \frac{n}{\mom^{1/(s+1)}}\;,
\ \ \ \ \ q =  
\frac{c_q}{\eps^2\cdot\delta^3} \cdot \min\left\{n^{1-1/s}, \frac{n^{s-1/s}}{\mom^{1-1/s}}\right\} \;. \label{eq:cond}
\end{equation}
This setting of parameters requires the knowledge of $\mom$, which is exactly
what we are trying to approximate ({up to the normalization factor of $n$}). A simple geometric search argument
{alleviates the need to know $\mom$.}
{For details see} Section~\ref{sec:search_no_alpha}.

\ifnum\conf=0
In what follows (and elsewhere)  we  make use of H\"older's inequality:
\begin{theorem}[H\"older's inequality]
	For values $p$ and $q$ such that $p,q>1$ and $\frac{1}{p}+\frac{1}{q}=1$,
	\ifnum\conf=0
	\[\sum\limits_{i=1}^k |x_i \cdot y_i| \leq \left(\sum_{i=1}^k |x_i|^p \right)^{\frac{1}{p}} \left(\sum_{i=1}^k |y_i|^q \right)^{\frac{1}{q}}. \]  
	\else
	$\sum\limits_{i=1}^k |x_i \cdot y_i| \leq \left(\sum_{i=1}^k |x_i|^p \right)^{\frac{1}{p}} \left(\sum_{i=1}^k |y_i|^q \right)^{\frac{1}{q}}$.
	\fi
	We refer to $p$ and $q$ as the {\sf conjugates} of the formula.
	\label{thm:holder}
\end{theorem}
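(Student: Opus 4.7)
The plan is to prove H\"older's inequality in the standard textbook way, via Young's inequality followed by a normalization argument. H\"older is a classical result, so I would probably either cite a standard reference in the final paper or include just a one-paragraph sketch; nevertheless, the cleanest self-contained proof proceeds in two steps, and I will outline both.

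First I would establish Young's inequality: for nonnegative reals $a, b$ and conjugate exponents $p, q > 1$ with $1/p + 1/q = 1$, one has $ab \leq a^p/p + b^q/q$. The slickest derivation uses the concavity of $\log$ on $(0, \infty)$: applying Jensen's inequality to the two points $a^p$ and $b^q$ with weights $1/p$ and $1/q$ gives
\[
\log\!\Big(\tfrac{a^p}{p} + \tfrac{b^q}{q}\Big) \;\geq\; \tfrac{1}{p}\log(a^p) + \tfrac{1}{q}\log(b^q) \;=\; \log(ab),
\]
and exponentiating yields Young's inequality (the degenerate cases $a=0$ or $b=0$ being trivial).

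Next I would reduce H\"older to Young via a normalization. Set $A = \big(\sum_{i=1}^k |x_i|^p\big)^{1/p}$ and $B = \big(\sum_{i=1}^k |y_i|^q\big)^{1/q}$. If either $A$ or $B$ equals $0$, then the corresponding $x_i$'s or $y_i$'s all vanish and the inequality is trivial. Otherwise, applying Young's inequality termwise to $|x_i|/A$ and $|y_i|/B$ gives
\[
\frac{|x_i|\,|y_i|}{A\,B} \;\leq\; \frac{|x_i|^p}{p\,A^p} + \frac{|y_i|^q}{q\,B^q}.
\]
Summing over $i$, the right-hand side collapses to $\frac{1}{p} + \frac{1}{q} = 1$ by the definitions of $A$ and $B$, so $\sum_i |x_i y_i| \leq AB$, which is exactly H\"older's inequality.

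There is no real obstacle here: the only point requiring any thought is Young's inequality itself, and once that is in hand, the remainder is purely algebraic bookkeeping together with the normalization trick. I expect this is why the paper presents the theorem as a named result to be invoked rather than spending space on a proof.
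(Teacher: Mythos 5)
Your proof is correct: the Young's-inequality-plus-normalization argument is the standard complete derivation, and both steps (Jensen applied to $\log$, then termwise application after dividing by $A$ and $B$, with the zero cases handled separately) are carried out accurately. The paper itself states H\"older's inequality as a classical result to be invoked and gives no proof, so there is nothing to compare against; your instinct that a citation or brief sketch suffices matches exactly how the paper treats it.
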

\fi

In order to assert that $r$ as set in Equation~\eqref{eq:cond} satisfies Condition~\ref{cond:r}, it suffices
to establish the next lemma.
\begin{lemma}[Condition~\ref{cond:r} holds] \label{lem:cond-r}
	\ifnum\conf=0
	$$\sum_v \wt(v)^2 \leq 4\mom^{2-\frac{1}{s+1}} \;.$$  
	\else
	$\sum_v \wt(v)^2 \leq 4\mom^{2-\frac{1}{s+1}} $.
	\fi
\end{lemma}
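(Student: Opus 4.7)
The plan is to show that the maximum weight of any single vertex is at most $O(\mom^{s/(s+1)})$, and then bound $\sum_v \wt(v)^2 \leq (\max_v \wt(v)) \cdot \sum_v \wt(v)$, invoking Claim~\ref{clm:sum_wts} for the second factor. The target exponent $2 - \tfrac{1}{s+1} = \tfrac{s}{s+1} + 1$ decomposes exactly this way, so any bound of the form $\max_v \wt(v) \leq c\cdot \mom^{s/(s+1)}$ immediately yields the lemma up to the constant $c$.

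To bound $\max_v \wt(v)$, I would introduce the degree threshold $\theta = \mom^{1/(s+1)}$ and split $V$ into $\mH = \{v : d_v > \theta\}$ and $\mL = V \setminus \mH$. A simple averaging argument shows $|\mH| \leq \mom^{1/(s+1)}$: otherwise $\sum_{v \in \mH} d_v^s$ alone would exceed $\mom$. A second observation is that $d^+_v \leq \mom^{1/(s+1)}$ for every $v$. If $d_v \leq \theta$ this is trivial, and if $d_v > \theta$ then every vertex in $\{v\} \cup \Gamma^+(v)$ belongs to $\mH$ and contributes at least $\theta^s = \mom^{s/(s+1)}$ to $\mom$, forcing $d^+_v + 1 \leq \mom^{1/(s+1)}$.

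Using the degree ordering, $\wt(v) \leq 2\sum_{u \in \Gamma^+(v)} d_u^{s-1}$. I split the sum according to whether $u\in\mL$ or $u\in\mH$. The $\mL$-part is bounded termwise by $\mom^{(s-1)/(s+1)}$ and summed over at most $d^+_v \leq \mom^{1/(s+1)}$ terms, giving $\mom^{s/(s+1)}$. For the $\mH$-part I bound by $\sum_{u \in \mH} d_u^{s-1}$ and apply H\"older's inequality with conjugates $s$ and $s/(s-1)$, yielding
\[
\sum_{u\in\mH} d_u^{s-1} \leq |\mH|^{1/s}\Big(\sum_{u\in\mH} d_u^{s}\Big)^{(s-1)/s} \leq \mom^{1/(s(s+1))}\cdot \mom^{(s-1)/s} = \mom^{s/(s+1)}.
\]
Combining, $\wt(v) \leq 4\mom^{s/(s+1)}$.

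Finally, $\sum_v \wt(v)^2 \leq (\max_v \wt(v)) \sum_v \wt(v) \leq 4\mom^{s/(s+1)}\cdot \mom = 4\mom^{2-1/(s+1)}$, as required. The only slightly delicate step is the H\"older bound on the high-degree contribution; the remaining pieces are arithmetic consequences of the threshold choice $\theta = \mom^{1/(s+1)}$, which is tuned precisely so that $|\mH|\cdot \theta^s \approx \mom$ and $d^+_v \cdot \theta^{s-1} \approx \mom^{s/(s+1)}$ balance out.
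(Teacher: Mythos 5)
Your proposal is correct and follows essentially the same route as the paper's proof: the same threshold $\theta=\mom^{1/(s+1)}$, the same bounds $|\mH|\leq\mom^{1/(s+1)}$ and $d^+_v\leq\mom^{1/(s+1)}$, the same split of $\Gamma^+(v)$ into its $\mL$- and $\mH$-parts with H\"older applied to the latter, and the same final step $\sum_v \wt(v)^2 \leq \max_v\wt(v)\cdot\mom$. The only cosmetic difference is that you verify $d^+_v\leq\mom^{1/(s+1)}$ by a direct counting argument rather than the paper's contradiction with the bound on $|\mH|$, which is equivalent.
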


\begin{proof}
	Let $\theta = \mom^{1/(s+1)}$ be a degree threshold. We define
	\ifnum\conf=1
	$\mH \eqdef \{v: d_v > \theta \}$, 
	$\mL \eqdef V\setminus H$. 
	\else
	$$\mH \eqdef \{v: d_v > \theta \} \;\;\;\mbox{ and }\;\;\; \mL \eqdef V\setminus H\;.$$
	\fi
	This partition into ``high-degree'' vertices ($\mH$) and ``low-degree'' vertices ($\mL$) will
	be useful in upper bounding  the maximum weight $\wt(v)$ of a vertex $v$, and hence
	upper bounding $\sum_v \wt(v)^2$. Details follow.
	
	We first observe that
	$|\mH| \leq \mom^{1/(s+1)}$. This is true since
	otherwise,
	$\sum_{v\in \mH} d_v^s > \mom^{1/(s+1)}\cdot \mom^{\frac{s}{s+1}} = \mom $,	which is a contradiction. 	 We claim that this upper bound on $|\mH|$ implies that
	\begin{equation}
	\max_v d^+_v \leq \mom^{1/(s+1)}\;.
	\label{eq:max-d-plus}
	\end{equation}
	To verify this, assume, contrary of the claim, that for some $v$, $d^+_v > \mom^{1/(s+1)}$. But then there are at least $\mom^{1/(s+1)}$ vertices $u$ such that $d_u \geq d_v \geq d^+_v > \mom^{1/(s+1)}$. This contradicts the bound on $|\mH|$.
	
	It will also be useful to bound $\sum_{u\in \mH} d_u^{s-1}$.
	By H\"older's inequality with conjugates $s$ and $s/(s-1)$
	\ifnum\conf=1
	(a statement of H\"older's inequality can be found in the full version of the paper)
	\fi
	and the bound on $|\mH|$,
	\begin{equation}
	\sum\limits_{u\in \mH} d_u^{s-1} =	\sum\limits_{u\in \mH} 1\cdot d_u^{s-1} \leq |\mH|^{1/s}\left(\sum \limits_{u \in \mH} d_u^s\right)^{\frac{s-1}{s}}  \leq \mom^{\frac{1}{s(s+1)}}\cdot\mom^{\frac{s-1}{s}} \leq  \mom^{\frac{s}{s+1}} \;.
	\label{eq:sum-H-du-s-1}
	\end{equation}
	
	We now turn to bounding $\max_v\{\wt(v)\}$.
	By the definition of $\wt(v)$ and the degree ordering,
	\begin{equation}
	\wt(v) = \sum_{u\in \Gamma^+(v)}(d_v^{s-1} + d_u^{s-1}) \leq 2\sum_{u \in \Gamma^+(v)} d^{s-1}_u
	= 2\sum_{u \in \Gamma^+(v) \cap \mL} d^{s-1}_u + 2\sum_{u \in \Gamma^+(v) \cap \mH} d^{s-1}_u \;.
	\label{eq:max-wt-v}
	\end{equation}
	For the first term on the right-hand-side of Equation~\eqref{eq:max-wt-v},
	recall that $d_u \leq \mom^{1/(s+1)}$ for $u \in \mL$.
	Thus, by Equation~\eqref{eq:max-d-plus},
	\begin{equation}
	\sum_{u \in \Gamma^+(v) \cap \mL} d^{s-1}_u \leq d^+_v \cdot \mom^{\frac{s-1}{s+1}} \leq \mom^{\frac{s}{s+1}} \;.
	\label{eq:sum-u-mL}
	\end{equation}
	For the second term, using $ \Gamma^+(v) \cap \mH \subseteq \mH$ and
	applying Equation~\eqref{eq:sum-H-du-s-1},
	\begin{equation}
	\sum_{u \in \Gamma^+(v) \cap \mH} d^{s-1}_u \leq \sum_{u \in \mH} d^{s-1}_u
	\leq \mom^{\frac{s}{s+1}}\;.
	\label{eq:sum-u-mH}
	\end{equation}
	Finally,
	$$\sum_v \wt(v)^2 \leq \max_v \{\wt(v)\}\cdot \sum_v \wt(v) \leq \mom^{2-1/(s+1)}\;,$$
	where  the second inequality follows by combining Equations~\eqref{eq:max-wt-v}--\eqref{eq:sum-u-mH}
	to get an upper bound on $\max_v\{\wt(v)\}$ and applying Claim~\ref{clm:sum_wts}.
\end{proof}

The next lemma implies that Condition~\ref{cond:q} holds for $q$ as set in Equation~\eqref{eq:cond}.
\begin{lemma}[Condition~\ref{cond:q} holds] \label{lem:cond-q}
	\ifnum\conf=0
	$$ \min\left\{n^{1-1/s}, \frac{n^{s-1/s}}{\mom^{1-1/s}}\right\} \geq 2m \cdot \frac{\momts}{\mom^2} \;.$$
	\else
	$ \min\left\{n^{1-1/s}, \frac{n^{s-1/s}}{\mom^{1-1/s}}\right\} \geq 2m \cdot \frac{\momts}{\mom^2}$.
	\fi
\end{lemma}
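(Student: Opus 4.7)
The plan is to prove both halves of the min-inequality separately via three elementary upper bounds, after which the stated bound (with the slack factor of $2$) follows immediately.

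First I would establish three auxiliary facts. (a) By H\"older's inequality with conjugates $s$ and $s/(s-1)$ applied to the constant sequence and the degree sequence, $2m = \sum_v d_v \leq n^{1-1/s}\mom^{1/s}$. (b) Since $d_v^s \leq \mom$ for every $v$, we have $d_v^{s-1}\leq \mom^{(s-1)/s} = \mom^{1-1/s}$; summing $d_v^{s-1}\cdot d_v^s$ yields $\momts \leq \mom^{1-1/s}\cdot\mom = \mom^{2-1/s}$. (c) Trivially $d_v \leq n-1 \leq n$, so $d_v^{s-1}\leq n^{s-1}$, giving $\momts = \sum_v d_v^{s-1}d_v^s \leq n^{s-1}\mom$.

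Next I would combine (a) with (b) to handle the first term in the min:
\[
2m\cdot \frac{\momts}{\mom^2} \;\leq\; n^{1-1/s}\mom^{1/s}\cdot \frac{\mom^{2-1/s}}{\mom^2} \;=\; n^{1-1/s}\,.
\]
Similarly, combining (a) with (c) gives the second term:
\[
2m\cdot \frac{\momts}{\mom^2} \;\leq\; n^{1-1/s}\mom^{1/s}\cdot \frac{n^{s-1}\mom}{\mom^2} \;=\; \frac{n^{s-1/s}}{\mom^{1-1/s}}\,.
\]
Taking the minimum of the right-hand sides proves the lemma (in fact with constant $1$, so the factor $2$ in the statement is slack).

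There is no real obstacle here; the entire argument is a short chain of elementary inequalities. The only point to watch is to use the appropriate upper bound on $\momts$ for each of the two terms in the min, namely the ``global'' bound $\mom^{2-1/s}$ (coming from $\max_v d_v \leq \mom^{1/s}$) for the first term, and the ``trivial'' bound $n^{s-1}\mom$ (coming from $d_v \leq n$) for the second term.
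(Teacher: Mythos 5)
Your proposal is correct and follows essentially the same route as the paper: bound $2m \leq n^{1-1/s}\mom^{1/s}$ via H\"older, bound $\momts$ by both $\mom^{2-1/s}$ and $n^{s-1}\mom$, and multiply. The only cosmetic difference is that you derive $\momts \leq \mom^{2-1/s}$ from the pointwise bound $d_v^{s-1}\leq \mom^{(s-1)/s}$ rather than from the norm inequality $\sum_v d_v^{2s-1}\leq (\sum_v d_v^s)^{(2s-1)/s}$ used in the paper, which is an equally valid (and equally elementary) justification of the same estimate.
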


\begin{proof} We can bound $\momts$ in two ways.
	First, by a standard norm inequality, since $s \geq 1$,
	\begin{align}
	 \momts = \sum_v d^{2s-1}_v \leq \left(\sum_v d^s_v\right)^{(2s-1)/s} = \mom^{2-1/s} \;. \label{eq:bound_mu_2s-1}
	 \end{align}
	We can also use the trivial bound $d_v \leq n$ and get $\momts \leq n^{s-1} \cdot\mom$.
	Thus, $\momts \leq \min \{\mom^{2-1/s}, n^{s-1}\cdot \mom\}$.
	By applying H\"{o}lder's inequality with conjugates $s/(s-1)$  and $s$ we get that
	\begin{equation}
	2m = \sum_v 1\cdot d_v \leq
	n^{(s-1)/s}\cdot \left(\sum_v d^s_v\right)^{1/s} = n^{1-1/s}\cdot \mom^{1/s}\;.
	\label{eq:m-n-mu}
	\end{equation}
	We multiply the bound by $\momts$ to complete the proof.
\end{proof}

\section{The Degeneracy Connection}\label{sec:arb}
\Sesh{The degeneracy, or the coloring number,
 of a graph $G = (V,E)$ is the maximum
value, over all subgraphs $G'$ of $G$, of the minimum degree in $G'$.
 In this definition, we can replace ``minimum" by ``average''
to get a $2$-factor approximation to the degeneracy
(refer to~\cite{wiki-degen}; Theorem 2.4.4 and Corollary 5.2.3 of~\cite{diestel}). Abusing notation, it will be convenient for us to define
$\alpha(G) =\max_{S \subseteq V} \left\{ \left\lceil \frac{|E(S)|}{|S|-1} \right\rceil \right\}$.
}
\old{Recall that the arboricity of  $G = (V,E)$, denoted $\alpha(G)$ is
the minimum number of forests into which $E$ can be partitioned.}
\old{We shall make use of the next theorem, which  is due to Nash-Williams~\cite{nash1961edge,nash1964decomposition}.
\begin{theorem}[Nash-Williams' equality] \label{thm:Nash-Williams}
	For every graph $G$,
	\ifnum\conf=0
	\[\alpha(G) = \max_{S \subseteq V} \left\{ \left\lceil \frac{|E(S)|}{|S|-1} \right\rceil \right\}\;, \]
	\else
	$\alpha(G) = \max_{S \subseteq V} \left\{ \left\lceil \frac{|E(S)|}{|S|-1} \right\rceil \right\}$,
	\fi
	where $E(S)$ denotes the set of edges in the subgraph induced by $S$.
\end{theorem}
}

We also make the following observation regarding the relation between $\alpha(G)$ and $\mom(G)$.
\begin{claim}\label{clm:alpha-mom}
	For every graph $G$,
	$\alpha(G) \leq \mom(G)^{\frac{1}{s+1}}$.
\end{claim}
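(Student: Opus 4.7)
The plan is to exhibit inside $G$ a set of at least $\alpha + 1$ vertices each of $G$-degree $\geq \alpha$. From such a set we would immediately get $\mom(G) \geq (\alpha+1)\,\alpha^s \geq \alpha^{s+1}$, which on taking $(s+1)$-th roots gives the desired inequality.

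First I would choose $S \subseteq V$ maximizing the (un-ceilinged) density $\rho_S := |E(S)|/(|S|-1)$, so that $\alpha = \lceil \rho_S \rceil$. The small cases ($|S|\leq 2$, which force $\alpha = 1$) are handled directly since any non-isolated edge already gives $\mom \geq 2 \geq 1^{s+1}$; assume $|S| \geq 3$. The key step is a local maximality argument: for any $v \in S$, the subset $S \setminus \{v\}$ cannot have larger density, so
\[
\frac{|E(S)| - d_v^{(S)}}{|S| - 2} \;\leq\; \frac{|E(S)|}{|S|-1},
\]
where $d_v^{(S)}$ denotes the degree of $v$ in the induced subgraph $G[S]$. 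Cross-multiplying and simplifying yields $d_v^{(S)} \geq \rho_S > \alpha - 1$, and integrality of $d_v^{(S)}$ then gives $d_v^{(S)} \geq \alpha$ for every $v \in S$.

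Since each vertex of $G[S]$ has at least $\alpha$ neighbors inside $S$, we must have $|S| \geq \alpha + 1$, and of course $d_v \geq d_v^{(S)} \geq \alpha$ in $G$ for each such $v$. Summing $s$-th powers over these vertices gives $\mom(G) \geq \sum_{v \in S} d_v^s \geq (\alpha+1)\,\alpha^s \geq \alpha^{s+1}$, as required. There is no real obstacle here beyond correctly threading the ceiling in the definition of $\alpha$ through the local argument, which is resolved by the integrality of $d_v^{(S)}$; the whole proof is essentially the classical ``densest subgraph has large minimum degree'' argument combined with a one-line calculation.
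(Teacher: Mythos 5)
Your proof is correct, and it takes a genuinely different route from the paper's. Both arguments start from a densest subgraph, but the paper works with the \emph{average} degree of a maximizing set $S$: it notes $\bar d(S)\ge\alpha$, deduces $|S|\ge\alpha$, and then invokes H\"older's inequality with conjugates $s/(s-1)$ and $s$ to pass from $\sum_{v\in S}d_v\ge\alpha|S|$ to $\alpha\,|S|^{1/s}\le \mom^{1/s}$, finishing with $|S|\ge\alpha$. You instead prove a \emph{pointwise} statement via the classical local-deletion argument: choosing $S$ to maximize the un-ceilinged ratio $|E(S)|/(|S|-1)$ and comparing with $S\setminus\{v\}$ gives $d_v^{(S)}\ge\rho_S>\alpha-1$, hence $d_v^{(S)}\ge\alpha$ for every $v\in S$ by integrality, so $|S|\ge\alpha+1$ and $\mom\ge\sum_{v\in S}d_v^s\ge(\alpha+1)\alpha^s\ge\alpha^{s+1}$ directly, with no H\"older step. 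Your route is more elementary (it only uses the standard ``densest subgraph has minimum degree at least its density'' fact plus a one-line power-sum bound) at the cost of the deletion argument and the small-$|S|$ case analysis; the paper's route is shorter given that H\"older is already a workhorse elsewhere in the paper. A side benefit of your version: by maximizing the un-ceilinged ratio you avoid a subtlety in the paper's first step, where an arbitrary maximizer of the \emph{ceilinged} ratio need not satisfy $\bar d(S)\ge\alpha$ (e.g.\ an edge plus an isolated-in-$S$ third vertex in a matching), so the paper's choice of $S$ implicitly needs the same refinement you made explicit.
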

\ifnum\conf=0
\begin{proof}
	Let $S$ be a subset of vertices that maximizes $\left\lceil \frac{|E(S)|}{|S|-1} \right\rceil$,
	and let $\bar{d}(S)$ denote the average degree in the subgraph induced by $S$.
	Then $\bar{d}(S) = \frac{2|E(S)|}{|S|} \geq \left\lceil \frac{|E(S)|}{|S|-1}\right\rceil = \alpha(G)$.
	Hence, $|S| \geq \alpha(G)$, and by H\"older's inequality (Theorem~\ref{thm:holder}) with conjugates $s/(s-1)$ and $s$,
	\[\alpha(G)\cdot |S| \leq \sum\limits_{v \in S }d_v \leq \left(\sum\limits_{v \in S }d_v^s\right)^{\frac{1}{s}} \cdot |S|^{1-\frac{1}{s}},\] implying that
	$\alpha(G)\cdot |S|^{\frac{1}{s}} \leq \mom(G)^{\frac{1}{s}}$. Since $|S| \geq \alpha(G)$, we get that $\alpha(G) \leq \mom(G)^{\frac{1}{s+1}}$.
\end{proof}
\fi

\bigskip
In this section, we show that the following setting of parameters for \mainalg$_s$
satisfies Conditions~\ref{cond:r} and~\ref{cond:q}, for every graph $G$ with degeneracy at most $\alpha$ (i.e., $\alpha(G) \leq \alpha$),
and for appropriate constants $c_r$ and $c_q$.
\begin{equation}
r = \frac{c_r}{\eps^{2}\cdot \delta}\cdot {\min\left\{\frac{n}{\mom^{1/(s+1)}}\;,\; 2^s\cdot n\cdot \log^2n \cdot\left(\frac{\alpha}{\mom}\right)^{1/s} \right\}} \;,
\label{eq:cond-r-arb}
\end{equation}
\begin{equation}
q = \frac{c_q}{\eps^2\cdot \delta^3}\cdot \min\left\{\frac{n\cdot \alpha}{\mom^{1/s}}, \frac{n^s\cdot \alpha}{\mom},n^{1-1/s}, \frac{n^{s-1/s}}{\mom^{1-1/s}}\right\} \;.\label{eq:cond-q-arb}
\end{equation}
%

Clearly the setting of $r$ and $q$ in Equation~\eqref{eq:cond-r-arb} and Equation~\eqref{eq:cond-q-arb} respectively, {can only improve} on the setting of $r$ and $q$ for the general case in Equation~\eqref{eq:cond} (Section~\ref{sec:GRS}).

Our main challenge is in proving that Condition~\ref{cond:r} holds for
$r$ as set in Equation~\eqref{eq:cond-r-arb} (when the graph
has degeneracy at most $\alpha$).
Here too, the goal is to upper bound $\sum_v \wt(v)^2$.
However, as opposed to the proof of Lemma~\ref{lem:cond-r} in Section~\ref{sec:GRS},
	where we simply obtained an upper bound on $\max_v\{\wt(v)\}$ (and bounded
	$\sum_v \wt(v)^2$ by $\max_v\{\wt(v)\}\cdot \mom$), here the analysis is more refined,
	and uses the degeneracy bound.
	For details see the proof of our main lemma, stated next.
\iftrue

\begin{lemma}[Condition~\ref{cond:r} holds] \label{lem:cond-r-arb}
	For a sufficiently large constant $c$,
	$$\sum_v \wt(v)^2 \leq c\cdot 
	2^s\cdot  \alpha^{1/s}\cdot  \mom^{2-1/s}\cdot \log^2n\;.$$
\end{lemma}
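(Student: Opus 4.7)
I would prove this by partitioning $V$ with a carefully chosen threshold, combined with geometric bucketing of the degree sequence. Set $\theta = (\mom/\alpha)^{1/s}$ and let $\mathcal{H} = \{v : d_v > \theta\}$, $\mathcal{L} = V \setminus \mathcal{H}$. From $|\mathcal{H}|\,\theta^s \leq \sum_{v \in \mathcal{H}} d_v^s \leq \mom$ we get $|\mathcal{H}| \leq \alpha$, and Claim~\ref{clm:alpha-mom} gives $\alpha^{s+1} \leq \mom$.

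For $v \in \mathcal{H}$, every $u \in \Gamma^+(v)$ has $d_u \geq d_v > \theta$, so $\Gamma^+(v) \subseteq \mathcal{H}$, and $\wt(v) \leq 2\sum_{u \in \mathcal{H}} d_u^{s-1}$. H\"older with $|\mathcal{H}| \leq \alpha$ then gives $\sum_{u \in \mathcal{H}} d_u^{s-1} \leq \alpha^{1/s}\mom^{(s-1)/s}$, so $\sum_{v \in \mathcal{H}} \wt(v)^2 \leq 4\alpha^{1+2/s}\mom^{2(s-1)/s}$; the bound $\alpha^{s+1} \leq \mom$ collapses this to $4\alpha^{1/s}\mom^{2-1/s}$, already within the target.

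For $v \in \mathcal{L}$, split $\wt(v) \leq W_L(v) + W_H(v)$ according as $u \in \Gamma^+(v)$ lies in $\mathcal{L}$ or $\mathcal{H}$, and bound the two pieces separately. For the ``high'' piece I would expand
\[
\sum_{v \in \mathcal{L}} W_H(v)^2 \leq 4 \sum_{u_1, u_2 \in \mathcal{H}} d_{u_1}^{s-1}d_{u_2}^{s-1}\,\kappa(u_1, u_2),
\]
with $\kappa(u_1,u_2) = |\{v : u_1, u_2 \in \Gamma^+(v)\}|$. The diagonal $u_1 = u_2$ contributes at most $\momts \leq \mom^{2-1/s}$ by Equation~\eqref{eq:bound_mu_2s-1}. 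For the off-diagonal, use $\kappa(u_1,u_2) \leq |\Gamma(u_1)\cap\Gamma(u_2)| \leq \sqrt{d_{u_1}d_{u_2}}$ to reduce to $(\sum_{u \in \mathcal{H}} d_u^{s-1/2})^2$, and apply H\"older on $\mathcal{H}$ with the exponents $p = 2s/(2s-1)$ and $q = 2s$: this gives $\sum_{u \in \mathcal{H}} d_u^{s-1/2} \leq \alpha^{1/(2s)}\mom^{1-1/(2s)}$, squaring to exactly $\alpha^{1/s}\mom^{2-1/s}$.

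The ``low'' piece $\sum_{v \in \mathcal{L}} W_L(v)^2$ is where the $\log^2 n$ factor is paid: the analogous H\"older step fails because $|\mathcal{L}|$ can be as large as $n$. I would instead bucket vertices as $B_k = \{v : 2^k \leq d_v < 2^{k+1}\}$ and $H_k = \bigcup_{i\geq k} B_i$, exploiting the degeneracy inequality $\sum_{v \in H_k} d_v^+ \leq \alpha|H_k|$ (valid because every out-edge from $v \in H_k$ lies inside $G[H_k]$) together with $|H_k| \leq \mom/2^{ks}$. For $v \in B_k \cap \mathcal{L}$, write $W_L(v) \leq 2^s \sum_{k \leq j \leq \log\theta} b_j(v)\cdot 2^{j(s-1)}$ with $b_j(v) = |\Gamma^+(v) \cap B_j|$, apply Cauchy--Schwarz across $j$ (first $\log n$), and bound $\sum_{v \in B_k} b_j(v)^2$ using $b_j(v) \leq 2^{k+1}$ together with $\sum_{v \in B_k} b_j(v) \leq \alpha|H_k|$. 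Summing over the $v$-buckets $k$ yields the second $\log n$.

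The main technical obstacle lies in this last step: one must choose carefully among the competing bounds on $\sum_{v \in B_k} b_j(v)^2$ (in particular, when to invoke $|H_k| \leq \mom/2^{ks}$ versus $|B_j| \leq \mom/2^{js}$) so that, after summing the two geometric series and using $2m \leq n^{(s-1)/s}\mom^{1/s}$, the factor $\theta^{2(s-1)} = (\mom/\alpha)^{2(s-1)/s}$ balances the $\alpha$ from degeneracy to produce $\alpha^{1/s}\mom^{2-1/s}$ rather than a spurious $n^{(s-1)/s}$. The degenerate $s=1$ case (in which some geometric sums become arithmetic) is absorbed into the overall $\log^2 n$ allowance.
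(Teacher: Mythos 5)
Your treatment of the vertices of $\mathcal{H}$ and of the ``high'' piece $W_H$ is correct, and the codegree argument ($\kappa(u_1,u_2)\le\min\{d_{u_1},d_{u_2}\}$ plus H\"older over $\mathcal{H}$ with $|\mathcal{H}|\le\alpha$) is a clean alternative to anything in the paper. But the step you yourself flag as the main obstacle --- the low piece --- is where the degeneracy must actually be exploited, and the recipe you give for it fails. Concretely, take $s=2$ and a parameter $D$: let there be $h=\sqrt{D}$ ``high'' vertices of degree $D$ and $\ell=hD=D^{3/2}$ ``low'' vertices of degree $d=\sqrt{D}$, where one special low vertex is adjacent to all $h$ high vertices, every other low vertex has exactly one high neighbor and $d-1$ pendant leaves, and the high vertices' degrees are exactly filled by these edges. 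This graph is a forest, so $\alpha(G)=1$; moreover $\mom\approx D^{5/2}$, so the target of Lemma~\ref{lem:cond-r-arb} is $O(\mom^{3/2}\log^2 n)=O(D^{15/4}\log^2 D)$, and your threshold is $\theta=(\mom/\alpha)^{1/2}\approx D^{5/4}>D$, so all of these edges land in your low piece. For the bucket pair $2^k\approx d$, $2^j\approx D$, your prescription ``$b_j(v)\le 2^{k+1}$ together with $\sum_{v\in B_k}b_j(v)\le\alpha|H_k|$'' gives $\sum_{v\in B_k}b_j(v)^2\lesssim \sqrt{D}\cdot D^{3/2}=D^2$, hence a contribution of order $2^{2j(s-1)}\cdot D^2=D^4$, which exceeds the target by a factor of about $D^{1/4}$ (the true value is $(\ell+h^2)D^2\approx D^{7/2}$). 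The competing bound you mention cannot rescue this: here $|E^+(B_k,B_j)|\approx D^{3/2}\gg\alpha|B_j|$, because degeneracy only bounds the edges between two buckets by $\alpha(|B_k|+|B_j|)$, which is dominated by the (large) low side.

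The underlying problem is that a $\max\cdot\sum$ bound on $\sum_{v\in B_k}b_j(v)^2$ is hopeless when $b_j(\cdot)$ is very non-uniform over $B_k$ (one heavy vertex plus many light ones, as above). The missing idea --- and the place the paper actually uses the degeneracy --- is a second stratification of the source vertices by their out-degree into the target bucket in multiples of $\alpha$: the sets $V_{i,j}$ of Definition~\ref{def:U_i}. This gives the dichotomy that makes everything work: either $|\Gamma^+_{U_i}(v)|\le 2\alpha$, in which case the square is at most $2\alpha$ times the linear term, or else $|V_{i,j}|\le|U_i|$, so the edge count is at most $2\alpha|U_i|$, and $|U_i|$ is small because each of its vertices has degree at least $\max\{2^{i-1},2^{j-2}\alpha\}$ (Claim~\ref{clm:E+_U_V}). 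Without this second bucketing no choice among the bounds you list closes the gap; with it, your outline essentially collapses into the paper's proof. (A minor point: your use of $\alpha^{s+1}\le\mom$ is legitimate provided you note it suffices to prove the lemma for $\alpha=\alpha(G)$, since the right-hand side is increasing in $\alpha$; Claim~\ref{clm:alpha-mom} applies only to $\alpha(G)$, not to an arbitrary upper bound.)
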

In order to prove the lemma we first introduce the following definitions and claim.

\begin{definition}\label{def:Gamma+}
	For a set $S$ and a vertex $u$, let $\Gamma_S(u)$ denote the set $\Gamma(u) \cap S$, and let $\Gamma^+_S(u)$ denote the set $\Gamma^+(u) \cap S$. For two sets of vertices $S$ and $T$ (which are not necessarily disjoint),
	let $E^+(S,T) \eqdef \{(u,v): (u,v)\in E^+, u\in S, v\in T\}$.
\end{definition}

\begin{definition}\label{def:U_i}
	We partition the vertices (with degree at least $1$) according to their degree.
	For $0 \leq i \leq \lceil \log n \rceil$, let
	$$U_i \eqdef \{u \in V: d_u \in (2^{i-1},2^i]\} \;.$$
	For each $i$ we partition the vertices in $V$ according to the number of outgoing edges that
	they have to $U_i$. Specifically,
	for $1 \leq j \leq \lceil \log(n/ \alpha) \rceil$, define
	$$V_{i,j} \eqdef \Big\{v \in V : \  |\Gamma^+_{U_i}(v)| \in \left(2^{j-1}\alpha,2^j\alpha\right]\Big\}\;.$$
	Also define $V_{i,0} \eqdef \Big\{v \in V : \ |\Gamma^+_{U_i}(v)| \leq \alpha\Big\}$.
	Hence, $\{V_{i,j}\}_{j=0}^{\lceil \log(n/\alpha)\rceil}$ is a partition of $V$ for each $i$.
\end{definition}
A central building block in the proof of Lemma~\ref{lem:cond-r-arb} is the next claim.
This claim establishes an upper bound on the number of edges going from vertices in $V_{i,j}$ to vertices
in $\whU$, for every $i$ and $j$ (within the appropriate intervals). 
In the proof of this claim we exploit the degeneracy bound.

\begin{claim} \label{clm:E+_U_V}
	Let $V_{i,j}$ and $\whU$ be as defined in Definition~\ref{def:U_i}, and let $E^+(V_{i,j}, \whU)$ be as defined in Definition~\ref{def:Gamma+}. For every $ 0 \leq i \leq \lceil \log n\rceil$ and every $2 \leq j \leq \lceil \log (n/\alpha) \rceil$,
\ifnum\conf=0
\[|E^+(V_{i,j},\whU)| \leq \mom\cdot 2^{-((i-1)(s-1)+j-1)}\;.\]
\fi
\ifnum\conf=1
$|E^+(V_{i,j},\whU)| \leq \mom\cdot 2^{-((i-1)(s-1)+j-1)}\;.$
\fi
\end{claim}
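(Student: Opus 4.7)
The plan is to combine three ingredients: a weight-based upper bound on $|V_{i,j}|$, a moment-based upper bound on $|\whU|$, and the degeneracy bound applied to the induced subgraph on $V_{i,j} \cup \whU$. The claim is trivial when $V_{i,j}$ is empty, so I assume henceforth that it contains some vertex $v$.

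For the weight bound on $|V_{i,j}|$, I would observe that each $v \in V_{i,j}$ has $|\Gamma^+_{U_i}(v)| > 2^{j-1}\alpha$ outgoing edges into $\whU$, each contributing at least $d_u^{s-1} > 2^{(i-1)(s-1)}$ to $\wt(v)$; hence $\wt(v) > 2^{j-1}\alpha \cdot 2^{(i-1)(s-1)}$. Summing over $V_{i,j}$ and using $\sum_v \wt(v) = \mom$ (Claim~\ref{clm:sum_wts}) gives $\alpha|V_{i,j}| < \mom/2^{(i-1)(s-1)+j-1}$. For the moment bound on $|\whU|$, every $u \in \whU$ has $d_u^s > 2^{(i-1)s}$, so summing yields $|\whU| < \mom/2^{(i-1)s}$. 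Finally, the degeneracy of $G$ applied to the induced subgraph on $V_{i,j} \cup \whU$ gives $|E^+(V_{i,j},\whU)| \leq \alpha(|V_{i,j}| + |\whU|)$.

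Plugging these in, the $\alpha|V_{i,j}|$ term already matches the target bound; what remains is to control $\alpha|\whU|$. The key closing observation is that if $V_{i,j} \neq \emptyset$ then $\alpha < 2^{i-j+1}$: the distinguished $v$ satisfies $2^{j-1}\alpha < |\Gamma^+_{U_i}(v)| \leq d_v$, and any outgoing edge from $v$ to $\whU$ forces $d_v \leq d_u \leq 2^i$, so $2^{j-1}\alpha < 2^i$. Combined with the bound on $|\whU|$, this yields $\alpha|\whU| < 2\mom/2^{(i-1)(s-1)+j-1}$.

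The hard part will be recognizing this last step. The degeneracy bound by itself leaves an $\alpha|\whU|$ term whose naive treatment $\alpha \cdot \mom/2^{(i-1)s}$ is too weak by a factor of up to $2^j$, and the gap is closed only by the hidden consequence $\alpha \cdot 2^{j-1} < 2^i$ that is inherited from the very definition of $V_{i,j}$ being non-empty---a coupling between the membership condition of $V_{i,j}$ and the degeneracy-feasibility of the induced bipartite structure. Adding the $\alpha|V_{i,j}|$ and $\alpha|\whU|$ pieces delivers the claimed inequality (the derivation as outlined produces a small absolute constant, which I expect to be absorbed by slightly tighter accounting elsewhere in the paper).
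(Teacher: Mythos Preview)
Your argument is correct and rests on the same key observation as the paper: once $V_{i,j}\neq\emptyset$, every vertex of $U_i$ has degree at least of order $2^{j}\alpha$ (your formulation $\alpha<2^{i-j+1}$ is just the rearrangement of this, using $d_u\le 2^i$). The paper packages the same ingredients slightly differently: rather than bounding $\alpha|V_{i,j}|$ and $\alpha|U_i|$ separately, it first combines the degeneracy inequality $|E^+(V_{i,j},U_i)|\le\alpha(|V_{i,j}|+|U_i|)$ with the lower bound $|E^+(V_{i,j},U_i)|\ge 2^{j-1}\alpha|V_{i,j}|$ to deduce $|V_{i,j}|\le|U_i|$, reducing everything to $2\alpha|U_i|$. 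It then bounds $|U_i|$ via $d_u^s\ge d_u^{s-1}\cdot d_u\ge 2^{(i-1)(s-1)}\cdot 2^{j-2}\alpha$, folding the $\alpha$ directly into the moment bound rather than isolating it as you do. Your route via the weight sum for $\alpha|V_{i,j}|$ is a clean alternative to the paper's size comparison; both arrive at the same place up to the small constant you flagged (which is indeed absorbed by the unspecified constant in Lemma~\ref{lem:cond-r-arb}).
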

\begin{proof}
	Since $G$ has degeneracy at most $\alpha$,
	\old{by Nash-Williams' equality (Theorem~\ref{thm:Nash-Williams}),}
\ifnum\conf=0
	\[ |E^+(V_{i,j},U_i)| = |E^+(V_{i,j},\whU)|
	= |E(V_{i,j},\whU)| \leq \alpha\cdot(|V_{i,j}| + |\whU|) \;.\]
\fi
\ifnum\conf=1
	$ |E^+(V_{i,j},U_i)| = |E^+(V_{i,j},\whU)|
	= |E(V_{i,j},\whU)| \leq \alpha\cdot(|V_{i,j}| + |\whU|) \;.$
\fi
On the other hand, by the definition of $V_{i,j}$,
\ifnum\conf=0
	\[|E^+(V_{i,j},\whU)| \geq 2^{j-1}\cdot \alpha\cdot| V_{i,j}|\;.\]
\fi
\ifnum\conf=1
$|E^+(V_{i,j},\whU)| \geq 2^{j-1}\cdot \alpha\cdot| V_{i,j}|\;.$
\fi
	Combining the above two bounds (and because  $2^{j-1} - 1 \geq 1$ for $j \geq 2$), we get that
	$|\whU| \geq |V_{i,j}|$,
	and we obtain the following bound on the number of edges in $E^+$ between $V_{i,j}$ and
	their neighbors in $U_i$:
	\begin{equation}
	|E^+(V_{i,j},\whU)| \leq 2\cdot \alpha\cdot |\whU|\;.
	\label{eq:ELij-Hi-ub}
	\end{equation}
	We next upper bound $|\whU|$.
	By the definition \ttcom{of $V_{i,j}$, for every $v\in V_{i,j}$ there exists a vertex $u \in U_i$ such that $d_u \geq d_v$. By the definition of $U_i$, for every $u', u'' \in U_i$, $d_{u'}\geq d_{u''}/2$, implying that for every $v\in V_{i,j}$ and every $u' \in U_i$ it holds that $d_{u'}\geq d_v/2$. Hence, for every $u \in U_i$, we have that $d_u \geq 2^{j-2}\alpha$.
	Also by the definition of $U_i$, for every $u\in U_i$ it holds that $d_u \geq 2^{i-1}$.}
	Therefore,
\ifnum\conf=0
	$$
	\mom \geq \sum_{u \in \whU} d^s_u \geq |\whU| \cdot 2^{(i-1)(s-1)} \cdot 2^{j-\ttcom{2}}\alpha\;,
	$$
\fi
\ifnum\conf=1
$ \mom \geq \sum_{u \in \whU} d^s_u \geq |\whU| \cdot 2^{(i-1)(s-1)} \cdot 2^{j-\ttcom{2}}\alpha\;,$
\fi
	which directly implies that
	\begin{equation}
	|\whU| \leq \mom\cdot 2^{-((i-1)(s-1)+j-2)}\cdot \alpha^{-1}\;.
	\label{eq:mHij-ub}
	\end{equation}
	The proof follows by plugging in Equation~\eqref{eq:mHij-ub} in Equation~\eqref{eq:ELij-Hi-ub}.
\end{proof}

We are now ready to prove Lemma~\ref{lem:cond-r-arb}.

\ifnum\lipics=1
\begin{proof}[Proof of Lemma ~\ref{lem:cond-r-arb}]
\else
\begin{proofof}{Lemma ~\ref{lem:cond-r-arb}}
\fi
	By the definition of $\wt(v)$, and since $d_v \leq d_u$ for every $v$ and $u \in \Gamma^+(v)$,
	\begin{equation}
	\sum_v \wt(v)^2  = \sum_{v} \Bigg(\sum_{u \in \Gamma^+(v)} \left(d_v^{s-1} + d_u^{s-1}\right) \Bigg)^2
	\;\leq\; 4\cdot  \sum_v   \Bigg(\sum_{u \in \Gamma^+(v)}   d_u^{s-1} \Bigg)^2 \;.
	\label{eq:var-one}
	\end{equation}
	In order to bound the expression on the right-hand-side of Equation~\eqref{eq:var-one}
	we consider each $U_i$ (as defined in Definition~\ref{def:U_i}) separately:
	By applying H\"{o}lder's inequality	with conjugates $s$ and $s/(s-1)$
	we get the following bound for every $v$ \ttcom{and $0 \leq i \leq \lceil \log n\rceil$}.
	\begin{equation}\sum_{u \in \Gamma^+_i(v)} 1\cdot d^{s-1}_u
	\leq |\Gamma^+_i(v)|^{1/s}\cdot  \Big(\sum_{u \in \Gamma^+_i(v)}d^s_u\Big)^{(s-1)/s}
	\leq |\Gamma^+_i(v)|^{1/s} \cdot \mom^{(s-1)/s}\;.
	\label{eq:sum-Hi}
	\end{equation}
	For a vertex  $u$,  let $\Gamma^-(u) \eqdef \{v: u\in \Gamma^+(v)\}$.
	By applying Equation~\eqref{eq:sum-Hi} (to one term of the square $\left(\sum_{u\in \Gamma^+_i(v)} d_u^{s-1}\right)^2$),
	\begin{eqnarray}
	\sum_v \left(\sum_{u\in \Gamma^+_i(v)} d_u^{s-1}\right)^2
	& \leq & \sum_v  |\Gamma^+_i(v)|^{1/s} \cdot \mom^{(s-1)/s}
	\cdot \sum_{u\in \Gamma^+_i(v)} d_u^{s-1}  \nonumber \\
	& = & \mom^{(s-1)/s} \cdot \sum_{u \in U_i} d^{s-1}_u \cdot \sum_{v \in \Gamma^-(u)} |\Gamma^+_i(v)|^{1/s} \label{eq:switch} \\
	& = & \mom^{(s-1)/s} \cdot \sum_{j=0}^{\lceil \log n \rceil}
	\Big(\sum_{u \in U_i} d^{s-1}_u \cdot \sum_{v \in \Gamma^-(u)\cap V_{i,j} } |\Gamma^+_i(v)|^{1/s}\Big)\;. \label{eq:split}
	\end{eqnarray}
	Equation~\eqref{eq:switch} follows by the definition of $\Gamma^-(u)$ (and switching the order of the summations), and Equation~\eqref{eq:split} follows from splitting the sum in Equation~\eqref{eq:switch} based on the partition of $V$ into
	the subsets $V_{i,j}$ (recall that $i$ is fixed for now \ttcom{and $V_{i,j}$ is as defined in Definition~\ref{def:U_i}}).
	
	From this point on we only consider $j$'s such that $V_{i,j}$ is not empty. For each $j$ (and $i$), by the definition of $V_{i,j}$,
	\begin{eqnarray}
	\sum_{u \in U_i} d^{s-1}_u \sum_{v \in \Gamma^-(u) \cap V_{i,j}} |\Gamma^+_i(v)|^{1/s}
	& \leq & \sum_{u \in U_i} d^{s-1}_u \cdot \sum_{v \in \Gamma^-(u)\cap V_{i,j} } (2^{j} \cdot \alpha)^{1/s}\nonumber \\
	& \leq & 2^{j/s} \cdot \alpha^{1/s} \cdot \sum_{u \in U_i} d^{s-1}_u \cdot |\Gamma^-(u)\cap V_{i,j}|\;.
	\label{eq:d-s-1-u-Gamma-i}
	\end{eqnarray}
	For $j < 2$, we trivially upper bound $|\Gamma^-(u)\cap V_{i,j}|$ by $d_u$.
	Thus,
	\begin{equation}
	\sum_{u \in U_i} d^{s-1}_u \sum_{v \in \Gamma^-(u) \cap V_{i,j}} |\Gamma^+_i(v)|^{1/s}
	\;\leq\; 2^{1/s} \cdot \alpha^{1/s} \cdot\sum_{u \in U_i} d^s_u \leq 2\cdot \alpha^{1/s} \cdot\mom
	\;\;\;\;\;\;\;\; \mbox{(for $j<2$)}\;.
	\label{eq:j<2}
	\end{equation}
	Turning to $j \geq 2$,
	since all vertices in $U_i$ have degree at most $2^i$ and by Equation~\eqref{eq:d-s-1-u-Gamma-i}, we get:
	\begin{eqnarray}
	\sum_{u \in U_i} d^{s-1}_u \cdot \sum_{v \in \Gamma^-(u)\cap V_{i,j}} |\Gamma^+_i(v)|^{1/s}
	&\leq& 2^{j/s} \cdot \alpha^{1/s} \cdot 2^{i(s-1)} \cdot |E^+(V_{i,j},U_i)|. \nonumber
	\end{eqnarray}
	Hence, by Claim~\ref{clm:E+_U_V},
	\begin{eqnarray}
	\sum_{u \in U_i} d^{s-1}_u \cdot \sum_{v \in \Gamma^-(u)\cap V_{i,j}} |\Gamma^+_i(v)|^{1/s} &\leq& 2^{j/s -j +s}\cdot \alpha^{1/s}\cdot \mom\
	\;.	\label{eq:j>2-main}
	\end{eqnarray}
	By using the bound in Equation~\eqref{eq:j<2} for $j<2$, the bound in
	Equation~\eqref{eq:j>2-main} for $j\geq 2$
	and plugging them in Equation~\eqref{eq:split} we get
	\begin{eqnarray}
	\sum_v \left(\sum_{u\in \Gamma^+_i(v)} d_u^{s-1}\right)^2
	&\leq& 4\cdot \alpha^{1/s} \cdot\mom^{2-\old{2}1/s} +
	2^{s} \cdot \alpha^{1/s}\cdot \mom^{2-1/s}\cdot \sum_{j=3}^{\lceil\log n\rceil} 2^{j/s-j} \nonumber\\
	&\leq& 2^{s+1}\cdot \alpha^{1/s} \cdot\mom^{2-\old{2}1/s} \cdot \log n\;,
	\label{eq:all-j}
	\end{eqnarray}
	where the last inequality holds because $s\geq 1$
	(in fact, for $s> 1$ we can save a $\log n$ factor).
	Using the inequality $(\sum_{i=1}^{\ell} x_i)^2 \leq (2\ell-1)\cdot \sum_{i=1}^\ell x_i^2$, Equation~\eqref{eq:all-j} implies that
	\begin{equation}
	\sum_v \left(\sum_{u\in \Gamma^+(v)} d_u^{s-1}\right)^2
	\leq 2^{s+2} \cdot \alpha^{1/s}\cdot \mom^{2-1/s}\cdot  \log^2 n\;.
	\label{eq:second-term}
	\end{equation}
	The lemma follows by combining Equation~\eqref{eq:var-one} with 
	Equation~\eqref{eq:second-term}.
\ifnum\lipics=1
\end{proof}
\else
\end{proofof}
\fi

\fi 

\ifnum\soda=0
It remains to establish Condition~\ref{cond:q}.
\else
The next lemma, which establishes Condition~\ref{cond:q}, can be proved similarly to Lemma~\ref{lem:cond-q}.
\fi
\begin{lemma}[Condition~\ref{cond:q} holds] \label{lem:cond-q-arb}
	\[\min\left\{\frac{n \cdot \alpha }{\mom^{1/s}}, \frac{n^s \cdot \alpha}{\mom}, n^{1-1/s},\frac{n^{s-1/s}}{\mom^{1-1/s}}  \right\} \geq m \cdot \frac{\momts}{\mom^2} \;.\]
\end{lemma}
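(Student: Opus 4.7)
The plan is to reduce the new lemma to Lemma~\ref{lem:cond-q} combined with the basic degeneracy bound $m \leq n\alpha$. Two of the four terms in the minimum, namely $n^{1-1/s}$ and $n^{s-1/s}/\mom^{1-1/s}$, are already dominated by $2m\cdot \momts/\mom^2$ via Lemma~\ref{lem:cond-q}, so those require no new work. Only the two degeneracy-dependent terms $n\alpha/\mom^{1/s}$ and $n^s\alpha/\mom$ need a fresh argument.

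For these, I would first recall the two upper bounds on $\momts$ already derived in the proof of Lemma~\ref{lem:cond-q}: the norm-inequality bound $\momts \leq \mom^{2-1/s}$ from Equation~\eqref{eq:bound_mu_2s-1}, and the trivial bound $\momts \leq n^{s-1}\cdot \mom$. Substituting the first into $m\cdot \momts/\mom^2$ gives the upper bound $m/\mom^{1/s}$, and substituting the second gives $m\cdot n^{s-1}/\mom$. Thus it suffices to show
\[
\frac{n\alpha}{\mom^{1/s}} \;\geq\; \frac{m}{\mom^{1/s}}
\qquad\text{and}\qquad
\frac{n^s\alpha}{\mom} \;\geq\; \frac{m\cdot n^{s-1}}{\mom},
\]
both of which collapse to the same inequality $m \leq n\alpha$.

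Finally, I would justify $m \leq n\alpha$ directly from the definition $\alpha(G) = \max_{S\subseteq V}\lceil |E(S)|/(|S|-1)\rceil$ of the degeneracy used in this paper: plugging in $S=V$ yields $\alpha \geq m/(n-1) \geq m/n$, hence $m \leq n\alpha$. (One could equivalently cite that the average degree is at most twice the degeneracy.) Combining these observations with Lemma~\ref{lem:cond-q} establishes that each of the four terms in the minimum is at least $m\cdot \momts/\mom^2$, which completes the proof.

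I do not anticipate a genuine obstacle here; the only subtle point is remembering to use the right $\momts$ bound against the right term ($\mom^{2-1/s}$ pairs with $n\alpha/\mom^{1/s}$, and $n^{s-1}\mom$ pairs with $n^s\alpha/\mom$), so that both reductions land on the same clean inequality $m\leq n\alpha$.
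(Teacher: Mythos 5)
Your proposal is correct and follows essentially the same route as the paper's proof: the paper also combines $m \leq n\alpha$ (from the degeneracy bound) and $m \leq n^{1-1/s}\cdot\mom^{1/s}$ (Equation~\eqref{eq:m-n-mu}) with the two bounds $\momts \leq \min\{\mom^{2-1/s}, n^{s-1}\cdot\mom\}$ from the proof of Lemma~\ref{lem:cond-q}. Your only cosmetic difference is invoking Lemma~\ref{lem:cond-q} directly for the two degeneracy-free terms and spelling out why $m\leq n\alpha$, both of which are fine.
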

\ifnum\soda=0
\begin{proof}
	Since $G$ has bounded degeneracy $\alpha$, it follows that $m \leq n \cdot\alpha$.
	{By Equation~\eqref{eq:m-n-mu}}, $m\leq n^{1-1/s}\cdot \mom^{1/s}$.
	{As shown in the proof of Lemma~\ref{lem:cond-q},}
	$\momts \leq \min\{\mom^{2-1/s},n^{s-1}\cdot \mom\}$. The proof follows from the above two bounds.
\end{proof}
\fi

\ifnum\conf=0
\subsection{The case of $\mathbf{s=1}$: estimating the average degree}\label{subsec:s=1}
When $s=1$ (so that $\mom = \momone = 2m$ and $\avgmom_s=\avgmom_1$ is the average degree), there is a very simple analysis of a slight variant of
\mainalg$_s$. Observe that for $s=1$, by Definition~\ref{def:wt_e},  for every edge $e$, $\wt(e) = 2$, and $\wt(v) = 2\cdot d^+_v$.
For a degree threshold $\theta = 2\alpha/\eps$, let $\mH \eqdef \{v: d_v > \theta\}$, and $\mL \eqdef V\setminus \mH$.
By the definition of $\mH$ we have that
$|\mH| < \momone/\theta = \eps\momone/(2\alpha)$. Since the graph has degeneracy at most $\alpha$,
$\sum_{v\in \mH} d^+_v \leq \alpha \cdot |\mH| \leq \eps\momone/2$.
This implies that $\sum_{v\in \mL} \wt(v) \geq (1-\eps)\momone$.

Suppose that we modify the algorithm so that $X_i$ is set to
$d^{s-1}_{v_i} + d^{s-1}_{u_i} = 2$ only if $d_{v_i} \leq  \theta$ (as well as $v_i \prec u_i$),
and is otherwise set to $0$. Under this modification, $\EX[X] \in [(1-\eps)\momone,\momone]$.
Since $\wt(v) \leq 2\theta$ for each $v\in \mL$, we get that
$\sum_{v\in \mL} \wt(v)^2 \leq 2\theta\cdot \momone = (4\alpha/\eps)\cdot \momone$.
Therefore, in order to satisfy Condition~\ref{cond:r}, it suffices to set
$r = \frac{c_r \cdot n \cdot \alpha}{\eps^3\cdot \delta \momone}$.
Thus, as compared to the setting in Equation~\eqref{eq:cond-r-arb},
we save a $\log^2 n$ factor (at the cost of factor of $1/\eps$), but, more importantly,
the analysis is very simple (as compared to the proof of Lemma~\ref{lem:cond-r-arb}).
The setting of $q$ is as in Equation~\eqref{eq:cond-q-arb}, which for $s=1$ gives
$q= \frac{c_q}{\eps^2\cdot \delta^3}$.
\fi

\section{Wrapping things up} \label{sec:search_no_alpha}

The proof of our final result, Theorem~\ref{thm:correctness}, follows by combining
Theorem~\ref{thm:cond}, Lemma~\ref{lem:cond-r}, 
Lemma~\ref{lem:cond-r-arb} and Lemma~\ref{lem:cond-q-arb}, with a geometric search for a  factor-$2$
estimate of
$\mom$ (which determines the correct setting of $r$ and $q$ in the algorithm).

\ifnum\conf=0
\medskip
For convenience, we restate the bounds of Theorem~\ref{thm:correctness} in terms on $\mom$.
\begin{eqnarray*}
	O\Big(2^s\cdot n\cdot\log^2 n\cdot \Big(\frac{\alpha}{\mom}\Big)^{1/s} +\min\Big\{\frac{n\cdot \alpha}{\mom^{1/s}}, \frac{n^{s}\cdot  \alpha}{\mom}\Big\}\Big)\cdot \;\frac{s \log n\cdot \log(s\log n)}{\eps^2} 
\end{eqnarray*}
\begin{eqnarray*}
	O\Big(\frac{n}{\mom^{1/(s+1)}} +
	\min\Big\{n^{1-1/s}, \frac{n^{s-1/s}}{\mom^{1-1/s}}\Big\}\Big)\cdot\frac{s\log n\cdot \log(s\log n)}{\eps^2}
\end{eqnarray*}

\ifnum\lipics=1
\begin{proof}[Proof of Theorem~\ref{thm:correctness}]
\else
\begin{proofof}{Proof of Theorem~\ref{thm:correctness}}
\fi
	Recall that the setting of $r$ and $q$ in Equation~\eqref{eq:cond-r-arb} and Equation~\eqref{eq:cond-q-arb}, respectively,
	equals the setting in Equation~\eqref{eq:cond} when $\alpha$ is set to its maximum possible value
	$\mom^{1/(s+1)}$. Hence, it suffices to prove the theorem under the assumption that the
	algorithm is provided with $\alpha$ (which upper bounds $\alpha(G)$).
	It follows from Theorem~\ref{thm:cond}, Lemma~\ref{lem:cond-r}, 
	Lemma~\ref{lem:cond-r-arb} and Lemma~\ref{lem:cond-q-arb}, that when \mainalg$_s$ is invoked with parameters $r$ and $q$ as set
	in Equation~\eqref{eq:cond-r-arb} and Equation~\eqref{eq:cond-q-arb}, respectively, 
	the algorithm returns a value $X$ such that
	$X \in [(1-\eps)\avgmom_s,(1+\eps)\avgmom_s]$ with probability at least $1-\delta$.
	However, these settings require the knowledge of $\mom$, which is the parameter we are trying to approximate (up to the normalization factor $n$). Hence, we use the following search algorithm.
	
	We start with a guess $\gm = n^{s+1}$ (the maximum possible value of $\mom$), and compute $r$ according to 
	Equation~\eqref{eq:cond-r-arb}  and  $q$ according to Equation~\eqref{eq:cond-q-arb} 
	assuming $\mom = \gm$, with the given approximation parameter $\eps$ and with
	$\delta = 1/3$.
	We then invoke \mainalg$_s$  $\Theta(\log(s\log n))$ times
	and let $Z$ be the median of the returned values. 
	If $Z \geq \gm$ then we stop and return $Z$. Otherwise we halve $\gm$ and repeat.
	
	Observe that $r$ and $q$ are decreasing functions of $\mom$.
	Hence, the running time of each invocation of \mainalg$_s$ is at most the running
	time of the last invocation. By  Claim~\ref{clm:exp} and
	Markov's inequality, for each invocation of \mainalg$_s$,
	$\Pr[X \geq 3\mom]\leq 1/3$ (where $X$ is the value returned by the algorithm).
	We stress that this holds regardless of whether $r$ and $q$ satisfy Conditions~\ref{cond:r} and~\ref{cond:q} (respectively).
	By the definition of $Z$, for values $\gm > 3\mom$, the probability that we will stop in each step
	is $O(1/(s\log n))$. Therefore, with high constant probability we will not stop before $\gm \leq 3\mom$.
	
	Once $\gm  \leq 3\mom$, we will satisfy\footnote{To be precise, in order to satisfy the conditions for values $\mom \leq 3\gm$, we invoke \mainalg$_s$ with $3r$ and $3q$.}
	Conditions~\ref{cond:r} and \ref{cond:q}.
	By Theorem~\ref{thm:cond}, in each invocation of \mainalg$_s$, $X \in [(1-\eps)\avgmom_s, (1+\eps)\avgmom_s]$ with probability at least $1-\delta = 2/3$, so that $Z  \in [(1-\eps)\avgmom_s, (1+\eps)\avgmom_s]$
	with probability at least $1- O(1/(s\log n))$.
	Thus, once $\gm \leq \mom/2$, the algorithm will stop and return a value in
	$ [(1-\eps)\avgmom_s, (1+\eps)\avgmom_s]$
	with probability at least $1- O(1/(s\log n))$. By a union bound over all iterations, the algorithm
	returns such a value with high constant probability.
	
	In order to bound  the expected running time, we first observe that the running time of an invocation of \mainalg$_s$ with $\gm \in [\mom/2^{i+1},\mom/2^i)$ is at most $2^{i+1}$ times the running
	time of an invocation with $\gm =\mom$.
	On the other hand, the probability that the algorithm does not stop before
	we reach $\gm \in [\mom/2^{i+1},\mom/2^{i})$ for any  $i\geq 1$, is $O(\log(n)^{-i})$.
	The bound on the expected running time follows.
\ifnum\lipics=1
\end{proof}
\else
\end{proofof}
\fi

\fi


\ifnum\conf=0
\section{Lower Bounds for Bounded Degeneracy}

The lower bounds given in this section hold for algorithms that are allowed degree and neighbor queries,
as well as pair queries (that is, queries of the form: ``is there an edge between $u$ and $v$'').
{These lower bound show that the complexity of the algorithm, as stated in Theorem~\ref{thm:correctness}
	for graphs with degeneracy at most $\alpha$,
	is tight up to the dependence on $1/\eps$ and polylogarithmic factors in $n$, for any constant $s$
	(or even $s = O(\log\log n)$).}

\begin{theorem} \label{thm:lb_first_term}
	Any  constant-factor approximation algorithm for $\mom$
	must perform $\Omega\left(\frac{n\cdot \alpha(G)^{1/s}}{\mom^{1/s}(G)}\right)$ queries.
\end{theorem}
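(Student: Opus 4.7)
The plan is to prove the lower bound via an indistinguishability argument between two distributions on $n$-vertex graphs that both have degeneracy at most $\alpha$ but differ by more than a constant factor in $s$-th moment. Set $D = \lceil (c \cdot \mom / \alpha)^{1/s} \rceil$ for a suitably large constant $c$ (assume any needed parity conditions on $n$). Let $\mathcal{D}_{\text{no}}$ be a uniformly random perfect matching on $[n]$, so every vertex has degree $1$ and $\mom(\mathcal{D}_{\text{no}}) = n$. Let $\mathcal{D}_{\text{yes}}$ be constructed by drawing a uniform random permutation $\pi\colon [n]\to[n]$, setting $L = \pi(\{1,\ldots,\alpha\})$ and $R = \pi(\{\alpha+1,\ldots,\alpha+D\})$, placing the complete bipartite graph $K_{L,R}$ on $S := L \cupdot R$, and placing a uniform random matching on $[n]\setminus S$. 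The $\alpha$ vertices of $L$ have degree $D$, the $D$ vertices of $R$ have degree $\alpha$, and the rest have degree $1$, giving $\mom(\mathcal{D}_{\text{yes}}) = \alpha D^s + D\alpha^s + (n-\alpha-D) = \Theta(c\mom)$, which exceeds $\mom(\mathcal{D}_{\text{no}})$ by more than any prescribed constant factor once $c$ is large enough. So any constant-factor approximation algorithm must distinguish the two distributions.

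The first step is to verify that every graph in the support of $\mathcal{D}_{\text{yes}}$ has degeneracy at most $\alpha$. The matching component has degeneracy $1$, and every subgraph of $K_{L,R}$ induced on $L'\subseteq L$, $R'\subseteq R$ has $|L'|\cdot|R'|$ edges on $|L'|+|R'|$ vertices, so its density is $|L'|\,|R'|/(|L'|+|R'|-1) \leq \min(|L'|,|R'|) \leq \alpha$. Hence the overall degeneracy is at most $\alpha$. The second step is to couple the two distributions using the same permutation $\pi$ on labels and the same random matching rule on $[n]\setminus S$; then the two graphs agree on every query whose input and output labels all lie in $[n]\setminus S$, since degree queries there return $1$, neighbor queries return matching partners (also outside $S$), and pair queries outside $S$ return the matching-edge indicator.

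The third step is the adaptive-adversary calculation. By Yao's minimax principle, it suffices to show that no deterministic algorithm making $q = o(n/D)$ queries against $(\mathcal{D}_{\text{no}}+\mathcal{D}_{\text{yes}})/2$ can distinguish the two with constant advantage. I would argue inductively that as long as the set of touched labels (those appearing as query inputs or outputs) has not yet intersected $S$, the posterior distribution on $S$ remains uniform over $(\alpha+D)$-subsets of the untouched labels. Each query touches at most $O(1)$ new labels, so its probability of hitting $S$ is $O((\alpha+D)/n) = O(D/n)$. A union bound over the $q$ queries gives overall hitting probability $O(qD/n) = o(1)$, and conditional on the complementary event the algorithm's output distribution is identical under $\mathcal{D}_{\text{no}}$ and $\mathcal{D}_{\text{yes}}$. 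Thus $q = \Omega(n/D) = \Omega(n\alpha^{1/s}/\mom^{1/s})$, as desired. The main obstacle is the posterior-symmetry bookkeeping in the adaptive argument: one must carefully verify that, conditional on the touched labels avoiding $S$ through step $i-1$, the labels specified by query $i$ (which may depend on earlier answers) are still distributed independently of $S$. This is a standard but delicate adversary/coupling argument in query-complexity lower bounds, handled by observing that all answers seen so far depend only on the restriction of the graph to the touched labels, which is identical in the two distributions.
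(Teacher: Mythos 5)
Your proposal is correct and takes essentially the same approach as the paper: plant a small dense gadget on $\Theta\big((\mom/\alpha)^{1/s}\big)$ vertices that realizes degeneracy $\Theta(\alpha)$ and moment $\Theta(\mom)$ (the paper uses a clique of size $\alpha$ joined completely to an independent set, you use $K_{\alpha,D}$), and argue that no constant-factor approximation can distinguish this from a low-moment graph without hitting the planted set, which forces $\Omega\big(n\cdot\alpha^{1/s}/\mom^{1/s}\big)$ queries. The only notable differences are cosmetic: the paper's low-moment family also contains a clique of size $\alpha$, so both families have degeneracy $\Theta(\alpha)$ (making the bound robust even under a promise that the degeneracy is $\Theta(\alpha)$, whereas your matching has degeneracy $1$ and only rules out algorithms required to work on all graphs of degeneracy at most $\alpha$), and your coupling description needs the minor repair that a uniform perfect matching on $[n]$ may match vertices of $[n]\setminus S$ into $S$, so the no-distribution should instead match $S$ internally (or the touching event defined accordingly).
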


\begin{proof}
	For every $n$, $\wmom$ and
	$\wmom/n^s \leq \walpha \leq (\wmom/c)^{1/(s+1)}$,\footnote{Recall that by Claim~\ref{clm:alpha-mom},
		$\alpha(G) \leq \mom(G)^{\frac{1}{s+1}}$, and note that $\mom(G) = \sum_v d_v^s \leq \sum_v d_v\cdot n^{s-1}
		\leq 2\alpha(G) n^s$.}
	we next define two families of graphs: $\mG_1$ and $\mG_2$.  
	Each graph in $\mG_1$ consists of a clique $C_1$, over $\walpha$ vertices, and an
	independent set $C_2$, over $n-\walpha$ vertices. For each graph in $\mG_2$, the vertices are partitioned into three sets:
	$C_1, C_2, C_3$, of sizes $\walpha$, $(\wmom/\walpha)^{1/s} - \walpha$, and $n-(\wmom/\walpha)^{1/s} $,
	respectively. The set $C_1$ is a clique, and is connected by a complete bipartite graph to
	$C_2$, where there are no edges within $C_2$. The set $C_3$ is an independent set.
	Within
	each family, the graphs differ only in the labeling of the vertices. By construction, in both families,
	all graphs have degeneracy $\Theta(\walpha)$.
	For each $G\in \mG_1$ we have that
	$\mom(G) = O(\walpha^{s+1}) \leq \wmom/c$,
	and for each $G$ in $\mG_2$,
	$\mom(G) = \Theta(\walpha \cdot (\wmom/\walpha) + ((\wmom/\walpha)^{1/s}-\walpha) \cdot \walpha^s) = \Theta(\wmom)$.
	
	Clearly, unless the algorithm ``hits'' a vertex in the clique $C_1$ of a graph belonging to $\mG_1$ or a vertex in $C_1\cup C_2$ in a graph belonging to $\mG_2$, it cannot distinguish between
	a graph selected randomly from $\mG_1$ and a graph selected randomly from $\mG_2$.
	The probability of hitting such a vertex 
	is $O\left( \frac{(\wmom/\walpha)^{1/s}}{n}\right)$. Thus, in order for this event to occur with high constant probability, $\Omega\left(\frac{n\walpha^{1/s}}{\wmom^{1/s}}\right)$ queries are necessary.
\end{proof}

\begin{theorem}\label{thm:lb_second_term}
	Any constant-factor approximation algorithm for $\mom$ 
	must perform
	$$\Omega\left(\min\left \{\frac{n \cdot \alpha(G)}{\mom(G)^{1/s}}, \frac{n^s \cdot \alpha(G)}{\mom(G)},  n^{1-1/s}, \frac{n^{s-1/s}}{\mom(G)^{1-1/s}} \right \}  \right)$$ queries.
\end{theorem}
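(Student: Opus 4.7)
The plan is to prove the bound via Yao's minimax principle, constructing for each of the four terms inside the minimum a pair of distributions over graphs of degeneracy at most $\walpha$ whose expected moments differ by a constant factor but which cannot be distinguished using fewer than the target number of queries. Combining the four pairs yields the minimum. All four constructions follow the template of Theorem~\ref{thm:lb_first_term} and of GRS~\cite{GRS11}: a sparse background of $n$ labeled vertices, on top of which a small \emph{witness} subgraph $W$ is embedded. The two distributions differ only in whether $W$ is present (the yes instance) or replaced by isolated vertices carrying the same labels (the no instance); a random relabeling of $W$ inside the otherwise identical label set ensures that degree, neighbor, and pair queries to vertices outside $W$ return identically distributed answers until the algorithm queries some vertex of $W$. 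Since this happens with probability $O(|W|/n)$ per query, any lower bound of the form $\Omega(f)$ follows from exhibiting a witness of size $|W| = O(n/f)$ that carries moment $\Theta(\wmom)$ and obeys the degeneracy bound.

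For the two GRS-type terms $n^{1-1/s}$ and $n^{s-1/s}/\wmom^{1-1/s}$, I would reuse the GRS witnesses essentially verbatim: a biclique between $\Theta(\wmom^{1/s})$ (respectively $\Theta(n^{(s-1)/s}/\wmom^{(s-1)/s})$) \emph{heavy} vertices and an appropriately sized \emph{light} side, arranged so that the heavy side carries $\Theta(\wmom)$ of the moment. In the regimes where these terms are the active (smallest) ones, $\walpha$ is not the binding density constraint, so the construction has degeneracy at most $\walpha$ for free. For the two degeneracy-dependent terms $\frac{n\walpha}{\wmom^{1/s}}$ and $\frac{n^s\walpha}{\wmom}$, I would instead take $W$ to be a disjoint union of $\walpha$-regular (or near-$\walpha$-regular) gadgets, e.g., cliques on $\walpha+1$ vertices to handle the first term in the regime $\wmom \leq n^{s-1}$, and, for the second term in the regime $\wmom \geq n^{s-1}$, bicliques in which one side is saturated at degree close to $n$ and the other side has degree exactly $\walpha$ (so that the induced density is $\walpha$). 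In both gadget families the number of copies is chosen so that the accumulated moment is $\Theta(\wmom)$ while the number of witness vertices equals $\Theta(n/f)$ for the corresponding $f$.

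The hardest step will be verifying indistinguishability under \emph{pair} queries, not just vertex and neighbor queries, since the algorithm may test an edge between any two previously seen labels. The random relabeling of $W$ must therefore be set up so that, in both experiments, every pair query whose endpoints were not produced through the $W$-revealing events returns ``no edge'' with matching probability; this is a standard but fiddly coupling argument, essentially the one used by GRS. A secondary technical obstacle is calibrating the gadget sizes in the last two constructions so that the \emph{maximum} induced density of the entire graph never exceeds $\walpha$; since $\alpha(G) = \max_{S}\lceil |E(S)|/(|S|-1)\rceil$, this forces the witness pieces to be vertex-disjoint and their bipartite halves to be balanced against the boundary contribution from the background graph, which constrains the constants and regimes in which each of the four bounds is tight but does not affect the asymptotic statement.
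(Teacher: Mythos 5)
There is a genuine gap, and it lies in the core of your template rather than in the ``fiddly'' pair-query coupling you flag. Your yes/no pair (``witness $W$ present'' versus ``$W$ replaced by isolated vertices with the same labels'') only gives identically distributed answers outside $W$ if deleting $W$'s edges leaves the rest of the graph untouched, i.e.\ if $W$ is a union of connected components. But in the regimes governing this theorem the moment must be carried by a few very-high-degree vertices whose edges attach to a constant fraction of (or all) the remaining vertices: e.g.\ for the terms $n^{s-1/s}/\mom^{1-1/s}$ and $n^s\walpha/\mom$ one needs $\mom^{1/s}\gtrsim n$, so the heavy set is adjacent to essentially every vertex. Isolating those heavy vertices in the no-instance changes the degree of $\Theta(n)$ background vertices, so a single degree query on a uniform vertex already distinguishes the two distributions; the claim ``this happens with probability $O(|W|/n)$ per query'' is false for these constructions. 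The paper (following GRS) avoids exactly this by a degree-preserving switch: in $\mG_1$ each \emph{pair} of edges from $S$ to $V\setminus S$ is replaced by a single ``special'' edge inside $V\setminus S$, so every vertex of $V\setminus S$ has the same degree $d$ in both families, and the distinguishing event is hitting $S$ \emph{or a special edge}, whose per-query probability is controlled by the ratio $|S|d'/(nd)$ of special to ordinary edges --- not by $|W|/n$. That ratio is what produces the terms $n\walpha/\mom^{1/s}$ (with $d=\walpha$, $|S|=O(1)$, $d'=\mom^{1/s}$) and $n^s\walpha/\mom$ (with $|S|=\lceil c\mom/n^s\rceil$ joined to everything), and the degeneracy requirement in the sub-cases $\walpha\geq(\mom/n)^{1/s}$ is met by planting the \emph{same} $\walpha$-clique in both families, which raises $\alpha(G)$ without affecting distinguishability.

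Even in the one place where your template is internally consistent --- vertex-disjoint $(\walpha+1)$-clique gadgets for the term $n\walpha/\mom^{1/s}$ --- it is quantitatively too weak. To carry moment $\Theta(\mom)$ the gadgets occupy $|W|=\Theta(\mom/\walpha^{s})$ vertices, so the hitting argument yields only $\Omega(n\walpha^{s}/\mom)$, which is smaller than the claimed $\Omega(n\walpha/\mom^{1/s})$ whenever $\walpha\ll\mom^{1/s}$ (for $s=2$, $\walpha=O(1)$, $\mom=n$ it gives $\Omega(1)$ instead of $\Omega(\sqrt{n})$); more generally, any ``hit a small vertex set'' bound with an honest witness is capped at $O(n/\mom^{1/(s+1)})$, which is the bound of Theorem~\ref{thm:lb_first_term}, not of this theorem. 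Likewise your saturated biclique for $n^s\walpha/\mom$ has $\Theta(n)$ vertices and cannot be hidden at all. So the missing idea is not a refinement of the relabeling coupling but the switch to indistinguishability measured against the \emph{edge} distribution under preserved degrees, together with the case split on $\mom^{1/s}\lessgtr n$ and on $\walpha\lessgtr(\mom/n)^{1/s}$ (with $d$ set to $\walpha$ in one sub-case and an identical planted clique in the other), which is how the paper extracts each of the four terms.
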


%

\begin{proof}
	The proof of the theorem is based on simple modifications to the lower bound constructions in~\cite[Thm.~5]{GRS11}. We note that this theorem of~\cite{GRS11} was stated explicitly for algorithms that perform degree and neighbor queries since such was the algorithm presented in~\cite{GRS11} (as well as the current paper). However, it was noted in~\cite[Sec. 7]{GRS11} that they also hold when pair queries are allowed (as they are essentially based on ``hitting special vertices'').
	
	The theorem is proved by considering two cases that are defined according to the relation between $\mom(G)$ and $n$
	(namely, $\mom(G)^{1/s} \leq n - c$ and $\mom(G)^{1/s}> n -c$ for a constant $c$, which determines the hardness of approximation). Within each case there are two sub-cases that are defined according to the relation between $\alpha(G)$ and $(\mom(G)/n)^{1/s}$
	(namely, $\alpha(G) < (\mom(G)/n)^{1/s}$ and $\alpha(G) \geq (\mom(G)/n)^{1/s}$). Each of the four sub-cases gives us one of the terms in the lower bound (within the $\min\{\cdot\}$ expression).
	For each of the sub-cases we consider the values $\walpha$ and $\wmom$ that correspond to the sub-case, and we construct two families of graphs: $\mG_1$ and $\mG_2$. In both families all graphs have degeneracy $\Theta(\walpha)$. Every graph $G \in \mG_1$
	satisfies $\mom(G) \leq \wmom$, while every graph $G \in \mG_2$
	satisfies $\mom(G) \geq  c\cdot\wmom$. 
	The lower bound is based on the difficulty of
	distinguishing between a random graph selected from $\mG_1$ and a random graph selected from $\mG_2$,
	with a specified number of queries.
	In all cases we modify the construction in~\cite[Thm. 5]{GRS11} either by decreasing the degeneracy or increasing  it.
	
	\medskip\noindent{\sf \boldmath \bfseries The case $\wmom^{1/s}\leq n -c$.~}
	For this case we modify the construction described in~\cite[Item~(2) of Thm.~5]{GRS11}.
	In both families, the vertices are partitioned into two subsets, $S$ and $V\setminus S$, where
	the size of $S$ is  $c$. 
	For each graph in $\mG_1$, the set $S$ is an independent set, while for each graph in $\mG_2$, each
	vertex in $S$ has $d'$ neighbors in $V\setminus S$, for an appropriate setting of $d'$. In both families, the vertices in $V\setminus S$ have degree $d$, for an appropriate setting of $d$ (where in $\mG_1$ their neighbors are all in $V\setminus S$ while in $\mG_2$ some of their neighbors are in $S$).
	Observe that the graphs in the family $\mG_1$ can be viewed as obtained from the graphs of $\mG_2$ be replacing pairs of edges between $S$ to $V/S$ by a single edge in $V\setminus S$. We refer to the edges between $S$ and $V\setminus S$ in the graphs of $\mG2$, and the edges replacing them in the graphs of $\mG_1$ as ``special edges''.
	In~\cite{GRS11}, the settings of $d'$ and $d$ are such that the number of stars in graphs belonging to $\mG_2$ is (roughly) a factor $c$ larger
	than the number of stars in graphs belonging to $\mG_1$. The difficulty of distinguishing between a random
	graph selected from $\mG_1$ and a random graph selected from $\mG_2$ is based on upper bounding
	the following two very similar events: (1) ``Hitting'' a vertex $v$ in $S$ when querying a graph in $\mG_2$ by either performing a degree/neighbor query on $v$ or by performing a neighbor query on $u \in V\setminus S$ and receiving $v$ as an answer. (2) ``Hitting'' a vertex $v$ in $S$  or a special edge between vertices in $V\setminus S$ when querying a graph in $\mG_1$, where the number of special edges is $|S|\cdot d'/2$. \tout{(these edges are viewed as corresponding to edges in graphs belonging to $\mG_2$ between $S$ and $V\setminus S$ that
		were removed and replaced by edges internal to $V\setminus S$ in graphs belonging to $\mG_1$.)}
	In what follows we modify the settings of $d'$ and $d$ (as defined in~\cite[Item~(2) of Thm.~5]{GRS11}), and in the case of large $\alpha$ perform an
	additional small modification.
	
	In both the sub-case $\walpha < (\wmom/n)^{1/s}$ and the sub-case $\walpha \geq (\wmom/n)^{1/s}$,
	we set $d'  =\lceil \wmom^{1/s} \rceil$. This ensures that for each graph $G \in \mG_2$,
	$\mom(G) \geq c \wmom$.
	
	If $\walpha < (\wmom/n)^{1/s}$, then we set $d=\walpha$. Hence, $\alpha(G) = \Theta(\walpha)$ for graphs in both families, and $\mom(G) \leq n \cdot ((\wmom/n)^{1/s})^s = \wmom$
	for graphs in $\mG_1$. Since the number
	of edges between $S$ and $V\setminus S$ in graphs belonging to $\mG_2$ (which is of the same order as
	the number of special edges in graphs belong to $\mG_1$) is $O(\wmom^{1/s})$ while the total
	number of edges is $\Omega(n\cdot \walpha)$, we get a lower bound of $\Omega\left(n\cdot \walpha/\wmom^{1/s} \right)$ (this of course requires formalizing, as done in~\cite{GRS11}).
	
	If $\walpha \geq (\wmom/n)^{1/s}$, then we set $d = \lfloor (\wmom/n)^{1/s}\rfloor$,
	so that it still holds that $\mom(G) \leq n \cdot ((\wmom/n)^{1/s})^s = \wmom$
	for graphs in $\mG_1$.
	In both families, within $V\setminus S$
	we add edges so as to form a clique on a subset of size $\walpha$, thus increasing the degeneracy to
	$\Theta(\walpha)$. Since this modification is the same in both families, it does not effect the ability
	to distinguish between  the two families.
	Since the number of edges (not including those in the clique) is
	$(n-c)\cdot d = \Omega(\wmom^{1/s}\cdot n^{1-1/s})$, we get a lower bound of $\Omega(n^{1-1/s})$.
	
	\medskip\noindent{\sf \boldmath \bfseries The case $\wmom^{1/s}> n -c$.~}
	In this case we may assume, without loss of generality, that $\wmom < n^{s+1}/c'$ for a sufficiently large constant $c'$, or else
	the lower bound $\Omega(n^{s-1/s}/\wmom^{1-1/s})$ is trivial, and similarly that $\wmom < n^s\cdot \walpha/c'$,
	or else the lower bound $\Omega(n^s\cdot \walpha/\wmom)$ is trivial.
	We may also assume that $\walpha \leq \wmom^{1/(s+1)}$ since $\alpha(G) \leq \mom(G)^{1/(s+1)}$
	for every graph $G$ (by Claim~\ref{clm:alpha-mom}).
	
	Here we modify the construction described in ~\cite[Item~(3) of Thm.~5]{GRS11}.
	The construction is similar to the one described for $\wmom^{1/s} \leq n-c$, except that the size of the set $S$ needs to be increased.
	Specifically, we let $|S| = b$ for $b = \lceil c\wmom/n^s \rceil$, and in the graphs in $\mG_2$
	each vertex in $S$ is connected to every other vertex in the graph. Therefore,
	for each $G\in \mG_2$, $\mom(G) = \Omega(\wmom)$.
	
	If $\walpha < (\wmom/n)^{1/s}$, then in both families
	each vertex in $V\setminus S$ has degree $d= \walpha$. Since $b< \walpha$ (due to
	$\wmom < n^s\cdot \walpha/c'$ for a sufficiently large constant $c'$), we get that $\alpha(G)=\Theta(\walpha)$
	for all graphs in $\mG_1$ and in $\mG_2$. The difference between the families is that in
	the graphs belonging to $\mG_2$, each vertex in $V\setminus S$ has $b=|S|$ neighbors in $S$ and $d-b$ neighbors in $V\setminus S$, while in the graphs belonging to $\mG_1$, each vertex in $V\setminus S$ has $d$ neighbors in
	$V\setminus S$ (and each vertex in $S$ only neighbors each other vertex in $S$).
	This implies that for each $G\in \mG_1$,  $\mom(G) \leq n\walpha^s <\wmom$ (where we have again
	used $b < \walpha$). Since the number of edges between $S$ and $V\setminus S$ in each graph in
	$\mG_2$ (the number of corresponding special edges within $V\setminus S$ in each graph in $\mG_1$)
	is $O(b\cdot n) = O(\wmom/n^{s-1})$, and the total number of edges is $\Omega(n\cdot \walpha)$,
	we get a lower bound of $\Omega(n^s\cdot \walpha/\wmom)$.
	
	If $\walpha \geq (\wmom/n)^{1/s}$, then we use the same construction as above only with
	$d = (\wmom/n)^{1/s}$, and we ``plant'' a clique of size $\walpha$ in $V\setminus S$.
	The degeneracy is hence increased to $\Theta(\walpha)$, and the value of $\mom(G)$ for $G\in \mG_1$
	is increased to at most $2\wmom$ (due to the clique).
	Since the number of edges (not including those in the clique) is
	$\Omega(n\cdot d) = \Omega(\wmom^{1/s}\cdot n^{1-1/s})$, we get a lower bound of
	$\Omega(n^{s-1/s}/\wmom^{1-1/s})$.
\end{proof}
\fi

\ifnum\conf=0
\subsection{On knowing the degeneracy bound}
\label{subsec:why_arboricity}

One may wonder if the query complexity of Theorem~\ref{thm:correctness} can be obtained \emph{without} knowledge of the degeneracy $\alpha$.
The ideal situation would be one where an algorithm has the complexity of \mainalg, without knowing $\alpha$.
The
worst-case lower bound of~\cite{GRS11} do not preclude this possibility, since it uses graphs with high degeneracy.
Nonetheless, a slight adaptation of those arguments shows that the bound holds even for bounded degeneracy graphs,
\emph{if} the algorithm must work on all graphs. We will focus solely on estimating average degree, since that suffices to make our point.

\begin{definition} \label{def:alg}
	For a constant $c$, an algorithm $\cA$ is called $c$-{\sf valid} if: given query access to a graph $G$,
	with probability $2/3$, $\cA$ outputs a $c$-approximation to the average degree of $G$.
\end{definition}

Note that \mainalg, with access to a degeneracy bound, is not valid. This is because it is only required to be accurate for graphs with the given degeneracy bound, not all graphs.

\begin{theorem} \label{thm:valid-lb} Let $n$ be a sufficiently large integer. Consider the class of graphs on $n$ vertices with degeneracy at most $2$. For any constant $c$, any $c$-valid algorithm
	must perform $\Omega(\sqrt{n})$ queries on these graphs.
\end{theorem}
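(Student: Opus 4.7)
The plan is to derive the bound via an indistinguishability argument between a low-density distribution over degeneracy-$\leq 2$ graphs and a high-density distribution obtained by planting a small clique. Assume $n$ is even and let $D_1$ be the uniform distribution over perfect matchings on $V$; every graph in its support has degeneracy $1$ and average degree $1$. Let $D_2$ be generated by drawing $M \sim D_1$ and, independently, a uniformly random subset $K \subseteq V$ of size $k = \lceil 2c\sqrt{n}\rceil$, and setting the edge set to $M \cup \{\{a,b\}: a,b\in K, a\neq b\}$. The average degree of a graph in the support of $D_2$ is $1 + k(k-1)/n$, which exceeds $c^2$ for all sufficiently large $n$. Thus any $c$-approximation to the $D_1$ average degree lies in $[1/c, c]$, while any $c$-approximation to the $D_2$ average degree is strictly larger than $c$, so the two validity intervals are disjoint.

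Next I set up a coupling: jointly sample $(G_1, G_2)$ as above (so $G_2$ extends $G_1$ with the clique on $K$) and run $\cA$ with the same internal random bits on both. Call a query an \emph{exposure} if its answer differs between $G_1$ and $G_2$. A degree query at $v$ is an exposure only if $v \in K$; a neighbor query at $v$ is an exposure only if $v \in K$; and a pair query $\{u,v\}$ is an exposure only if $\{u,v\} \subseteq K$. Since $K$ is drawn independently of $\cA$'s randomness and no information about $K$ leaks prior to the first exposure, a deferred-decisions argument shows that the conditional probability that the $i$-th query is an exposure, given no prior exposure, is at most $k/(n - O(i)) = O(k/n)$ for $i \ll n$. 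A union bound then yields that the probability of \emph{any} exposure in the first $q$ queries is at most $O(qk/n) = O(qc/\sqrt{n})$, which is $o(1)$ whenever $q = o(\sqrt{n}/c)$.

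Assume for contradiction that $\cA$ is $c$-valid yet, on every graph of degeneracy $\leq 2$, halts within $q(n) = o(\sqrt{n})$ queries. Since every graph in the support of $D_1$ has degeneracy $1$, the coupled execution on $G_1$ halts within $q(n)$ queries, and by validity outputs a value in $[1/c, c]$ with probability at least $2/3$. Identical query-answer histories force $\cA$ to make identical choices, so conditional on no exposure in the first $q(n)$ queries the coupled $G_2$ execution produces the identical output. Combining with the exposure bound, $\cA$ outputs a value in $[1/c, c]$ on $G_2 \sim D_2$ with probability at least $2/3 - o(1) > 1/2$. But the true average degree of a graph from $D_2$ exceeds $c^2$, so no value in $[1/c, c]$ is a $c$-approximation to it, contradicting $c$-validity of $\cA$ on $D_2$.

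The main obstacle is the adaptivity of $\cA$ and the fact that its query count may depend on the input — a priori $\cA$ could make far more than $q(n)$ queries on the high-density instance $G_2$. The coupling handles this automatically: identical histories imply identical halting times, so the $q(n)$-query bound on degeneracy-$\leq 2$ inputs transfers to the coupled $G_2$ execution whenever no exposure has occurred. The only real technicality is the per-query exposure bound, which requires a principle-of-deferred-decisions calculation accounting for the small number of vertices ``ruled out'' of $K$ by each non-exposing query; since each query rules out at most $O(1)$ vertices, the per-query bound remains $O(k/n)$ throughout.
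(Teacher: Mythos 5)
Your proof is correct, but it takes a genuinely different route from the paper's. The paper sets up two distributions with structurally different bounded-vs-unbounded-degeneracy instances (a big cycle plus a small cycle, versus a big cycle plus a clique of size $\Theta(c\sqrt{n})$), proves a standalone $\Omega(\sqrt{n})$ lower bound for distinguishing them by a ``hitting'' argument, and then reduces: from a $c$-valid algorithm $\cA$ that is fast on degeneracy-$2$ graphs it builds a distinguisher $\cB$ that runs $20$ independent copies of $\cA$, takes a median, and crucially uses ``some run exceeded $s$ queries'' itself as evidence for the dense family. You instead plant the clique directly on top of the sparse instance (matching versus matching plus clique on a random $K$), couple the two executions with shared coins, and argue that with probability $1-O(qk/n)$ no query ever touches $K$, so the run on the dense graph halts with the same output as the run on the matching; this output lies in $[1/c,c]$ with probability $2/3-o(1)$, contradicting $c$-validity on the dense instance, whose average degree exceeds $c^2$. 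Your coupling handles the two issues the paper handles explicitly -- adaptivity and the fact that $\cA$'s query count on the dense instance is a priori unbounded -- automatically, since identical histories force identical halting times, and it avoids the amplification wrapper entirely; moreover, because $K$ is independent of the matching, the run on $G_1$ is independent of $K$, so your per-query exposure bound follows from a plain union bound and you do not even need the deferred-decisions bookkeeping you describe. What the paper's formulation buys in exchange is a reusable, self-contained indistinguishability statement (mirroring its other lower bounds and those of GRS) into which any purported fast valid algorithm can be plugged as a black box. Two small points to tidy up: note that uniform-random-vertex queries are answered identically under the coupling since both graphs share the vertex set $[n]$, and state the (harmless, and also made by the paper) formalization that ``$o(\sqrt{n})$ queries'' is assumed as a worst-case bound over all degeneracy-$\le 2$ inputs.
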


\begin{proof}
	Similarly to previous lower-bound proofs, for each sufficiently large $n$, we define two distributions over labeled graphs.
	Consider a graph consisting
	of two connected components, a cycle on $\lfloor n-4c\sqrt{n}\rfloor$ vertices and a cycle on $\lceil 4c\sqrt{n} \rceil$ vertices (where $c$ is the constant in the statement of the theorem).
	The distribution $\cG_1$ is generated by labeling the vertices using a uniform random permutation in $[n]$.
	For the second distribution, take the graph that consists of a cycle on $\lfloor n-4c\sqrt{n}\rfloor$ vertices and a clique
	on $\lceil 4c\sqrt{n} \rceil$ vertices. The distribution $\cG_2$ is generated by labeling the vertices according to a uniform random permutation.
	
	Consider any (possibly randomized) algorithm for deciding, given query access either to a graph generated by $\cG_1$ or to a graph generated by $\cG_2$, according to
	which distribution was the graph generated. As long as the algorithm does not perform a query on a vertex that belongs to the small
	cycle (if the graph is generated by $\cG_1$) or the small clique (if the graph is generated by
	$\cG_2$), answers to its queries are identically distributed under the two distribution. This implies that any such decision algorithm
	must perform $\Omega(\sqrt{n})$ queries in order to succeed with probability at least $2/3$.
	
	
	Now consider a $c$-valid algorithm $\cA$ that makes at most $s$ queries on any graph with $n$ vertices and degeneracy at most $2$.
	We use this algorithm in order to construct an algorithm $\cB$ that distinguishes $\cG_1$ from $\cG_2$ in $O(s)$ queries. It works as follows.
	Given query access to $G$, $\cB$ runs $20$ independent runs of $\cA$. If any run makes more than $s$ queries, $\cB$ terminates
	and outputs ``$\cG_2$". At the end of the runs, all of them have provided some estimate for the average degree. If the median is at most $2c$, then
	$\cB$ outputs ``$\cG_1"$. Otherwise, it outputs ``$\cG_2$".
	
	Suppose $G \sim \cG_1$, and recall that the average degree of $G$ is exactly $2$.
	All runs are guaranteed to make at most $s$ queries. Thus, all of them will output an estimate.
	By the validity of $\cA$ and a Chernoff bound, the median estimate will be at most $2c$ with probability at least $2/3$,
	and $\cB$ gives a correct output. Now suppose that $G \sim \cG_2$, and recall that
	The average degree of $G$ is at least $3c^2$.
	If any run takes more than $s$ queries, then $\cB$ outputs correctly.
	Otherwise, by a similar argument to the one made for $G \sim \cG_1$,
	the median estimate will be at least $3c$ with probability at least $2/3$. Thus, $\cB$ is correct.
	
	The query complexity of $\cB$ is $O(s)$ and it distinguishes $\cG_1$ from $\cG_2$ with probability at least $2/3$. Therefore, $s$ must be $\Omega(\sqrt{n})$.
\end{proof}

\fi
	\bibliography{moments_bib}

\begin{thebibliography}{10}

\bibitem{arb-wiki}
Arboricity.
\newblock Wikipedia.
\newblock \url{https://en.wikipedia.org/wiki/Arboricity}.

\bibitem{wiki-degen}
Graph degeneracy.
\newblock Wikipedia.
\newblock \url{https://en.wikipedia.org/wiki/Degeneracy_(graph_theory)}.

\bibitem{ABGPRY16}
M.~Aliakbarpour, A.~S. Biswas, T.~Gouleakis, J.~Peebles, R.~Rubinfeld, and
  A.~Yodpinyanee.
\newblock Sublinear-time algorithms for counting star subgraphs with
  applications to join selectivity estimation.
\newblock Technical Report 1601.04233, Arxiv, 2016.

\bibitem{AlGu08}
N.~Alon and S.~Gutner.
\newblock Linear time algorithms for finding a dominating set of fixed size in
  degenerated graphs.
\newblock In {\em Proceedings of COCOON}, pages 394--405, 2008.

\bibitem{AlonMS99}
N.~Alon, Y.~Matias, and M.~Szegedy.
\newblock The space complexity of approximating the frequency moments.
\newblock {\em J. Comput. Syst. Sci.}, 58(1):137--147, 1999.

\bibitem{BarabasiAlbert99}
A-L. Barabasi and R.~Albert.
\newblock Emergence of scaling in random networks.
\newblock {\em Science}, 286:509--512, October 1999.

\bibitem{BeFoNo14}
J.~Berry, L.~Fostvedt, D.~Nordman, C.~Phillips, C.~Seshadhri, and A.~Wilson.
\newblock Why do simple algorithms for triangle enumeration work in the real
  world?
\newblock In {\em Innovations in Theoretical Computer Science (ITCS)}, pages
  225--234, 2014.
\newblock \href {http://dx.doi.org/10.1145/2554797.2554819}
  {\path{doi:10.1145/2554797.2554819}}.

\bibitem{BiFaKo01}
Zhiqiang Bi, Christos Faloutsos, and Flip Korn.
\newblock The ``{DGX}" distribution for mining massive, skewed data.
\newblock In {\em KDD~'01}, pages 17--26. ACM, 2001.
\newblock \href {http://dx.doi.org/10.1145/502512.502521}
  {\path{doi:10.1145/502512.502521}}.

\bibitem{BiCh+11}
P.~Bickel, A.~Chen, and E.~Levina.
\newblock The method of moments and degree distributions for network models.
\newblock {\em Annals of Statistics}, 39(5):2280--2301, 2011.

\bibitem{BrCy+16}
P.~Brach, M.~Cygan, J.~Laccki, and P.~Sankowski.
\newblock Algorithmic complexity of power law networks.
\newblock In {\em Proceedings of SODA}, pages 1306--1325, 2016.

\bibitem{BrKu+00}
A.~Broder, R.~Kumar, F.~Maghoul, P.~Raghavan, S.~Rajagopalan, R.~Stata,
  A.~Tomkins, and J.~Wiener.
\newblock Graph structure in the web.
\newblock {\em Computer Networks}, 33:309--320, 2000.

\bibitem{DBLP:journals/siamcomp/ChazelleRT05}
B.~Chazelle, R.~Rubinfeld, and L.~Trevisan.
\newblock Approximating the minimum spanning tree weight in sublinear time.
\newblock {\em SIAM Journal on Computing}, 34(6):1370--1379, 2005.

\bibitem{ChNi85}
N.~Chiba and T.~Nishizeki.
\newblock Arboricity and subgraph listing algorithms.
\newblock {\em SIAM J. Comput.}, 14:210--223, 1985.
\newblock URL: \url{http://dx.doi.org/10.1137/0214017}, \href
  {http://dx.doi.org/10.1137/0214017} {\path{doi:10.1137/0214017}}.

\bibitem{ChDa+16}
F.~Chierichetti, A.~Dasgupta, R.~Kumar, S.~Lattanzi, and T.~Sarlos.
\newblock On sampling nodes in a network.
\newblock In {\em Proceedings of WWW}, 2016.

\bibitem{ClShNe09}
A.~Clauset, C.~R. Shalizi, and M.~E.~J. Newman.
\newblock Power-law distributions in empirical data.
\newblock {\em SIAM Review}, 51(4):661--703, 2009.
\newblock \href {http://dx.doi.org/10.1137/070710111}
  {\path{doi:10.1137/070710111}}.

\bibitem{DBLP:journals/siamcomp/CzumajEFMNRS05}
A.~Czumaj, F.~Ergun, L.~Fortnow, A.~Magen, I.~Newman, R.~Rubinfeld, and
  C.~Sohler.
\newblock Approximating the weight of the euclidean minimum spanning tree in
  sublinear time.
\newblock {\em SIAM Journal on Computing}, 35(1):91--109, 2005.

\bibitem{DBLP:journals/siamcomp/CzumajS09}
A.~Czumaj and C.~Sohler.
\newblock Estimating the weight of metric minimum spanning trees in sublinear
  time.
\newblock {\em SIAM Journal on Computing}, 39(3):904--922, 2009.

\bibitem{DaKu14}
A.~Dasgupta, R.~Kumar, and T.~Sarlos.
\newblock On estimating the average degree.
\newblock In {\em Proceedings of WWW}, pages 795--806, 2014.

\bibitem{diestel}
R.~Diestel.
\newblock {\em Graph Theory}.
\newblock Springer, fourth edition edition, 2010.

\bibitem{ELRS}
T.~Eden, A.~Levi, D.~Ron, and C.~Seshadhri.
\newblock Approximately counting triangles in sublinear time.
\newblock In {\em Proceedings of the Fifty-Sixth Annual Symposium on
  Foundations of Computer Science (FOCS)}, pages 614--633, 2015.

\bibitem{EpLo10}
D.~Eppstein, M.~Loffler, and D.~Strash:.
\newblock Listing all maximal cliques in sparse graphs in near-optimal time.
\newblock In {\em Proceedings of ISAAC}, pages 403--413, 2010.

\bibitem{ErGa60}
P.~Erdos and T.~Gallai.
\newblock Graphs with prescribed degree of vertices (hungarian).
\newblock {\em Mat. Lapok}, 11:264--274, 1960.

\bibitem{FFF99}
M.~Faloutsos, P.~Faloutsos, and C.~Faloutsos.
\newblock On power-law relationships of the internet topology.
\newblock In {\em Proceedings of SIGCOMM}, pages 251--262, 1999.

\bibitem{feige2006sums}
U.~Feige.
\newblock On sums of independent random variables with unbounded variance and
  estimating the average degree in a graph.
\newblock {\em SIAM Journal on Computing}, 35(4):964--984, 2006.

\bibitem{GR08}
O.~Goldreich and D.~Ron.
\newblock Approximating average parameters of graphs.
\newblock {\em Random Structures and Algorithms}, 32(4):473--493, 2008.

\bibitem{GRS11}
M.~Gonen, D.~Ron, and Y.~Shavitt.
\newblock Counting stars and other small subgraphs in sublinear-time.
\newblock {\em SIAM Journal on Discrete Math}, 25(3):1365--1411, 2011.

\bibitem{Ha62}
S.L. Hakimi.
\newblock On the realizability of a set of integers as degrees of the vertices
  of a graph.
\newblock {\em SIAM J. Applied Math}, 10:496--506, 1962.

\bibitem{hassidim2009local}
A.~Hassidim, J.~A. Kelner, H.~N. Nguyen, and K.~Onak.
\newblock Local graph partitions for approximation and testing.
\newblock In {\em Proceedings of the Fiftieth Annual Symposium on Foundations
  of Computer Science (FOCS)}, pages 22--31. IEEE, 2009.

\bibitem{Ha55}
V.~Havel.
\newblock A remark on the existence of finite graphs (czech).
\newblock {\em Casopis Pest. Mat.}, 80:477--480, 1955.

\bibitem{DBLP:journals/talg/MarkoR09}
S.~Marko and D.~Ron.
\newblock Approximating the distance to properties in bounded-degree and
  general sparse graphs.
\newblock {\em ACM Transactions on Algorithms}, 5(2), 2009.

\bibitem{MaBe83}
D.~Matula and L.~Beck.
\newblock Smallest-last ordering and clustering and graph coloring algorithms.
\newblock {\em JACM}, 30(3):417--427, 1983.

\bibitem{NeOs-book}
J.~Ne\v{s}et\v{r}il and P.~Ossana de~Mendez.
\newblock {\em Sparsity}.
\newblock Springer, 2010.

\bibitem{nguyen2008constant}
H.~N. Nguyen and K.~Onak.
\newblock Constant-time approximation algorithms via local improvements.
\newblock In {\em Proceedings of the Forty-Ninth Annual Symposium on
  Foundations of Computer Science (FOCS)}, pages 327--336. IEEE, 2008.

\bibitem{onak2012near}
K.~Onak, D.~Ron, M.~Rosen, and R.~Rubinfeld.
\newblock A near-optimal sublinear-time algorithm for approximating the minimum
  vertex cover size.
\newblock In {\em Proceedings of the twenty-third annual ACM-SIAM symposium on
  Discrete Algorithms}, pages 1123--1131. SIAM, 2012.

\bibitem{DBLP:journals/tcs/ParnasR07}
M.~Parnas and D.~Ron.
\newblock Approximating the minimum vertex cover in sublinear time and a
  connection to distributed algorithms.
\newblock {\em Theoretical Computer Science}, 381(1-3):183--196, 2007.

\bibitem{PeFlLa+02}
D.~Pennock, G.~Flake, S.~Lawrence, E.~Glover, and C.~L. Giles.
\newblock Winners don't take all: Characterizing the competition for links on
  the web.
\newblock {\em PNAS}, 99(8):5207--5211, 2002.
\newblock \href {http://dx.doi.org/10.1073/pnas.032085699}
  {\path{doi:10.1073/pnas.032085699}}.

\bibitem{SaCaWiZa10}
A.~Sala, L.~Cao, C.~Wilson, R.~Zablit, H.~Zheng, and B.~Y. Zhao.
\newblock Measurement-calibrated graph models for social network experiments.
\newblock In {\em WWW '10}, pages 861--870. ACM, 2010.
\newblock \href {http://dx.doi.org/10.1145/1772690.1772778}
  {\path{doi:10.1145/1772690.1772778}}.

\bibitem{SiSeMc15}
O.~Simpson, C.~Seshadhri, and A.~McGregor.
\newblock Catching the head, tail, and everything in between: A streaming
  algorithm for the degree distribution.
\newblock In {\em Proceedings on the International Conference on Data Mining
  (ICDM)}, pages 979--984, 2015.

\bibitem{yoshida2009improved}
Y.~Yoshida, M.~Yamamoto, and H.~Ito.
\newblock An improved constant-time approximation algorithm for
  maximum-matchings.
\newblock In {\em Proceedings of the Fourty-First Annual ACM Symposium on the
  Theory of Computing}, pages 225--234. ACM, 2009.

\end{thebibliography}
		
\end{document}